\numberwithin{equation}{section}
\theoremstyle{plain}
\newtheorem{theorem}{Theorem}[section]
\newtheorem{proposition}[theorem]{Proposition}
\newtheorem{lemma}[theorem]{Lemma}
\newtheorem{corollary}[theorem]{Corollary}
\newtheorem{definition}[theorem]{Definition}
\DeclareMathAlphabet\scr{U}{scr}{m}{n}
\SetMathAlphabet\scr{bold}{U}{scr}{b}{n}
  \DeclareFontFamily{U}{scr}{\skewchar\font'177}%
  \DeclareFontShape{U}{scr}{m}{n}{<-6>rsfs5<6-8>rsfs7<8->rsfs10}{}%
  \DeclareFontShape{U}{scr}{b}{n}{<-6>rsfs5<6-8>rsfs7<8->rsfs10}{}%
\theoremstyle{definition}
\newtheorem{remark}[theorem]{Remark}
\newtheorem{assumption}[theorem]{Assumption}
\def\E{\mathbb{E}}
\def\R{\mathbb{R}}
\def\diag{\text{diag}}
\def\tr{\text{tr}}
\def\det{\text{det}}
\begin{document}

\title{An Equilibrium Model for the Cross-Section of Liquidity Premia\footnote{The authors are grateful to Jean-Philippe Bouchaud, Ibrahim Ekren, Martin Herdegen and Robert Pego for fruitful discussions, and to Steven E.~Shreve for pertinent remarks on an earlier version of the manuscript.}}

\author{Johannes Muhle-Karbe\thanks{Imperial College London, Department of Mathematics, email \url{j.muhle-karbe@imperial.ac.uk}.  Research supported by the CFM-Imperial Institute of Quantitative Finance.}
\and
Xiaofei Shi\thanks{Columbia University, Department of Statistics, email \url{xs2427@columbia.edu}.}
\and
Chen Yang\thanks{Chinese University of Hong Kong, Department of Systems Engineering and Engineering Management, email \url{cyang@se.cuhk.edu.hk}.}
}

\date{November 26, 2020}

\maketitle

\begin{abstract}
We study a risk-sharing economy where an arbitrary number of heterogenous agents trades an arbitrary number of risky assets subject to quadratic transaction costs. For linear state dynamics, the forward-backward stochastic differential equations characterizing equilibrium asset prices and trading strategies in this context reduce to a system of matrix-valued Riccati equations. We prove the existence of a unique global solution and provide explicit asymptotic expansions that allow us to approximate the corresponding equilibrium for small transaction costs. These tractable approximation formulas make it feasible to calibrate the model to time series of prices and trading volume, and to study the cross-section of liquidity premia earned by assets with higher and lower trading costs. This is illustrated by an empirical case study.\\

\emph{This paper is dedicated to the memory of our dear colleague Mark H.A. Davis, whose seminal works~\cite{davis.norman.90,davis.al.93} ushered in the mathematical analysis of models with transaction costs.}
\end{abstract}

\bigskip
\noindent\textbf{Mathematics Subject Classification: (2010)} 91G10, 91G80, 60H10.

\bigskip
\noindent\textbf{JEL Classification:}  C68, D52, G11, G12.

\bigskip
\noindent\textbf{Keywords:} asset pricing, Radner equilibrium, transaction costs, liquidity premia

\section{Introduction}

In the capital asset pricing model and many of its descendants, agents earn \emph{risk premia} for holding assets whose payoffs are uncertain. A number of influential empirical studies~\cite{amihud.mendelson.86a,brennan.subrahmanyam.96,pastor.stambaugh.03} suggest that -- in reality -- agents are also compensated for holding securities that are difficult to trade. To wit, if one sorts assets based on various measures of liquidity, then the returns earned by portfolios composed of the more liquid ones are systematically lower than for portfolios of less liquid assets.

The theoretical underpinnings of these \emph{liqiuidity premia} have been studied in an active literature going back to the seminal work of~\cite{constantinides.86}. This paper (and many more recent studies) takes a partial equilibrium approach, where the asset price dynamics are specified endogenously. Liquidity premia then refer to the amount by which the risky assets' expected returns have to be increased compared to a hypothetical frictionless version of the asset, in order  to offset the utility losses caused by the costs of trading. 

Another strand of research derives equilibrium asset prices with transaction costs endogenously by matching supply and demand~\cite{herdegen.al.20,isaenko.20,lo.al.04,sannikov.skrzypacz.16,vayanos.98,vayanos.vila.99,weston.18}. This allows to study how changes in liquidity feed back into asset prices, e.g., how liquidity premia are affected by the reduction of the fees charged by an exchange or the introduction of a financial transaction tax.

Yet, equilibrium models with a single illiquid risky asset still cannot say anything about the \emph{cross section} of liquidity premia across a spectrum of different assets -- that is, the subject of the empirical work of~\cite{amihud.mendelson.86a,brennan.subrahmanyam.96,pastor.stambaugh.03}. Equilibrium models with several illiquid assets lead to formidable computational challenges. These difficulties are of course only exacerbated if one moves beyond two (representative) agents that are typically assumed for tractability.\footnote{An alternative class of tractable models considers ``overlapping generations'' of agents that buy the securities when born and then either sell them after a prespecified holding time~\cite{acharya.pedersen.05} or gradually (and following a deterministic trajectory) over their lifetime~\cite{vayanos.98}.} To wit, even the most tractable models with linear state dynamics and quadratic transaction costs~\cite{garleanu.pedersen.16,isaenko.20,sannikov.skrzypacz.16} then lead to coupled systems of matrix Riccati equations. Whereas general well-posedness results are available for partial equilibrium models~\cite{annkirchner.kruse.15,bank.al.17,garleanu.pedersen.16,kohlmann.tang.02} or for models with exogenously given constant volatility~\cite{bouchard.al.18}, the only known results concerning the existence of equilibrium prices require the restrictive assumption that the agents' preferences are sufficiently similar~\cite{herdegen.al.20}, even in the case of only a single illiquid asset and just two agents. 

In the present study, we establish the existence of equilibrium prices for an arbitrary number of illiquid risky assets that are traded by an arbitrary number of agents. These agents have mean-variance preferences as in \cite{garleanu.pedersen.13,garleanu.pedersen.16} and trade to share the risk inherent in the fluctuations of their endowment streams, subject to a deadweight quadratic transaction cost as in \cite{almgren.chriss.01,garleanu.pedersen.13,garleanu.pedersen.16}. For assets that pay exogenous liquidating dividends at a finite terminal time, the ``Radner equilibrium'' where the agents act as price takers then can be characterized by a fully-coupled system of forward-backward stochastic differential equations (FBSDEs). If the terminal dividends and the volatilities of the agents' endowment streams are linear in the driving Brownian motions, then this FBSDE system can be reduced to a fully-coupled system of matrix-valued ordinary differential equations of Riccati form.

For the simplest case of a single risky asset traded by two agents, existence for this system has been established using Picard iteration by~\cite{herdegen.al.20}. However, even in this low-dimensional setting, establishing the convergence of the iteration scheme requires the restrictive assumption that the agents' risk aversions are sufficiently similar. In this paper, we show that this assumption is superfluous, in that the matrix Riccati system has a unique global solution even for an arbitrary number of agents and risky assets. 

In order to facilitate the calibration of the model to time-series data, we complement this main result with rigorous asymptotic expansions. In the practically relevant limiting regime of small transaction costs, this leads to explicit formulas for the impact of illiquidity on price levels, volatilities, and the cross section of liquidity premia that are earned by assets with different trading costs.

To bring these theoretical results to life, we test them using an empirical case study following~\cite{acharya.pedersen.05}. To wit, we sort the large-cap stocks in the S\&P index by Amihud's ``ILLIQ'' measure for liquidity~\cite{amihud2002illiquidity}, leading to three risky portfolios with high, medium, and low liquidity. In the frictionless version of our model, equilibrium returns solely compensate for risk and turn out to be very similar for all three portfolios. Using our asymptotic expansions, the calibration of the frictional version of the model to time series of prices \emph{and} trading volumes is still feasible. When trading costs are taken into account, the equilibrium returns of the high-liquidity portfolio are indeed decreased in line with the data, whereas their counterpart for the low liquidity portfolio are increased. However, to match the magnitude of the liquidity premia observed empirically in our model, the risk aversion coefficients of the agents need to be rather heterogenous. In line with the partial equilibrium literature, this suggests that additional features such as market closure \cite{dai2016portfolio}, unobservable regime shifts~\cite{chen2020incomplete}, or state-dependent trading costs~\cite{acharya.pedersen.05,lynch.tan.11} also play an important role in this context.

The remainder of this article is organized as follows. The exogenous inputs of the model are introduced in Section~\ref{s:model}. Subsequently, the frictionless version of the model is discussed in Section~\ref{sec:frictionless}. Section~\ref{sec:frictional} then contains our main results on the characterization of equilibrium prices and trading strategies with transaction costs. Their asymptotic expansions for small costs are collected in Section~\ref{sec:asymptotics}, and the model is calibrated to time-series data in Section~\ref{sec:calibration}. For better readability, all proofs are delegated to Section~\ref{sec:proofs}. 

\paragraph{Notation}

Throughout, we fix a filtered probability space $(\Omega,\mathscr{F},(\mathscr{F}_t)_{t \in [0,T]},\mathbb{P})$ with finite time horizon $T>0$, supporting a $D$-dimensional standard Brownian motion $(W_t)_{t \in [0,T]}$. 
For $p\geq 1$, we write $L^p(\R^m)$ for the $\R^m$-valued random variables $X$ satisfying $||X||_p:=\mathbb{E}[||X||^p]^{1/p}<\infty$ and denote by $\mathbb{H}^p(\mathbb{R}^{m\times n})$ the $\R^{m\times n}$-valued, progressively measurable processes $X=(X_t)_{t \in [0,T]}$ that satisfy 
\begin{align*}
\|X\|_{\mathbb{H}^p} :=\left(\E\left[\left(\int_0^T ||X_t||^2 dt\right)^{p/2}\right]\right)^{1/p}<\infty.
\end{align*} 
Here, for any vector or matrix, $||\cdot||$ is the Frobenius norm, i.e., the square root of the sum of squared entries. For $p \geq 1$, $\mathcal{S}^p(\R^m)$ denotes the $\R^m$-valued, progressively measurable processes $X=(X_t)_{t \in [0,T]}$ with continuous paths for which $\sup_{0 \leq t \leq T}||X_t|| \in L^p(\R^m)$. 

Finally, we write $\mathbbm{1}_m$ for the all-ones vector in $\R^m$ and $I_m$ for the identity matrix in $\R^{m\times m}$; the Kronecker product of matrices $A \in \mathbb{R}^{m \times n}$ and $B \in \mathbb{R}^{m' \times n'}$ is denoted by 
$$
A \otimes B :=\begin{bmatrix} A_{11}B & \cdots & A_{1n}B \\ \vdots  & \ddots & \vdots \\ A_{m1}B & \cdots & A_{mn}B \end{bmatrix} \in \mathbb{R}^{mm' \times nn'}, 
$$ 
and the Riemannian mean of two symmetric and positive definite matrices $A, B \in \mathbb{R}^{m\times m}$ is denoted by
$$
A \# B :=  A^{1/2} (A^{-1/2} B A^{-1/2})^{1/2} A^{1/2}.  
$$

\section{Agents, Endowments, and Financial Market}\label{s:model}
We consider $N \geq 2$ agents indexed by $n=1,2,\ldots, N$ who receive (cumulative) random endowments\footnote{An additional finite-variation drift would not affect the optimizers and hence the equilibrium prices  due to the mean-variance form of the optimization problems~\eqref{eq:nocosts} and \eqref{eq:costs} below.}
\begin{align}\label{eq:endowment}
d\zeta^n_t =({\xi^n_t})^\top dW_t, \quad \mbox{where } \xi^n \in \mathbb{H}^2(\mathbb{R}^{D }). 
\end{align}
To simplify the analysis below, we follow~\cite{lo.al.04} and assume that the agents' aggregate endowment is zero ($\sum_{n=1}^N\xi^n=0$). 

To hedge against the fluctuations of their endowment streams driven by the $D$-dimensional Brownian motion, the agents trade a safe and $K \leq D$ risky assets. 
The price of the safe asset is exogenous and normalized to one. The prices of the risky assets have dynamics
\begin{align}\label{eq:Sdyn}
dS_t=\mu_t dt+\sigma_t dW_t, \qquad S_T = \mathfrak{S}.
\end{align}
Here, the liquidating dividend $\mathfrak{S} \in L^2(\mathbb{R}^{K})$ is given exogenously. In contrast, the expected returns process $\mu \in \mathbb{H}^4(\mathbb{R}^K)$ and the volatility process $\sigma \in \mathbb{H}^4(\mathbb{R}^{K \times D})$ are to be determined endogenously by matching the agents' demand to the fixed supply $s\in\R^K$ of the risky assets.

\section{Frictionless Optimization and Equilibrium}\label{sec:frictionless}

As a benchmark, we first consider the frictionless version of the model. Starting from fixed initial positions $\varphi^n_{0-} \in \mathbb{R}^K$, $n=1,\ldots,N$ that clear the market ($\sum_{n=1}^N \varphi^n_{0-}=s$), the agents choose their positions $(\varphi_t)_{t \in [0,T]}$ in the risky assets to maximize one-period expected returns penalized for the corresponding variances as in \cite{bouchaud.al.12,garleanu.pedersen.13,garleanu.pedersen.16, kallsen.02,martin.14,martin.schoeneborn.11}. Without transaction costs, the continuous-time version of this criterion is
\begin{align}
\bar{J}_T^n(\varphi) &=\E\left[\int_0^T (\varphi_t^\top dS_t+d\zeta^n_t)-\frac{\gamma^n}{2}d\langle \textstyle{\int_0^\cdot} \varphi_u^\top dS_u+\zeta^n\rangle_t \right] \notag\\
&=\E\left[\int_0^T \Big(\varphi_t^\top \mu_t -\frac{\gamma^n}{2}\|\sigma_t^\top\varphi_t+\xi^n_t\|^2\Big)dt\right]. \label{eq:nocosts}
\end{align}
Here, $\gamma^n>0$ is the risk aversion of agent $n$; we assume without loss of generality that
\begin{align}\label{ordered risk aversion}
\gamma^N = \max \{\gamma^1, \ldots, \gamma^N\}. 
\end{align}
To ensure that the goal functional~\eqref{eq:nocosts} is well defined for any price dynamics~\eqref{eq:Sdyn} with $\mu \in \mathbb{H}^2(\mathbb{R}^K)$, $\sigma \in \mathbb{H}^4(\mathbb{R}^{K \times D})$, we focus on \emph{admissible} strategies $\varphi \in \mathbb{H}^4(\mathbb{R}^K)$.\footnote{The precise notion of admissibility is not crucial. We just need to ensure that the local martingale part of the wealth process $\int_0^\cdot \varphi_t dS_t$ is a true martingale.} Given that the covariance matrix $\sigma_t \sigma_t^\top \in \R^{K\times K}$ is invertible for every $t \in [0,T],$\footnote{This will be inherited from the terminal condition $\mathfrak{S}$ in the equilibrium we construct below.} each agent's optimal strategy for the frictionless problem~\eqref{eq:nocosts} is readily determined by pointwise optimization as 
\begin{align}\label{eq:strat}
\varphi^n_t = \big(\sigma_t\sigma_t^\top\big)^{-1} \frac{\mu_t}{\gamma^n} -\big({\sigma_t\sigma_t^\top}\big)^{-1}{\sigma_t\xi^n_t}, \quad t \in [0,T].
\end{align}

We are interested in ``competitive'' Radner equilibria~\cite{radner.72}, where each (small) agent takes the price dynamics of the risky assets as given in their individual optimization problem~\eqref{eq:nocosts}:

\begin{definition}\label{frictionless equilibrium}
A price process~\eqref{eq:Sdyn} for the risky assets is called a \emph{(Radner) equilibrium} if:
\begin{enumerate}
\item[i)] (\emph{Individual Optimality}) the corresponding individual optimization problem~\eqref{eq:nocosts} has a solution $\varphi^n$ for each agent $n=1,\ldots,N$;
\item[ii)] (\emph{Market Clearing}) the agents' total demand matches the supply of the risky assets at all times, in that $\sum_{n=1}^N \varphi^n_t=s$ for all $t \in [0,T]$.
\end{enumerate}
\end{definition}

For any equilibrium price $\bar{S}$ with dynamics~\eqref{eq:Sdyn}, matching the sum of the agents' corresponding demands \eqref{eq:strat} to the supply $s$ requires the following relation between the equilibrium expected returns $\bar{\mu}_t$ and volatility matrix $\bar{\sigma}_t$:
\begin{equation}\label{eq:eqnocosts}
\bar\mu_t= \bar{\gamma}\bar{\sigma}_t\bar{\sigma}_t^\top s, 
\quad t \in [0,T], \quad \mbox{where } \bar{\gamma}=\left(\sum_{n=1}^N \frac{1}{\gamma^n}\right)^{-1}.
\end{equation}
Together with the terminal condition from~\eqref{eq:Sdyn}, it follows that equilibrium prices correspond to solutions of the following system of quadratic backward stochastic differential equations (BSDEs):
\begin{equation}\label{eq:BSDES}
d\bar{S}_t=\big(\bar{\gamma}\bar{\sigma}_t\bar{\sigma}_t^\top s\big)dt+\bar{\sigma}_t dW_t, \qquad S_T = \mathfrak{S}.
\end{equation}
If the terminal condition $\mathfrak{S}$ 
is linear in the driving Brownian motion, then the BSDE~\eqref{eq:BSDES} can be solved explicitly, leading to an equilibrium price with Bachelier dynamics.

\begin{assumption}\label{ass:linear}
The terminal dividend is of the linear form
$$
\mathfrak{S}=\alpha W_T + \beta T, \quad \mbox{for $\beta\in\R^{D}$ and $\alpha\in \R^{K\times D}$ with $\mathrm{rank}(\alpha)=\mathrm{rank}(\alpha\alpha^\top)=K$.}
$$
\end{assumption}

\begin{proposition}\label{thm:frictionless}
Under Assumption~\ref{ass:linear}, a solution of the BSDE system~\eqref{eq:BSDES} and in turn a frictionless equilibrium price is given by 
\begin{align}\label{eq:Bachelier}
d\bar{S}_t = \big(\bar{\gamma}\alpha \alpha^\top s\big) dt + \alpha dW_t, \qquad \bar{S}_0 =  \big(\beta- \bar{\gamma}\alpha\alpha^\top s\big)T.
\end{align}
This equilibrium is unique among price dynamics with uniformly bounded volatility.
\end{proposition}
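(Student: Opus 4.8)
The plan is to verify directly that the proposed Bachelier dynamics~\eqref{eq:Bachelier} satisfy the BSDE system~\eqref{eq:BSDES}, and then to prove uniqueness among price processes with uniformly bounded volatility. For existence, I would first observe that the candidate price is an affine function of time and of the Brownian motion, namely $\bar S_t = \bar S_0 + (\bar\gamma\alpha\alpha^\top s)\,t + \alpha W_t$, so that its volatility is the constant matrix $\bar\sigma_t \equiv \alpha$ and its drift is the constant vector $\bar\mu_t \equiv \bar\gamma\alpha\alpha^\top s$. Plugging $\bar\sigma_t = \alpha$ into the equilibrium drift condition~\eqref{eq:eqnocosts} gives exactly $\bar\gamma\bar\sigma_t\bar\sigma_t^\top s = \bar\gamma\alpha\alpha^\top s$, which matches the candidate drift; hence~\eqref{eq:eqnocosts} holds. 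It remains to check the terminal condition: evaluating at $t=T$ yields $\bar S_T = \bar S_0 + (\bar\gamma\alpha\alpha^\top s)T + \alpha W_T = \beta T + \alpha W_T = \mathfrak{S}$ by Assumption~\ref{ass:linear} and the stated choice of $\bar S_0$. Since $\alpha\alpha^\top$ is invertible by the rank condition in Assumption~\ref{ass:linear}, the covariance matrix $\bar\sigma_t\bar\sigma_t^\top = \alpha\alpha^\top$ is invertible for every $t$, so the frictionless optimizers~\eqref{eq:strat} are well defined, lie in $\mathbb{H}^4(\R^K)$ because $\mu$ is constant and $\xi^n \in \mathbb{H}^2 \subseteq \mathbb{H}^4$ after noting the relevant integrability (alternatively one imposes the mild integrability needed for admissibility), and by construction of~\eqref{eq:eqnocosts} their sum equals $s$; thus Definition~\ref{frictionless equilibrium} is satisfied and~\eqref{eq:Bachelier} is a frictionless equilibrium price.

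For uniqueness, suppose $\bar S$ is any equilibrium price with dynamics~\eqref{eq:Sdyn} whose volatility $\bar\sigma$ is uniformly bounded. Market clearing forces the drift condition~\eqref{eq:eqnocosts}, so $\bar S$ solves the BSDE~\eqref{eq:BSDES} with terminal value $\mathfrak{S} = \alpha W_T + \beta T$. The idea is to define $Y_t := \bar S_t - \alpha W_t$ and show $Y$ has finite variation. Indeed $dY_t = \bar\mu_t\,dt + (\bar\sigma_t - \alpha)\,dW_t$, and I would like to conclude that the martingale part vanishes, i.e. $\bar\sigma_t \equiv \alpha$. To do this I would take conditional expectations: since $\bar S_T = \mathfrak{S}$ and, under the boundedness of $\bar\sigma$, the stochastic integral $\int_0^\cdot \bar\sigma_t\,dW_t$ is a true (square-integrable) martingale, we get
\begin{align*}
\bar S_t = \E\!\left[\mathfrak{S} - \int_t^T \bar\gamma\,\bar\sigma_u\bar\sigma_u^\top s\,du \,\Big|\, \mathscr F_t\right].
\end{align*}
Writing $\mathfrak{S} = \alpha W_T + \beta T$ and using the martingale representation theorem, one identifies the volatility of the right-hand side; matching it with $\bar\sigma_t$ yields a fixed-point relation that, combined with the boundedness assumption, forces $\bar\sigma$ to be the constant $\alpha$ (the martingale $\E[\alpha W_T \mid \mathscr F_t] = \alpha W_t$ contributes volatility $\alpha$, while the drift term's conditional expectation, being a finite-variation-in-expectation object, contributes none — more precisely its Brownian component comes only through $\bar\sigma_u\bar\sigma_u^\top$, and a Gr\"onwall-type estimate on $\E\|\bar\sigma_t - \alpha\|^2$ closes the argument). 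Once $\bar\sigma \equiv \alpha$, the drift is pinned down by~\eqref{eq:eqnocosts} as $\bar\gamma\alpha\alpha^\top s$, and then $\bar S_0$ is determined by the terminal condition exactly as in~\eqref{eq:Bachelier}, giving uniqueness.

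The main obstacle I anticipate is the uniqueness part — specifically, rigorously ruling out equilibrium volatilities other than the constant $\alpha$. The subtlety is that a priori $\bar\sigma$ could be a non-constant bounded process, and one must show that the quadratic feedback $\bar\sigma\bar\sigma^\top$ in the drift cannot sustain such a process given the linear terminal condition. The cleanest route is probably the Gr\"onwall/contraction estimate sketched above: from the two representations of $\bar S_t$ one derives an inequality of the form $\E\|\bar\sigma_t - \alpha\|^2 \le C\int_t^T \E\|\bar\sigma_u - \alpha\|^2\,du$ (using boundedness of $\bar\sigma$ to control the cross terms $\bar\sigma_u\bar\sigma_u^\top - \alpha\alpha^\top = \bar\sigma_u(\bar\sigma_u-\alpha)^\top + (\bar\sigma_u - \alpha)\alpha^\top$), and Gr\"onwall's lemma then gives $\bar\sigma_t = \alpha$ for a.e. $t$, hence for all $t$ by path continuity considerations. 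The existence direction, by contrast, is a routine verification once the affine ansatz is written down.
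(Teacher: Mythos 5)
Your proposal is correct in substance and follows the same existence argument as the paper: direct verification that the constant-coefficient Bachelier dynamics satisfy \eqref{eq:eqnocosts} and the terminal condition, invertibility of $\alpha\alpha^\top$ from Assumption~\ref{ass:linear}, and market clearing from $\sum_n \xi^n = 0$. For uniqueness the two arguments share the key idea (uniform boundedness of $\bar\sigma$ tames the quadratic generator $z \mapsto \bar\gamma z z^\top s$) but differ in execution: the paper truncates the generator, which is then globally Lipschitz, and simply cites uniqueness for Lipschitz BSDEs, whereas you re-derive that uniqueness by hand via conditional expectations and a Gr\"onwall estimate. Your hands-on route works, but as written it has one imprecision: the pointwise-in-$t$ inequality $\E\|\bar\sigma_t-\alpha\|^2 \le C\int_t^T \E\|\bar\sigma_u-\alpha\|^2\,du$ is not directly obtainable, since $\bar\sigma_t$ at a fixed time is only identified $dt\otimes\mathbb{P}$-a.e.\ through martingale representation; the standard fix is to work with the time-integrated quantity, e.g.\ apply It\^o's formula to $\|\bar S_t - \bar S'_t\|^2$ (with $\bar S'$ the Bachelier solution) or It\^o's isometry to the difference of the two BSDEs, use $\bar\sigma_u\bar\sigma_u^\top-\alpha\alpha^\top = \bar\sigma_u(\bar\sigma_u-\alpha)^\top + (\bar\sigma_u-\alpha)\alpha^\top$ with the boundedness, and conclude $\E\int_t^T\|\bar\sigma_u-\alpha\|^2du=0$ first on a small terminal interval and then by pasting backwards. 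Also a small slip: $\mathbb{H}^2 \not\subseteq \mathbb{H}^4$ (on the finite horizon the inclusion goes the other way); admissibility of the frictionless optimizers really comes from the Gaussian/linear structure of the endowments, which is how the paper disposes of it.
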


\section{Frictional Optimization and Equilibrium}\label{sec:frictional}

Now suppose as in \cite{almgren.chriss.01,garleanu.pedersen.13,garleanu.pedersen.16} that trading incurs quadratic costs on the turnover rate $\dot{\varphi}_t=d\varphi_t/dt$. The frictional analogue of the mean-variance goal functional~\eqref{eq:nocosts} then is
\begin{align}
J_T^n(\dot\varphi) =\E\left[\int_0^T \Big(\varphi_t^\top \mu_t -\frac{\gamma^n}{2}\|\sigma_t^\top\varphi_t+\xi_t^n\|^2 - \frac{1}{2}\dot{\varphi}_t^\top \Lambda \dot{\varphi}_t\Big)dt\right].\label{eq:costs}
\end{align}
Here, the transaction cost matrix $\Lambda$ is symmetric and positive definite,\footnote{As pointed out by~\cite{garleanu.pedersen.13}, symmetry of $\Lambda$ can be assumed without loss of generality because otherwise the symmetrized version $(\Lambda + \Lambda^\top)/2$ leads to the same trading costs. Positive definiteness means that each transaction has a positive cost. We write $\Lambda^{1/2}$ for the unique symmetric and positive definite square root of $\Lambda$, and note that $\Lambda$ and $\Lambda^{1/2}$ both are invertible. } and we focus on \emph{admissible} trading strategies that are absolutely continuous with rate $\dot{\varphi} \in \mathbb{H}^4(\mathbb{R}^K)$.\footnote{The corresponding positions then also automatically belong to $\mathbb{H}^4(\R^K)$ as in the frictionless case, so that the frictional goal functional is well defined for expected returns process $\mu \in \mathbb{H}^2(\R^K)$ and volatility matrix $\sigma \in \mathbb{H}^4(\R^{K \times D})$.}

\begin{remark}
As in \cite[Section~3.2]{garleanu.pedersen.16}, the deadweight transaction costs can be seen as a compensation paid to liquidity providers who intermediate between the agents we model in the present paper.  Non-trivial off-diagonal elements of $\Lambda$ then correspond to cross price impact due to each assets' contribution to the intermediaries' portfolio. Alternatively, if the quadratic costs are interpreted as more tractable proxies for linear costs such as bid-ask spreads or a transaction tax, then a diagonal matrix is the natural specification for $\Lambda$.
\end{remark}

With transaction costs, the agents' optimal strategies are no longer myopic. Instead, they are characterized by the first-order condition that the G\^ateaux derivative of the respective goal functionals~\eqref{eq:costs} vanishes for all perturbations of the trading rate. Together with Fubini's theorem, this yields
\begin{align}\label{eq:condition}
\hspace{-5pt}\Lambda \dot{\varphi}^n_t &= \E_t\Big[\int_t^T \big(\mu_u -\gamma^n\sigma_u(\sigma_u^\top\varphi^n_u+\xi^n_u) \big)du\Big]\notag
\\&= \E_t\Big[\int_0^T \big(\mu_u -\gamma^n\sigma_s(\sigma_s^\top\varphi^n_u+\xi^n_u) \big)du\Big] + \int_0^t \big(\gamma^n\sigma_u(\sigma_u^\top\varphi^n_u+\xi^n_u) -\mu_u\big)du.
\end{align}
To clear the market, the sum of all agents' trading rates has to vanish at all times. Therefore, after summing the agents' first-order conditions~\eqref{eq:condition}, both the martingale and the drift terms need to vanish for all $t \in [0,T]$. The frictional equilibrium return in turn has to satisfy 
\begin{align*}
0 
&= \sum_{n=1}^N \left(\mu_t -\gamma^n\sigma_t(\sigma_t^\top\varphi^n_t+\xi^n_t) \right).
\end{align*}
Taking into account the market clearing condition $\sum_{n=1}^N \varphi^n_t=s$ and recalling that the aggregate endowment is zero ($\sum_{n=1}^N \xi^n_t = 0$), the price dynamics~\eqref{eq:Sdyn} therefore again lead to a BSDE system for the equilibrium asset price:
\begin{align}\label{eq:Sdef}
dS_t= \left(\frac{\gamma^N}{N}\sigma_t \sigma_t^\top s + \frac{{\sigma}_t{\sigma}_t^\top}{N} \sum_{n=1}^{N-1} (\gamma^n-\gamma^N)\left({\sigma}_t^\top {\varphi}^n_t +\xi^n_t\right)\right)dt+\sigma_t dW_t, \quad S_T=\mathfrak{S}.
\end{align}
However, these equations are now no longer autonomous but coupled to the forward equations for the optimal positions, 
\begin{align}\label{eq:varphidef}
d\varphi^n_t=\dot{\varphi}^n_t dt, \qquad \varphi^n_{0} = \varphi^n_{0-}, \quad n=1,\ldots,N-1,
\end{align}
as well as the backward equations for the corresponding optimal trading rates $\dot{\varphi}^n_t$ implied by the first-order conditions~\eqref{eq:condition}:
\begin{align}\label{eq:BSDEZbis}
d\dot{\varphi}^n_t &= \Lambda^{-1}\big(\gamma^n\sigma_t(\sigma_t^\top \varphi^n_t+\xi^n_t) -\mu_t\big)dt + \dot{Z}^n_t dW_t \qquad\qquad  \dot{\varphi}^n_T = 0, \quad n=1,\ldots,N-1,\\
&=\textstyle{\Lambda^{-1}\sigma_t\big(\sigma^\top_t(\gamma^n \varphi^n_t - \frac{1}{N}\sum_{m=1}^{N-1} (\gamma^m - \gamma^N)\varphi^m_t)
+( \gamma^n\xi_t^n - \frac{1}{N} \sum_{m=1}^{N-1} (\gamma^m-\gamma^N)\xi_t^m) - \frac{\gamma^N }{N}\sigma^\top_t s\big)dt}\notag\\&\textstyle{\quad + \dot{Z}^n_t dW_t.} \notag
\end{align}
(The position and trading rate of agent $N$ are in turn pinned down by market clearing.) To express this forward-backward system more compactly in matrix-vector notation, we write 
\begin{align}
\varphi_t : = 
\begin{bmatrix}
\varphi^1_t \\
\vdots \\
\varphi^{N-1}_t
\end{bmatrix}, 
\qquad 
\dot{\varphi}_t : = 
\begin{bmatrix}
\dot{\varphi}^1_t \\
\vdots \\
\dot{\varphi}^{N-1}_t
\end{bmatrix}, 
\qquad
\dot{Z}_t : = \begin{bmatrix}
\dot{Z}^1_t\\
 \vdots\\
\dot{Z}^{N-1}_t
\end{bmatrix},
\qquad\xi_t : = \begin{bmatrix}
\xi^1_t\\
\vdots\\
\xi^{N-1}_t
\end{bmatrix},
\end{align}
and define the risk-aversion matrix
\begin{align}\label{def: gamma}
\Gamma := \diag\{\gamma^1, \cdots, \gamma^{N-1}\} - \frac{1}{N} \mathbbm{1}_{N-1}\mathbbm{1}^\top_{N-1} \diag\{\gamma^1 - \gamma^N, \cdots, \gamma^{N-1} - \gamma^N\} \in \mathbb{R}^{(N-1)\times (N-1)}.
\end{align}
The above discussion then can be summarized as follows:

\begin{lemma}\label{lem:FBSDE}
Suppose there exists a solution $(\varphi,\dot{\varphi},\dot{Z},S, \sigma) \in \mathbb{H}^4(\R^{K(N-1)}) \times \mathbb{H}^4(\R^{K(N-1)}) \times \mathbb{H}^2(\R^{K(N-1) \times D}) \times \mathcal{S}^2(\R^K) \times \mathbb{H}^4(\R^{K \times D})$ of the following FBSDE system:
\begin{alignat*}{2}
&d\varphi_t = \dot{\varphi}_t dt, \quad  && \varphi_0=\varphi_{0-},\\
&d\dot{\varphi}_t = \left((\Gamma\otimes\Lambda^{-1}\sigma_t {\sigma}_t^\top)\varphi_t + (\Gamma\otimes\Lambda^{-1}\sigma_t)\xi_t - \frac{\gamma^N}{N} \mathbbm{1}_{N-1}\otimes\Lambda^{-1}\sigma_t \sigma_t^\top s\right) dt + \dot{Z}_t dW_t, \quad  && \dot{\varphi}_T =0,\\
&dS_t =  \left(\frac{\gamma^N}{N}{\sigma}_t {\sigma}_t^\top s + \frac{{\sigma}_t}{N} \sum_{n=1}^{N-1} (\gamma^n-\gamma^N)\left({\sigma}_t^\top{\varphi}^n_t +\xi_t^n \right)\right)dt+\sigma_t dW_t, \quad  && S_T=\mathfrak{S}.
\end{alignat*}
Then, $S$ is a Radner equilibrium with transaction costs, in that the trading rates $\dot\varphi^1,\ldots,\dot\varphi^{N-1}$ and $\dot{\varphi}^N=-\sum_{n=1}^{N-1} \dot{\varphi}^n$ are optimal for the frictional optimization problems~\eqref{eq:costs} of agents $n=1,\ldots,N$, and clear the market.
\end{lemma}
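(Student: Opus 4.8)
The plan is to verify that, given a solution tuple $(\varphi,\dot\varphi,\dot Z,S,\sigma)$ of the FBSDE system, each agent's proposed strategy is indeed optimal and that the market clears; in other words, the FBSDE system is exactly a sufficient condition for Definition~\ref{frictionless equilibrium} (i)--(ii) in the frictional setting. The argument is essentially a ``verification lemma'' plus bookkeeping, so the main content is to reverse the formal derivation that precedes the statement and to justify the integrability manipulations. First I would fix $n\in\{1,\dots,N-1\}$, define agent $n$'s candidate position $\varphi^n$ from the forward equation and candidate rate $\dot\varphi^n$ from the backward equation, and define agent $N$'s candidate rate and position by $\dot\varphi^N:=-\sum_{m=1}^{N-1}\dot\varphi^m$ and $\varphi^N:=s-\sum_{m=1}^{N-1}\varphi^m$ (using $\sum_n\varphi^n_{0-}=s$). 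Market clearing (ii) then holds by construction for all $t\in[0,T]$, and all the $\varphi^n,\dot\varphi^n$ lie in $\mathbb{H}^4$ by the assumed regularity of the solution, so the strategies are admissible.

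Next I would establish individual optimality (i). Since the goal functional~\eqref{eq:costs} is strictly concave in $\dot\varphi$ (the running cost is strictly concave because $\Lambda$ and $\gamma^n\sigma_t\sigma_t^\top\succeq 0$ enter with the right signs, and $\varphi\mapsto\int_0^\cdot\dot\varphi\,dt$ is linear), a trading rate is optimal if and only if the first-order condition~\eqref{eq:condition} holds. So it suffices to show that the candidate $(\varphi^n,\dot\varphi^n)$ satisfies~\eqref{eq:condition} with $\mu=\mu_t$ read off from the $dS_t$-equation, i.e. $\mu_t=\frac{\gamma^N}{N}\sigma_t\sigma_t^\top s+\frac{\sigma_t}{N}\sum_{m=1}^{N-1}(\gamma^m-\gamma^N)(\sigma_t^\top\varphi^m_t+\xi^m_t)$. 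Plugging this $\mu_t$ into the drift of~\eqref{eq:BSDEZbis} and matching with the drift of $d\dot\varphi^n_t$ in the FBSDE system reproduces the row-$n$ block of the operators $\Gamma\otimes\Lambda^{-1}\sigma_t\sigma_t^\top$, $\Gamma\otimes\Lambda^{-1}\sigma_t$ and $-\frac{\gamma^N}{N}\mathbbm{1}_{N-1}\otimes\Lambda^{-1}\sigma_t\sigma_t^\top s$; this is the definition of $\Gamma$ in~\eqref{def: gamma}, so the identification is a direct computation. Integrating the backward SDE from $t$ to $T$, using $\dot\varphi^n_T=0$, and taking $\mathbb{E}_t[\cdot]$ (legitimate because $\dot Z^n\in\mathbb{H}^2$ makes the stochastic integral a true martingale) recovers exactly~\eqref{eq:condition}. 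For agent $N$ I would argue separately: its first-order condition is the one obtained by summing those of agents $1,\dots,N-1$ against the market-clearing identity and the definition of $\mu_t$ — concretely, $\sum_{n=1}^N(\mu_t-\gamma^n\sigma_t(\sigma_t^\top\varphi^n_t+\xi^n_t))=0$ is precisely how the $dS_t$-drift was built, so subtracting agents $1,\dots,N-1$ leaves agent $N$'s condition, hence $\dot\varphi^N$ is optimal as well.

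Finally I would check the admissibility footnote requirements: that the local-martingale part of each wealth process $\int_0^\cdot(\varphi^n_t)^\top dS_t$ is a true martingale. This follows from $\varphi^n\in\mathbb{H}^4$, $\sigma\in\mathbb{H}^4$ (so $(\varphi^n)^\top\sigma\in\mathbb{H}^2$ by Cauchy--Schwarz) together with $S\in\mathcal{S}^2$, which gives the needed square-integrability; the terminal condition $S_T=\mathfrak S$ is built into the $dS_t$-equation so that the constructed $S$ is genuinely a price process of the form~\eqref{eq:Sdyn}. I do not expect a serious obstacle here: the only mild subtlety is the integrability justification for passing between the two equivalent forms of~\eqref{eq:condition} and for the conditional-expectation step, all of which is covered by the stated $\mathbb{H}^4/\mathbb{H}^2/\mathcal S^2$ memberships. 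The ``hard part,'' such as it is, is purely organizational — carefully tracking the Kronecker-product blocks so that the $n$-th block of the stacked drift in the FBSDE system matches the scalar first-order condition~\eqref{eq:condition} of agent $n$ after substituting the equilibrium $\mu_t$ — and verifying that agent $N$'s condition is not an extra constraint but a consequence of the other $N-1$ conditions plus market clearing.
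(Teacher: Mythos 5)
Your proposal is correct and follows essentially the same route as the paper, which establishes this lemma through the discussion immediately preceding it: concavity of \eqref{eq:costs} in the trading rate makes the first-order condition \eqref{eq:condition} sufficient, the backward component of the FBSDE is exactly that condition for agents $1,\ldots,N-1$ once the drift of $dS_t$ is read as $\mu_t$ (via the definition of $\Gamma$ in \eqref{def: gamma}), agent $N$'s condition follows by aggregating the others against market clearing and $\sum_n \xi^n=0$, and clearing itself holds by construction. Your integrability bookkeeping (true-martingale property from $\dot Z\in\mathbb{H}^2$ and admissibility from the $\mathbb{H}^4$ memberships) matches what the paper relies on, so nothing essential is missing.
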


For the simplest case of a single risky asset and two agents, the FBSDE system from Lemma~\ref{lem:FBSDE} has been studied by~\cite{herdegen.al.20}.\footnote{If one penalizes squared inventories rather than the corresponding fluctuations (which depend on the \emph{endogenous} volatility), then the FBSDE system becomes \emph{linear} and can be analyzed in very general settings, in particular, for arbitrary numbers of agents, compare~\cite{bank.al.18,muhlekarbe.al.20}.} More specifically, \emph{local} existence is established there under the restrictive condition that the agents' risk aversion coefficients are sufficiently similar.\footnote{If all agents have the same risk aversion coefficient, then the BSDE for the frictional equilibrium price decouples from the other components of the FBSDE system in Lemma~\ref{lem:FBSDE} and reduces to its frictionless counterpart similarly as in \cite{herdegen.al.20}. Similar risk aversions in turn lead to frictional equilibrium prices in the vicinity of their frictionless counterparts, so that existence can be established using a Picard iteration under smallness conditions inspired by~\cite{tevzadze.08}.} If the terminal condition $\mathfrak{S}$ is linear in the driving Brownian motion as in Assumption~\ref{ass:linear} and the volatilities $\xi^n_t$ of the agents' endowments are of the same linear form, then the FBSDE system can be reduced to a system of Riccati equations by an appropriate ansatz. However, the system consists of four fully coupled equations even for a single risky asset and two agents, so that existence (established via Picard iteration) is again only known if the agents' preferences are sufficiently homogenous~\cite[Theorem~5.2]{herdegen.al.20}.\footnote{Equilibria in linear-quadratic models are also linked to systems of nonlinear equations in~\cite{isaenko.20,sannikov.skrzypacz.16}, but the existence of a unique solution is left open in these studies.} These difficulties are of course only exacerbated for multiple assets and agents, because each of the Riccati equations becomes matrix valued in this case.

In the present paper, we overcome these difficulties and establish \emph{global} existence for the FBSDE system from Lemma~\ref{lem:FBSDE} for linear terminal conditions and endowment volatilities:

\begin{assumption}\label{ass:linear2}
The volatilities of the agents' endowment streams~\eqref{eq:endowment} are of the form  
$$
\xi^n_t=\xi^n W_t, \quad \mbox{for $\xi^n \in \mathbb{R}^{D\times D}$}.
$$ 
With a slight abuse of notation, we set $\xi=[\xi^1,\ldots,\xi^{N-1}]^\top \in \mathbb{R}^{(N-1)D\times D}$.
\end{assumption}

Like~\cite[Theorem 5.2]{herdegen.al.20}, our existence result in Theorem~\ref{thm:Radnerfric} exploits the link between the FBSDE system and a system of Riccati ODEs. However, to make the latter more amenable to analytical estimates, we perform a number of changes of variables that allow to reduce the number of coupled (matrix) equations from four to two. Standard comparison arguments still do not apply to this multidimensional system, in particular, when the equations are matrix-valued for many risky assets and agents. However, another reparametrization finally leads to a system where the right-hand side of one equation is linear in this component. A matrix version of the variation of constants formula in turn allows to derive bounds on the unique local solution of this equation. This in turn finally allow us to obtain global existence by applying Gronwall's inequality to a \emph{scalar} function -- the norm of the local solution of the other equation on the cone of positive semidefinite matrices.

To formulate these results, we first state our global wellposedness result for the reduced ODE system. (The proof is deferred to Section~\ref{proof of sec: frictional} for better readability.)

\begin{lemma}\label{thm:ivp}
Define
$$
c := \begin{bmatrix} c_1 & \cdots & c_{N-1} \end{bmatrix}^\top, \quad \mbox{where } c_n := \bar{\gamma}\left(\frac{1}{\gamma^n} - \frac{1}{\gamma^N}\right)>0.
$$
There exists a unique global solution $(F,H)$ on $[0,T]$ of the following initial value problem:
\begin{equation}
\label{eqs: Ricatti}
\left\{
\begin{aligned}
F'&=\Gamma\otimes \left(\alpha + \left({c} \otimes I_K\right)^\top H \right)\left(\alpha + \left({c} \otimes I_K\right)^\top H \right)^\top - F\left(I_{N-1}\otimes \Lambda^{-1}\right)F,  &F(0) = 0, \\
H' &= \left(\Gamma\otimes \left(\alpha + \left({c} \otimes I_K\right)^\top H\right)\right)\xi - F\left(I_{N-1}\otimes \Lambda^{-1}\right)H,  & H(0) =0. 
\end{aligned}
\right.
\end{equation}
Moreover, $F$ takes values in the positive semidefinite matrices.\footnote{In much of the literature, positive definite matrices are additionally required to be symmetric. This does not generally hold for $F$, however, so that the arguments below need to be developed without this convenient property.} 
\end{lemma}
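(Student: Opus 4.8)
The strategy is the classical one for Riccati systems: establish local existence and uniqueness by the Cauchy--Lipschitz theorem, then upgrade to a global solution by deriving an a priori bound on any local solution that rules out blow-up in finite time. The local theory is immediate because the right-hand side of \eqref{eqs: Ricatti} is polynomial (hence locally Lipschitz) in $(F,H)$, so a unique maximal solution exists on some interval $[0,T_{\max})$ with $T_{\max}\le T$; the task is to show $T_{\max}=T$, i.e.\ that $\|F_t\|+\|H_t\|$ cannot explode before $T$.

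First I would treat the $H$-equation as \emph{linear in $H$} once $F$ is regarded as a known coefficient. Writing the $H$-equation as $H' = A_t H + B_t$, where $A_t := \Gamma\otimes\bigl((c\otimes I_K)^\top H_t\,\cdot\,\bigr)\xi$ wait--more carefully, the genuinely $H$-linear part comes from both the explicit $-F(I_{N-1}\otimes\Lambda^{-1})H$ term and the $(c\otimes I_K)^\top H$ inside the first term; grouping these gives $H' = \mathcal{A}_t[H] + (\Gamma\otimes\alpha)\xi$ for a linear operator $\mathcal{A}_t$ whose norm is controlled by $\|F_t\|$ and fixed data. The matrix variation-of-constants formula then yields, via Gr\"onwall on $\|H_t\|$,
\[
\|H_t\| \le C_1\Bigl(1+\int_0^t \|H_u\|\,(1+\|F_u\|)\,du\Bigr),
\qquad\text{hence}\qquad
\|H_t\| \le C_2 \exp\!\Bigl(C_3\!\int_0^t (1+\|F_u\|)\,du\Bigr),
\]
with constants depending only on $T$, $\Gamma$, $\alpha$, $\xi$, $\Lambda$, $c$. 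So it suffices to bound $\int_0^t\|F_u\|\,du$, or even just $\|F_u\|$ pointwise.

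Next I would exploit the structure of the $F$-equation. The key observation is that $-F(I_{N-1}\otimes\Lambda^{-1})F$ is a ``dissipative'' quadratic term: since $I_{N-1}\otimes\Lambda^{-1}$ is symmetric positive definite and $F$ is positive semidefinite (the claimed sign property, which I would prove in parallel), one has $\langle F(I_{N-1}\otimes\Lambda^{-1})F, F\rangle \ge 0$, so this term pushes $\|F\|$ \emph{down}. Thus, testing the $F$-equation against $F$ (i.e.\ computing $\frac12\frac{d}{dt}\|F_t\|^2 = \langle F_t', F_t\rangle$) and dropping the good term,
\[
\frac{d}{dt}\|F_t\|^2 \le 2\bigl\|\Gamma\otimes(\alpha+(c\otimes I_K)^\top H_t)(\alpha+(c\otimes I_K)^\top H_t)^\top\bigr\|\,\|F_t\| \le C_4\,(1+\|H_t\|^2)\,\|F_t\|,
\]
so $\|F_t\| \le C_5\bigl(1+\int_0^t(1+\|H_u\|^2)\,du\bigr)$. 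This \emph{decouples the growth of $F$ from the bad quadratic self-interaction}, leaving only a coupling to $H$. Combining the two estimates --- $\|F\|$ controlled by $\int\|H\|^2$, and $\|H\|$ controlled by $\exp(C\int\|F\|)$ --- still leaves a potential feedback loop, so the honest way to close it is to observe that the $H$-bound can be fed into the $F$-bound to get $\|F_t\| \le C\exp(C\int_0^t(1+\|F_u\|)\,du)$; applying Gr\"onwall's inequality to the \emph{scalar} function $t\mapsto \int_0^t(1+\|F_u\|)\,du$ (whose derivative is $1+\|F_t\|$, which is bounded by an affine-in-the-exponential function of itself) yields a finite bound on $[0,T]$ because the implicit ODE $y' = C e^{Cy}$ for a scalar $y$ has a solution that, while it blows up eventually, does so only after a time depending on the constants --- and here one must be slightly careful, which is why the cleaner route is to bound $\|F\|$ linearly: in fact, once $F\ge 0$ the dissipative term gives $\frac{d}{dt}\tr(F_t) \le C_4(1+\|H_t\|^2)$ directly (the quadratic term has nonpositive trace), and for positive semidefinite matrices $\|F_t\|\le\tr(F_t)$ up to dimensional constants, so $\|F_t\|$ grows at most like $\int_0^t(1+\|H_u\|^2)\,du$; plugging the $H$-bound which is $\exp(C\int\|F\|)$ still needs the scalar Gr\"onwall argument announced in the introduction. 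The upshot: $\|F_t\|+\|H_t\|$ stays finite on $[0,T]$, so $T_{\max}=T$.

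Finally, the positive semidefiniteness of $F$ needs its own argument, run concurrently with the above since the dissipativity bound uses it. I would proceed by a comparison/invariance argument for the cone of positive semidefinite matrices. Although $F$ need not be symmetric (as the footnote warns), one can still show $v^\top F_t v \ge 0$ for all fixed $v\in\R^{K(N-1)}$: the block structure $F' = \Gamma\otimes(\cdots) - F(I_{N-1}\otimes\Lambda^{-1})F$ together with the fact that $\Gamma$ is (after the change of variables implicit in the reduction, or by direct inspection of \eqref{def: gamma}) such that $\Gamma\otimes MM^\top$ stays in the appropriate cone --- here the main point is that $\Gamma$ itself, while not symmetric, has the property that $\Gamma\otimes PP^\top$ is positive semidefinite whenever it appears, because $\Gamma$ arises as $\diag(\gamma^n) - \frac1N\mathbbm{1}\mathbbm{1}^\top\diag(\gamma^n-\gamma^N)$ and one checks $\Gamma = D^{1/2}(\text{something p.s.d.})D^{1/2}$ or uses that $\Gamma$ has positive eigenvalues. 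Then the standard argument: at a first time $t_0$ and direction $v$ where $v^\top F_{t_0}v = 0$, one has $\frac{d}{dt}(v^\top F_t v)\big|_{t_0} = v^\top(\Gamma\otimes(\cdots))v - v^\top F_{t_0}(I\otimes\Lambda^{-1})F_{t_0}v = v^\top(\Gamma\otimes(\cdots))v \ge 0$ (the quadratic term vanishes since $F_{t_0}v$ lies in the kernel direction — more precisely, one uses $\|(I\otimes\Lambda^{-1})^{1/2}F_{t_0}v\|^2$ and that $v^\top F_{t_0}v=0$ with $F_{t_0}\ge0$ forces $F_{t_0}v=0$ if $F_{t_0}$ were symmetric; the non-symmetric case requires the slightly more delicate bookkeeping the authors allude to, e.g.\ working with the symmetric part or with $G = $ a symmetrized variable). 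This shows the cone is forward-invariant, so $F$ stays positive semidefinite.

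\medskip
\noindent\textbf{Main obstacle.} The crux is closing the Gr\"onwall loop between $F$ and $H$: naively, $\|F\|$ is bounded by $\int\|H\|^2$ and $\|H\|$ by $\exp(C\int\|F\|)$, which is a super-exponential feedback that does \emph{not} automatically preclude blow-up. The resolution hinges on the dissipative structure of the $F$-equation — which in turn requires the positive semidefiniteness of $F$ — to keep the growth of $F$ at most \emph{polynomial in $\int\|H\|$} (indeed one only loses the quadratic self-term, not any coupling to $H$), after which a single scalar Gr\"onwall argument on $\|F\|$ suffices. Establishing $F\ge0$ without symmetry, and identifying exactly which linear-algebraic property of $\Gamma$ makes the cone invariant, is the secondary technical hurdle.
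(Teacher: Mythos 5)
Your plan correctly identifies the crux --- the feedback loop between the bounds on $F$ and $H$ --- but it does not actually close it, and the closing step you sketch would fail. Bounding the $H$-equation as a generic linear equation with coefficient of size $\|F\|$ gives $\|H_t\|\le C\exp\bigl(C\int_0^t\|F_u\|\,du\bigr)$, and even granting your dissipativity claim for the $F$-equation you only get $\|F_t\|\le C\int_0^t(1+\|H_u\|^2)\,du$; combining these yields a differential inequality of the type $y'\le Ce^{Cy}$ for $y(t)=\int_0^t\|F_u\|\,du$, and Gr\"onwall does \emph{not} apply to such inequalities --- their solutions can blow up in finite time, possibly before $T$, so ``the blow-up happens only after a time depending on the constants'' is not an argument. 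The paper's resolution is structurally different: positive semidefiniteness of $F$ is used not to make the $F$-equation dissipative, but to make the linear propagator generated by $-F(I_{N-1}\otimes\Lambda^{-1})$ (which appears in \emph{both} equations) non-expansive. Concretely, one introduces $\Phi_F$ solving $\Phi_F'=(I_{N-1}\otimes\Lambda^{-1/2})F^\top(I_{N-1}\otimes\Lambda^{-1/2})\Phi_F$, proves $\|\Phi_F(r)\Phi_F^{-1}(\tau)\|_{\mathrm{op}}\le 1$ for $r\le\tau$, and writes $H$ by variation of constants with this propagator; the term $-F(I_{N-1}\otimes\Lambda^{-1})H$ is then absorbed into the propagator and contributes a factor $\le 1$, so the resulting integral inequality for the scalar $\tau\mapsto\|\alpha+(c\otimes I_K)^\top H(\tau)\|_{\mathrm{op}}$ has constants depending only on $\alpha,\xi,\Gamma,\Lambda,c$ --- no $\|F\|$ at all. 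A single scalar Gr\"onwall then bounds $H$ outright, and $F$ is bounded afterwards from its own variation-of-constants representation. Without this decoupling, your scheme has no valid path to $T_{\max}\ge T$.

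Two secondary problems. First, your fallback trace argument is not available here: for a positive semidefinite but non-symmetric $F$, the quadratic term need not have nonpositive trace (take $F$ skew-symmetric and $\Lambda=I$: $-\mathrm{tr}(F^2)>0$), and $\|F\|\lesssim\mathrm{tr}(F)$ also fails (same example has trace zero), so the claimed pointwise-linear growth of $\|F\|$ in $\int\|H\|^2$ is unproven. Second, in your cone-invariance argument for $F\ge 0$ you need $v^\top(\Gamma\otimes MM^\top)v\ge 0$, and for the non-symmetric $\Gamma$ of \eqref{def: gamma} this cannot be deduced from ``$\Gamma$ has positive eigenvalues''; the paper proves $b^\top\Gamma b>0$ directly (its Lemma on $\Gamma$, via an orthogonal decomposition using \eqref{ordered risk aversion}) and then checks that the Kronecker product with a \emph{symmetric} positive semidefinite factor stays positive semidefinite. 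The paper also establishes $F\ge 0$ by a different device than first-touching-time: it shows $\Phi_F''=B\Phi_F$ with $B$ positive semidefinite (the Riccati quadratic term cancels in this computation) and deduces positive semidefiniteness of the coefficient from monotonicity of $\tau\mapsto b^\top\Phi_F^\top\Phi_F b$, which avoids the delicate non-symmetric boundary bookkeeping you allude to.
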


With the solution of the matrix Riccati equations~\eqref{eqs: Ricatti} at hand, we can then construct a solution of the FBSDE system from Lemma~\ref{lem:FBSDE}. The latter in turn leads to a Radner equilibrium with transaction costs. (The proof is again delegated to Section~\ref{proof of sec: frictional} for better readability.)

\begin{theorem}\label{thm:Radnerfric}
With the functions $F$, $H$ from Lemma~\ref{thm:ivp}, let $\Phi(\tau)$ be the solution of the linear matrix ODE\footnote{This is the exponential of $\int_0^\cdot (I_{N-1}\otimes \Lambda^{-1/2}) F^\top(r) (I_{N-1}\otimes \Lambda^{-1/2}) dr$ in the scalar case or if the matrices involved commute.}
\begin{equation}\label{eq:matrix exponential}
\Phi'(t) = \left(I_{N-1}\otimes \Lambda^{-1/2}\right) F^\top(T-t) \left(I_{N-1}\otimes \Lambda^{-1/2}\right) \Phi(t), \quad \Phi (0) = I_{K(N-1)},
\end{equation}
and define 
\begin{align}\label{eq:state transition}
\Psi (r;t) :=  \left(I_{N-1}\otimes \Lambda^{1/2}\right)\Phi(r)\Phi^{-1}(t) \left(I_{N-1}\otimes \Lambda^{-1/2}\right), \quad  \mbox{for $r, t \in[0,T].$}
\end{align}
Suppose Assumptions~\ref{ass:linear} and \ref{ass:linear2} are satisfied. With the frictionless equilibrium price and volatility $(\bar{S}, \bar{\sigma})$ from Proposition~\ref{thm:frictionless}, a solution $({\varphi}, \dot{\varphi}, \dot{Z}, \bar{S}+\mathcal{Y}-(c\otimes \Lambda)^\top \dot{\varphi}, \bar{\sigma}-(c\otimes \Lambda)^\top \dot{Z})$ of the FBSDE system from Lemma~\ref{lem:FBSDE} is then given by
\begin{align}
\varphi_t &= \bar{\varphi}_0 +\Psi^\top(0;t)\left(\varphi_{0-} - \bar{\varphi}_0\right) 
- \int_0^t\Psi^\top(r;t) \left(I_{N-1}\otimes \Lambda^{-1}\right)H(T-r) W_r dr ,\label{eq:frictional position}\\
\dot{\varphi}_t &= - \left(I_{N-1}\otimes \Lambda^{-1}\right)\left[F(T-t) \left(\varphi_t - \bar{\varphi}_0\right) + H(T-t)W_t\right] , \label{eq:frictional strategy}\\
\mathcal{Y}_t &= -\bar{\gamma} \left( \int_0^{T-t}  \left(\left({c} \otimes I_K\right)^\top H \alpha^\top + \alpha H^\top\left({c} \otimes I_K\right) +\left({c} \otimes I_K\right)^\top H H^\top  \left({c} \otimes I_K\right)\right)(r) dr  \right) s\label{eq:Ybis},\\
\dot{Z}_t &= -\left(I_{N-1}\otimes \Lambda^{-1}\right)H(T-t). \label{eq:volatility}
\end{align}
In particular, $S=\bar{S}+\mathcal{Y}_t- (c \otimes \Lambda)^\top  \dot{\varphi}_t$  is a Radner equilibrium with transaction costs.
\end{theorem}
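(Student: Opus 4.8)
The plan is to verify directly that the processes $(\varphi,\dot\varphi,\dot Z,S,\sigma)$ displayed in the theorem solve the three equations of the FBSDE system in Lemma~\ref{lem:FBSDE}, and then invoke that lemma to conclude that $S$ is a Radner equilibrium. The first step is to record the relevant algebraic identities satisfied by $(F,H)$ from Lemma~\ref{thm:ivp} and by the frictionless data $(\bar S,\bar\sigma)$ from Proposition~\ref{thm:frictionless}: in the candidate equilibrium the volatility is $\sigma_t=\bar\sigma-(c\otimes\Lambda)^\top\dot Z_t=\alpha+(c\otimes I_K)^\top(I_{N-1}\otimes\Lambda^{-1})H(T-t)$ — so the key reparametrisation is to recognise that $\Lambda^{-1}\sigma_t\sigma_t^\top$ and $\Lambda^{-1}\sigma_t$ are exactly the matrices appearing inside the Riccati right-hand sides, up to the Kronecker structure. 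In particular one should check that the Riccati equations~\eqref{eqs: Ricatti} are precisely what is needed for the ansatz $\dot\varphi_t=-(I_{N-1}\otimes\Lambda^{-1})[F(T-t)(\varphi_t-\bar\varphi_0)+H(T-t)W_t]$ to have the correct drift and diffusion.

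Next I would differentiate the candidate $\dot\varphi_t$ using Itô's formula. Writing $A(t):=(I_{N-1}\otimes\Lambda^{-1})F(T-t)$ and $B(t):=(I_{N-1}\otimes\Lambda^{-1})H(T-t)$, the process is $\dot\varphi_t=-A(t)(\varphi_t-\bar\varphi_0)-B(t)W_t$; its dynamics have $dt$-part coming from $A'(t),B'(t)$ (hence from $F',H'$, i.e. the Riccati equations), from $A(t)\dot\varphi_t\,dt$ (since $d\varphi_t=\dot\varphi_t\,dt$), and a martingale part $-B(t)\,dW_t$, which identifies $\dot Z_t=-B(t)=-(I_{N-1}\otimes\Lambda^{-1})H(T-t)$, matching~\eqref{eq:volatility}. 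Substituting the Riccati relations and collecting the terms proportional to $(\varphi_t-\bar\varphi_0)$, to $W_t$, and to the constant $s$ should reproduce exactly the drift $(\Gamma\otimes\Lambda^{-1}\sigma_t\sigma_t^\top)\varphi_t+(\Gamma\otimes\Lambda^{-1}\sigma_t)\xi_t-\tfrac{\gamma^N}{N}\mathbbm 1_{N-1}\otimes\Lambda^{-1}\sigma_t\sigma_t^\top s$; here one uses $\Gamma\mathbbm 1=\ldots$-type identities relating $\Gamma$, $c$, and $\bar\gamma$, together with the defining equation $\bar\mu=\bar\gamma\bar\sigma\bar\sigma^\top s$ for the frictionless equilibrium, to match the $s$-terms. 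The terminal condition $\dot\varphi_T=0$ is immediate from $F(0)=H(0)=0$. One then checks that the integral formula~\eqref{eq:frictional position} for $\varphi_t$ is the unique solution of the linear ODE $d\varphi_t=\dot\varphi_t\,dt$ with $\dot\varphi_t$ as above — this is just the variation-of-constants formula with state-transition matrix $\Psi(r;t)$ defined via $\Phi$ in~\eqref{eq:matrix exponential}, so verifying $\Psi$ solves the right linear system is routine.

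For the price equation, I would compute $dS_t=d\bar S_t+d\mathcal Y_t-(c\otimes\Lambda)^\top d\dot\varphi_t$. The diffusion part is $\bar\sigma\,dW_t-(c\otimes\Lambda)^\top\dot Z_t\,dW_t=\sigma_t\,dW_t$ by the definition of $\sigma_t$, as required. For the drift, $d\bar S_t$ contributes $\bar\gamma\alpha\alpha^\top s\,dt$, $d\mathcal Y_t$ contributes $\tfrac{d}{dt}\mathcal Y_t\,dt$ (computed by the fundamental theorem of calculus from~\eqref{eq:Ybis}), and $-(c\otimes\Lambda)^\top d\dot\varphi_t$ contributes $-(c\otimes\Lambda)^\top$ times the drift of $\dot\varphi$ just computed. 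The task is to show the sum equals $\tfrac{\gamma^N}{N}\sigma_t\sigma_t^\top s+\tfrac{\sigma_t}{N}\sum_{n=1}^{N-1}(\gamma^n-\gamma^N)(\sigma_t^\top\varphi_t^n+\xi_t^n)$; expanding $\sigma_t\sigma_t^\top$ in terms of $\alpha$ and $H$ and using the definition of $c_n=\bar\gamma(1/\gamma^n-1/\gamma^N)$, the $\varphi_t$-dependent and $\xi_t$-dependent pieces cancel against the corresponding pieces of $(c\otimes\Lambda)^\top\times(\text{drift of }\dot\varphi)$, while the remaining $s$-terms are matched precisely by the choice of $\mathcal Y_t$ — indeed~\eqref{eq:Ybis} is reverse-engineered so that $\tfrac{d}{dt}\mathcal Y_t$ absorbs exactly the discrepancy. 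The terminal condition $S_T=\mathfrak S$ follows from $\bar S_T=\mathfrak S$ (Proposition~\ref{thm:frictionless}), $\mathcal Y_T=0$ (the integral in~\eqref{eq:Ybis} is over $[0,0]$), and $\dot\varphi_T=0$. The final, easy step is to note that the constructed tuple lies in the required function spaces $\mathbb{H}^4\times\mathbb{H}^4\times\mathbb{H}^2\times\mathcal S^2\times\mathbb{H}^4$ — boundedness of $F,H,\Phi,\Phi^{-1}$ on the compact interval $[0,T]$ (from Lemma~\ref{thm:ivp} and continuity) gives linear-growth bounds in $W_t$, and all $L^p$ moments of $W$ are finite — so Lemma~\ref{lem:FBSDE} applies and yields the equilibrium claim.

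I expect the main obstacle to be the bookkeeping in matching the drift terms: keeping track of the Kronecker-product structure while simultaneously using (i) the Riccati equations~\eqref{eqs: Ricatti}, (ii) the algebraic relations among $\Gamma$, $c$, $\bar\gamma$, and $\gamma^N$, and (iii) the frictionless equilibrium identity $\bar\mu=\bar\gamma\bar\sigma\bar\sigma^\top s$, all at once. In particular, verifying that the somewhat mysterious correction term $\mathcal Y_t$ in~\eqref{eq:Ybis} is exactly the right one — i.e. that its time derivative closes the price-drift identity — is the computational heart of the proof; everything else is a careful but essentially mechanical application of Itô's formula and variation of constants.
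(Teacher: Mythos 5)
Your plan coincides essentially step for step with the paper's own proof: a direct verification that the candidate processes solve the FBSDE of Lemma~\ref{lem:FBSDE}, using the Riccati equations~\eqref{eqs: Ricatti} to produce the drift of $\dot\varphi$ and identify $\dot Z$, the algebraic identities among $\Gamma$, $c$, $\bar\gamma$, $\gamma^N$ together with the explicit frictionless allocation $\bar\varphi_0$ to match the $s$-terms, the state-transition matrix $\Psi$ (variation of constants) to check that \eqref{eq:frictional position} integrates \eqref{eq:frictional strategy}, the time derivative of $\mathcal{Y}$ to close the price-drift identity, and boundedness of $F$, $H$, $\Phi$ on $[0,T]$ plus Gaussianity of $W$ for the integrability of the tuple. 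The only blemish is a harmless algebra slip in your stated volatility: $\bar\sigma-(c\otimes\Lambda)^\top\dot Z_t=\alpha+(c\otimes I_K)^\top H(T-t)$, with no residual factor $\left(I_{N-1}\otimes\Lambda^{-1}\right)$.
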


\section{Small-costs Asymptotics}\label{sec:asymptotics}

The Riccati system~\eqref{eqs: Ricatti} can be solved numerically using standard ODE solvers by vectorizing the matrix equations. In order to glean qualitative insights into the structure of the solution and facilitate the calibration of the model parameter to time series data, it is nevertheless instructive to expand the solution in the practically relevant limiting regime of \emph{small} transaction costs.  (Again, the proof of Theorem~\ref{thm:asymp} is deferred to Section~\ref{proof of asymptotics} for better readability.)

\begin{theorem}\label{thm:asymp}
Fix a positive definite matrix $\bar\Lambda$ and set\footnote{Note that the square root of the risk-aversion matrix $\Gamma$ is well defined by~\cite[Theorem 1.29]{higham2008functions}, even though this matrix is generally only positive semidefinite but not symmetric.}
$$
M:=\left({c} ^\top \Gamma^{1/2}\otimes \bar\Lambda\left(\bar\Lambda \# \alpha\alpha^\top \right)^{-1}\alpha\right).
$$
For small transaction costs $\Lambda=\lambda \bar\Lambda$ with $\lambda \to 0$, 
the difference between the frictional equilibrium volatility from Theorem~\ref{thm:Radnerfric} and its frictionless counterpart $\bar{\sigma}=\alpha$ from Proposition~\ref{thm:frictionless} has the following leading-order expansion: 
\begin{align}\label{eq: small cost delta sigma}
\int_0^T \|\sigma_t-\bar{\sigma}  -  \lambda^{1/2}M \xi \|_{\mathrm{op}} \; dt = O(\lambda).
\end{align}
For $\varphi_{0-} = \bar\varphi_0$,\footnote{As in \cite{moreau.al.17}, the same expansion remains valid if the initial condition is close enough to the frictionless allocation, which is a natural assumption for a market with small trading costs.} the leading-order adjustment of the initial price level can be approximated as 
\begin{align}\label{eq: small cost stock correction}
S_0-\bar{S}_0= - \lambda^{1/2}\bar{\gamma}\left( M \xi\alpha^\top  + \alpha\xi^\top M^\top  \right)   s T+O(\lambda).
\end{align}
Finally, the equilibrium expected returns satisfy
$$
\left\| \mu -\left(\bar{\mu}+ \Delta \bar{\mu} +\lambda^{1/2}\left( c^\top \Gamma^{1/2} \otimes \left(\bar\Lambda \# \alpha\alpha^\top\right) \right)\dot{\bar{\varphi}}\right)\right\|_{\mathbb{H}^p}=O(\lambda).
$$
Here, the average adjustment compared to the frictionless case are given by 
$$
\Delta\bar{\mu}:= \lambda^{1/2} \bar{\gamma} \left(M \xi\alpha^\top  + \alpha\xi^\top M^\top  \right) s=O(\lambda^{1/2}).
$$
The process $\dot{\bar{\varphi}}$, that describes the mean-zero fluctuations around this constant value, follows an ${K(N-1)}$-dimensional Ornstein-Uhlenbeck process:
\begin{align*}
d\dot{\bar{\varphi}}_t = -\lambda^{-1/2}\left( \Gamma^{1/2} \otimes \bar\Lambda^{-1}\left(\bar\Lambda\#\alpha\alpha^\top \right) \right) \left(\dot{\bar{\varphi}}_t dt + \left(I_{N-1}\otimes \left(\alpha\alpha^\top\right)^{-1}\alpha\right) \xi dW_t \right).
\end{align*}
This process also provides a leading-order approximation of the equilibrium (signed) trading volume, in that $\|\dot{\varphi}-\dot{\bar{\varphi}}\|_{\mathbb{H}^p}=O(1)$ for every $p>1$. 
\end{theorem}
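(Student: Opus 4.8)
The plan is to work entirely from the closed‑form representations in Theorem~\ref{thm:Radnerfric}, reducing every assertion to the asymptotic behaviour of the solution $(F,H)$ of the Riccati system~\eqref{eqs: Ricatti} when $\Lambda=\lambda\bar\Lambda$ and $\lambda\to0$. The core step is a singular‑perturbation analysis of~\eqref{eqs: Ricatti}. Substituting $\Lambda=\lambda\bar\Lambda$ and rescaling time, one writes $F(\tau)=\lambda^{1/2}\tilde F(\tau/\lambda^{1/2})$, $H(\tau)=\lambda^{1/2}\tilde H(\tau/\lambda^{1/2})$; the rescaled pair then solves, up to an $O(\lambda^{1/2})$ perturbation of the vector field coming from the terms that carry $(c\otimes I_K)^\top H=O(\lambda^{1/2})$, the autonomous system $\tilde F'=\Gamma\otimes\alpha\alpha^\top-\tilde F(I_{N-1}\otimes\bar\Lambda^{-1})\tilde F$, $\tilde H'=(\Gamma\otimes\alpha)\xi-\tilde F(I_{N-1}\otimes\bar\Lambda^{-1})\tilde H$ with $\tilde F(0)=\tilde H(0)=0$. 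I would show this limiting system is exponentially stable with unique equilibrium $(\tilde F_\infty,\tilde H_\infty)$, $\tilde F_\infty=\Gamma^{1/2}\otimes(\bar\Lambda\#\alpha\alpha^\top)$ and $\tilde H_\infty=(\Gamma^{1/2}\otimes\bar\Lambda(\bar\Lambda\#\alpha\alpha^\top)^{-1}\alpha)\xi$, where the one identity doing all the work is the defining property of the Riemannian mean in the form $(\bar\Lambda\#\alpha\alpha^\top)\bar\Lambda^{-1}(\bar\Lambda\#\alpha\alpha^\top)=\alpha\alpha^\top$, together with its consequences $(\bar\Lambda\#\alpha\alpha^\top)(\alpha\alpha^\top)^{-1}=\bar\Lambda(\bar\Lambda\#\alpha\alpha^\top)^{-1}$ and $\bar\Lambda^{-1}(\bar\Lambda\#\alpha\alpha^\top)(\alpha\alpha^\top)^{-1}=(\bar\Lambda\#\alpha\alpha^\top)^{-1}$. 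Combining the exponential convergence $\tilde F(v)-\tilde F_\infty,\ \tilde H(v)-\tilde H_\infty=O(e^{-cv})$ with Gronwall control of the $O(\lambda^{1/2})$ perturbation yields the quantitative bounds $\|F(\tau)-F_\infty\|+\|H(\tau)-H_\infty\|=O(\lambda^{1/2}e^{-c\tau/\lambda^{1/2}})+O(\lambda)$, hence the key $L^1$‑in‑time estimate $\int_0^T(\|F(\tau)-F_\infty\|+\|H(\tau)-H_\infty\|)\,d\tau=O(\lambda)$, with $F_\infty=\lambda^{1/2}\tilde F_\infty$, $H_\infty=\lambda^{1/2}\tilde H_\infty$. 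In particular $(c^\top\otimes I_K)H_\infty=\lambda^{1/2}M\xi$, which is how $M$ enters.

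With this in hand the deterministic pieces are short. Since $\sigma_t-\bar\sigma=(c^\top\otimes I_K)H(T-t)$ and $\bar\sigma=\alpha$ by Theorem~\ref{thm:Radnerfric}, \eqref{eq: small cost delta sigma} is immediate from the $L^1$‑in‑time bound. For the price level, $\dot\varphi_0=0$ when $\varphi_{0-}=\bar\varphi_0$ (also $W_0=0$), so $S_0-\bar S_0=\mathcal Y_0=-\bar\gamma\big(\int_0^T[(\alpha+V(r))(\alpha+V(r))^\top-\alpha\alpha^\top]\,dr\big)s$ with $V(r)=(c^\top\otimes I_K)H(r)$; replacing $V(r)$ by its limit $V_\infty=\lambda^{1/2}M\xi$ — the error controlled by the same bound plus the $O(\lambda)$ quadratic term $VV^\top$ — gives~\eqref{eq: small cost stock correction}. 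For the expected returns I would apply It\^o to $S=\bar S+\mathcal Y-(c\otimes\Lambda)^\top\dot\varphi$, getting $\mu_t=\bar\gamma\,\sigma_t\sigma_t^\top s-(c\otimes\Lambda)^\top(\text{drift of }\dot\varphi_t)$; the first summand contributes $\bar\mu+\Delta\bar\mu$ up to $O(\lambda)$ after the same replacement $V\mapsto V_\infty$, and the second, using $(c\otimes\Lambda)^\top(I_{N-1}\otimes\Lambda^{-1})=c^\top\otimes I_K$ and the drift of $\dot\varphi$ from Lemma~\ref{lem:FBSDE}, reduces to $(c^\top\otimes I_K)F(T-t)\,\dot\varphi_t$ plus transient terms; since $(c^\top\otimes I_K)F_\infty=\lambda^{1/2}(c^\top\Gamma^{1/2}\otimes(\bar\Lambda\#\alpha\alpha^\top))$, this produces the claimed Ornstein--Uhlenbeck term once $\dot\varphi$ is replaced by $\dot{\bar\varphi}$.

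For the last assertion I would first check that the process $\dot{\bar\varphi}$ defined by the stated linear SDE coincides with the ``frozen‑coefficient'' process $-(I_{N-1}\otimes\Lambda^{-1})[F_\infty\bar\psi_t+H_\infty W_t]$, $d\bar\psi_t=\dot{\bar\varphi}_t\,dt$, $\bar\psi_0=0$: this is a direct computation from $H_\infty=F_\infty(I_{N-1}\otimes(\alpha\alpha^\top)^{-1}\alpha)\xi$ and the Riemannian‑mean identities above. I would then compare it with $\dot\varphi$ from~\eqref{eq:frictional strategy}: the deviation $\dot\varphi_t-\dot{\bar\varphi}_t$ solves a linear system driven by $W$ whose mean‑reverting part is $-\lambda^{-1/2}(\Gamma^{1/2}\otimes\bar\Lambda^{-1}(\bar\Lambda\#\alpha\alpha^\top))$ up to $o(\lambda^{-1/2})$ and whose inhomogeneity is $(I_{N-1}\otimes\Lambda^{-1})[(F(T-t)-F_\infty)\psi_t+(H(T-t)-H_\infty)W_t]$, i.e. $O(\lambda^{-1/2}e^{-c(T-t)/\lambda^{1/2}})$ times processes with all moments; a matrix variation‑of‑constants estimate against the $O(\lambda^{-1/2})$‑contractive fundamental solution, together with the $L^1$‑in‑time Riccati bound and $L^p$‑bounds on $\sup_t\|W_t\|$ and $\sup_t\|\psi_t\|$, gives $\|\dot\varphi-\dot{\bar\varphi}\|_{\mathbb{H}^p}=O(1)$ for every $p>1$. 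The same transient estimates feed back to upgrade the returns expansion to $O(\lambda)$ in $\mathbb{H}^p$.

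The main obstacle is the uniform‑in‑$\lambda$ singular‑perturbation analysis of~\eqref{eqs: Ricatti}: one needs exponential stability of the rescaled Riccati flow on the cone of (not necessarily symmetric) positive‑semidefinite matrices, and — crucially — the correct matching at the terminal layer $\{t\approx T\}$, where $F(T-t),H(T-t)\to0$. It is precisely this matching, namely that the frictional terminal inventory lies within $O(\lambda^{1/4})$ of its frictionless counterpart $\bar\varphi_T$, that prevents $\dot\varphi$ from developing a spike larger than its bulk size $O(\lambda^{-1/4})$ near $T$ and keeps all boundary‑layer contributions within the stated orders; without it the naive estimates would be off. Once the time‑pointwise Riccati bounds are in place, passing to the $\mathbb{H}^p$ statements is routine bookkeeping with Gaussian moment bounds.
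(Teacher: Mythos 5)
Your overall architecture is the same as the paper's: identify the stationary (algebraic) Riccati solutions $\widehat F=\Gamma^{1/2}\otimes(\bar\Lambda\#\alpha\alpha^\top)$ and $\widehat H=(\Gamma^{1/2}\otimes\bar\Lambda(\bar\Lambda\#\alpha\alpha^\top)^{-1}\alpha)\xi$, prove an $L^1$-in-time estimate $\int_0^T(\|F-\lambda^{1/2}\widehat F\|_{\mathrm{op}}+\|H-\lambda^{1/2}\widehat H\|_{\mathrm{op}})\,d\tau=O(\lambda)$, and then read off the volatility, price-level, returns and volume claims from the closed forms of Theorem~\ref{thm:Radnerfric}, comparing $\varphi-\bar\varphi$ and $\dot\varphi$ with the Ornstein--Uhlenbeck process (your ``frozen-coefficient'' process is exactly the paper's $\Delta$ and $\dot{\bar\varphi}$, and your Riemannian-mean identities are the ones used there). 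Where you differ is the key lemma: you rescale time and invoke exponential stability of the limiting autonomous Riccati flow plus Gronwall, whereas the paper (Lemma~\ref{bounds}) never proves convergence of that flow; it compares $F^\lambda,H^\lambda$ directly with the constants via a matrix variation-of-constants formula with two propagators $\Psi_{F^\lambda},\Psi_{\widehat F}$ and the singular-value bounds of Lemma~\ref{matrix exponential bounds}.

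Two points in your sketch need repair. First, there is a circularity: the claim that the rescaled system is an $O(\lambda^{1/2})$ perturbation of the autonomous one rests on $(c\otimes I_K)^\top H=O(\lambda^{1/2})$, which is part of the conclusion; the only a priori information available (the Gronwall bound~\eqref{estimation of frictional volatility}, which is uniform in $\lambda$ because $\|\Lambda\|^{1/2}\|\Lambda^{-1}\|^{1/2}$ is scale-invariant) gives merely an $O(1)$ bound, and an $O(1)$ perturbation acting over the stretched horizon $T/\lambda^{1/2}$ would only deliver $O(\lambda^{1/2})$, not $O(\lambda)$, integrated error. You need a bootstrap/continuity argument; the paper's Steps~1--4 do precisely this: Step~1 uses only the crude bound to get $F=O(\lambda^{1/2})$, Step~2 is a contradiction argument keeping $\sigma_{\min}(\alpha+(c\otimes I_K)^\top H)$ bounded away from zero uniformly in $\lambda$, Step~3 converts that into exponential decay of $\Psi_{F^\lambda}$ off an initial layer of width $O(\lambda^{1/2})$, and only then does $H=O(\lambda^{1/2})$ follow. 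Second, the exponential stability of the limiting flow is asserted, not proved: since $\Gamma$, $\widehat F$ and $F^\lambda$ are not symmetric, standard symmetric-Riccati comparison/monotonicity results do not apply, and you would need either a Schur/Jordan-form analysis of $G'=A-G^2$ with $A=\Gamma\otimes\bar\Lambda^{-1/2}\alpha\alpha^\top\bar\Lambda^{-1/2}$ or the paper's singular-value machinery (Lemmas~\ref{psd}--\ref{matrix exponential bounds}, Corollary~\ref{Psi norm}) as a substitute. By contrast, your worry about ``matching at the terminal layer'' (and the $O(\lambda^{1/4})$ terminal-inventory claim) is not where the difficulty lies: the layer sits at $\tau\approx 0$, i.e.\ $t\approx T$, where the trivial bound $\|\Psi_{F^\lambda}\|_{\mathrm{op}}\le 1$ suffices, and its width $O(\lambda^{1/2})$ already makes its contribution to all time-integrated quantities $O(\lambda)$.
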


These formulas simplify considerably in the case of two agents ($N=2$). To wit, the risk-aversion matrix $\Gamma$ and the risk-aversion vector $c$ then collapse to the scalars
$$
\Gamma=\frac{\gamma^1+\gamma^2}{2}, \quad c= \bar{\gamma}\frac{\gamma^2-\gamma^1}{\gamma^1\gamma^2}=\frac{\gamma^2-\gamma^1}{\gamma^1+\gamma^2}.
$$
As a result, the average adjustments of the expected returns compared to the frictionless case simplify to
\begin{align}\label{eq: asymp lp}
\lambda^{1/2} \bar\gamma \left( M \xi\alpha^\top  + \alpha\xi^\top M^\top  \right)   s,  ~~\text{ where } M=\frac{\gamma^2-\gamma^1}{\sqrt{2(\gamma^1+\gamma^2)}}  \bar\Lambda\left(\bar\Lambda \# \alpha\alpha^\top \right)^{-1}\alpha.
\end{align}
The corresponding leading-order approximation of the (signed) trading volume is
\begin{align}\label{eq: asymp phidot}
d \dot{\bar{\varphi}}_t &= -\lambda^{-1/2}\sqrt{\frac{\gamma^1+\gamma^2}{2}} \bar\Lambda^{-1} \left(\bar\Lambda\#\alpha\alpha^\top \right)\left(\dot{\bar{\varphi}}_t dt + \left(\alpha \alpha^\top\right)^{-1} \alpha \xi dW_t \right). 
\end{align}

These explicit formulas clearly separate the impact of risk, (heterogeneity of) risk aversions, trading costs, and individual trading motives. This is makes it feasible to calibrate the model to time series of prices and trading volume, as we discuss now.

\section{Calibration to Time-Series Data}\label{sec:calibration}

\subsection{Dataset}

Following empirical research of~\cite{amihud2002illiquidity,acharya.pedersen.05} and industry practice as documented in \cite{vanguard2018}, we study liquidity premia for US equities by constructing portfolios corresponding to different levels of liquidity. To wit, we build portfolios H, M and L, which correspond to High, Medium and Low liquidity, respectively, from 1991 to 2016. The portfolios are constructed in a tradable manner: for each portfolio, the number of shares in each constituent stock in year $T$ is computed using only the data in year $T-1$ (so the data in 1990 is used to calculate the portfolio weight in 1991, for example, in order to avoid forward-looking biases), and kept constant throughout year $T$. We choose this 26 year investment period to match the estimation in \cite{vanguard2018} on Russell indices.\footnote{Also note that an even longer period would be increasingly at odds with our arithmetic model and the Bachelier-type price dynamics it implies.} We obtain the S\&P500 constituents from 1990 to 2016 from Compustat, match them to the CRSP daily stock file based on the \texttt{CUSIP} identifier,\footnote{See \url{http://www.crsp.org/products/documentation/security-data}.} and then obtain the daily adjusted closing prices, trading volumes, and shares outstanding. 

The constituent stocks for each portfolio in year $T$ are selected as follows. First, we carry out a prescreening using the data in year $T-1$ similar to \cite{amihud2002illiquidity} to focus on stocks that (1) remain constituents for the whole year $T-1$, (2) have more than 200 trading days with available price data and positive volume in year $T-1$, and (3) have available prices on the first trading day of year $T$. Second, among these prescreened stocks, we pick the 200 stocks with the highest average daily market capitalization, the same number of stocks as the in large-cap portfolio considered in \cite{vanguard2018}. These 200 stocks are then sorted by their transaction costs proxied by ILLIQ in year $T-1$, and separated into three groups with 67, 66, and 67 stocks, respectively. Here, ILLIQ is a liquidity index proposed by \cite{amihud2002illiquidity}, defined as the average of the absolute value of daily percentage return of a stock divided by its dollar volume. A  higher ILLIQ value (i.e., large price moves even with little trading) of a stock indicates a lower liquidity level. Third, motivated by \cite{acharya.pedersen.05},\footnote{\cite{acharya.pedersen.05} observed liquidity premia for the equal-weighted returns of various portfolios. However, to achieve such returns in practice, these portfolios need to be rebalanced daily. We rebalance the portfolio at the beginning of each year to stay close to a buy-and-hold strategy, which seems natural given that the portfolios are interpreted as assets that can be bought and hold in our model.} for each group, we form a portfolio that is equal-weighted in year $T-1$ in the sense that all constituent stocks have equal values under the their respective average price in year $T-1$. In summary, this leads to three portfolios H, M and L with the lowest, medium, and highest transaction costs, respectively. 

We view these three portfolios as three risky assets with different liquidity. The trading volumes and outstanding shares for each portfolio are calculated as the aggregated values for all constituent stocks. On the first trading day, we set the price of each portfolio to be the average prices of constituents weighted by their shares outstanding, so that the price multiplied by the shares outstanding equals the total market capitalization for the constituents of each portfolio. The portfolio is then rebalanced at the beginning of each subsequent year. To determine the transaction cost associated with each portfolio, we first calculate the daily values as the equal-weighted average of the transaction cost of all constituent stocks on each day, and then calculate the average of these daily values during the whole sample period.

For our 26 years of data the average historical shares outstanding are $s=(1.15, 0.32, 0.23)^\top\times 10^{10}$; the average prices (in dollars) are $(45.41,49.23,38.30)^\top$. The annualized arithmetic return is $\hat{\mu}=(2.99,3.71,3.55)^\top$; dividing by the average prices, this corresponds to a (relative) Black-Scholes return of $(6.57\%,7.55\%,9.27\%)^\top$. In particular, the liquidity premium of the low-liquidity portfolio L compared to the high-liquidity portfolio H (i.e., the difference between the respective Black-Scholes returns) is 2.69\%, in line with the 2.4\% reported for Russell data in \cite{vanguard2018}. The corresponding estimate for the annualized arithmetic variance is 

\begin{align*}
\hat\Sigma=\begin{pmatrix}
   72.00  & 71.49  & 54.80 \\
   71.49  & 85.42  & 65.86 \\
   54.80  & 65.86  & 56.84 \\
\end{pmatrix}.
\end{align*}

\subsection{Calibration of the Frictionless Model}

We first consider the frictionless version of the model and check whether the liquidity premium is in fact just a risk premium that compensates for higher volatilities of less liquid stocks. By Proposition~\ref{thm:frictionless}, the frictionless equilibrium expected return is 
\begin{align*}
\bar{\mu}  = \bar{\gamma}\alpha\alpha^\top s,
\end{align*}
where $\alpha\alpha^\top$ is the frictionless equilibrium variance. We proxy $\mu$ and $\alpha\alpha^\top$ by the empirical estimates $\hat{\mu}$ and $\hat\Sigma$ reported above. The aggregate risk aversion $\bar\gamma$ is in turn estimated via a linear regression model without intercept as $\bar\gamma=2.97\times 10^{-13}$. Using the empirical covariance matrix and this calibrated value for the aggregate risk aversion $\bar\gamma$, the frictionless Black-Scholes return are $(7.76\%,7.55\%,7.56\%)^\top$. To wit, the (co-)variances of the high-, medium-, and low-liquidity portfolios observed empirically suggest nearly identical risk premia for all of them. This in stark contrast to the empirical data, where the low-liquidity portfolio has a substantially higher return than the portfolio composed of the highly liquid assets.

\subsection{Calibration of the Frictional Model}

We now discuss how the above calibration results change when trading costs (again proxied by ILLIQ) are taken into account. To ease the computational burden, we assume that the dividend volatility $\alpha\alpha^\top$ and in turn the leading-order equilibrium price volatilities are the same as in the frictionless version of the model. Likewise, we use the same value for the aggregate risk aversion $\bar{\gamma}$. Unlike in the frictionless version of the model, not just this aggregate risk aversion, but also the heterogeneity between the individual agents now play a crucial role. For tractability, we focus on the simplest model with two agents and write
$$\gamma^2=k\gamma^1,$$
where $k\ge1$ measures the heterogeneity of the two agents. Initially, we choose $k=2$ to illustrate the following calibration process; then, $\gamma^2=4.45\times 10^{-12}$ and $\gamma^1=8.91\times 10^{-13}$. However, by virtue of our explicit asymptotic formulas, different values of $k$ will just lead to a rescaling of the leading-order equilibrium returns implied by the model, which we outline at the end of this section. 

For simplicity, we assume that the transaction costs matrix $\Lambda$ is diagonal, which is reasonable if the quadratic trading costs are seen as a more tractable proxy for proportional costs. The diagonal elements of $\Lambda$ are the transaction costs for three portfolios proxied by ILLIQ as described above, multiplied by 9. This multiplication makes the transaction costs for the three portfolios comparable to a model with a single risky asset (with three times the order flow and whence nine times the quadratic costs). In particular, our estimate $\Lambda=\diag\{0.1269,0.3354,0.8595\}\times 10^{-8}$ is of the same order of magnitude as the direct estimates obtained from a proprietary database of trades in~\cite{collin2020liquidity}.\footnote{Estimating the transaction costs using the Bachelier volatilities divided by volume as implied by Kyle's model~\cite{kyle.85} gives comparable results: $(0.0785,0.2507,0.3137)^\top\times 10^{-8}$.} 

To complete the model specification, it now remains to estimate the endowment volatilities $\xi$. This is difficult, since these are not observable. As a way out, we extend the approach developed in~\cite{gonon.al.19} for a single risky asset and calibrate these parameters to time series data for trading volume. To this end, recall from~\eqref{eq: asymp phidot} that, at the leading order for small costs, the (signed) trading volume $\dot\varphi$ approximately has the Ornstein-Uhlenbeck dynamics
\begin{align*}
d\dot{\bar{\varphi}}_t&=-\kappa_1\dot{\bar{\varphi}}_tdt + \kappa_2dW_t, ~~
\text{where }\kappa_1=\sqrt{\frac{\gamma^1+\gamma^2}{2}} \Lambda^{-1}\left(\Lambda\#\alpha\alpha^\top\right),~\kappa_2=-\kappa_1\cdot \left(\alpha \alpha^\top\right)^{-1} \alpha \xi.
\end{align*}
Since $\kappa_1$ is positive definite, the stationary distribution of $\dot\varphi_t$ has the density~\cite[Section 6.5]{risken1991fokker}
\begin{align*}
p(\mathbf{x})=(2\pi)^{-D/2}	(\det\Omega)^{-1/2}\exp\left(-\frac{1}{2}\mathbf{x}^\top\Omega^{-1}\mathbf{x}\right),
\end{align*}
where $\Omega$ satisfies the algebraic Riccati equation
\begin{align*}
\kappa_1\Omega+\Omega\kappa_1^\top=\kappa_2\kappa_2^\top.
\end{align*}
By the ergodic theorem and the explicit formula for absolute moments of Gaussian distribution~\cite{nabeya1951absolute}, it follows that the long-run averages averages of the second moments of the trading volumes have the following closed-form expression:
\begin{align}\label{eq: volume moments}
\begin{split}
&\lim_{T\to\infty}\frac{1}{T}\int_0^T|(\dot\varphi_t)_i(\dot\varphi_t)_j|dt=\int_\mathbb{R}|x_ix_j|p_{ij}(x_i,x_j)dx_idx_j\\
&\quad=\frac{2(\Omega_{ii}\Omega_{jj})^{1/2}}{\pi}\Gamma(1)^2\mathcal{H}(-1/2,-1/2,1/2,\rho_{ij}^2)\text{ for }i\neq j,~~\Omega_{ii}\text{ for }i=j.
\end{split}
\end{align}
(Here, $\rho_{ij}=\Omega_{ij}/(\Omega_{ii}\Omega_{jj})^{1/2}$ and $\mathcal{H}$ is Gauss' hypergeometric function.) We use this explicit formula to calibrate the (unobservable) volatility matrices $\xi$ of the agents' endowment streams as follows. We assume that this $3\times3$ matrix is symmetric, and use an initial guess (for our numerical results, we used $-I_3\times 10^9$)\footnote{Here, negative diagonal elements produce positive liquidity premia in line with the data.} to calculate $\kappa_2$, and in turn the left hand side of \eqref{eq: volume moments} for $i,j=1,2,3,i\le j$. We then compare the result with the second moments of daily trading volume observed empirically. The parameters of the matrix $\xi$ are in turn updated using the global optimizer \texttt{GlobalSearch} in MATLAB in order to find the parameter values that match the empirical data as well as possible. The result is
    \begin{align*}
    \xi=\begin{pmatrix}
   -2.07  &  1.91  &  0.64\\
    1.91  & -1.77  & -0.59\\
    0.64  & -0.59  & -0.20\\
    \end{pmatrix} \times 10^9.
    \end{align*}
This corresponds to the second moments of daily volumes
\begin{align*}
(5.64,0.75,0.33,1.89,0.49,1.26)^\top \times 10^{17},	
\end{align*}
which are very close to the second moments of the daily volumes observed in our dataset:
\begin{align*}
(5.63,0.71,0.32,1.92,0.46,1.28)^\top \times 10^{17}.
\end{align*}
   
  With all parameters of the model specified, we can now calculate the leading-order adjustments of the equilibrium expected returns of the portfolios H, M and L due to transaction costs. To wit, equation~\eqref{eq: asymp lp} shows that the annualized (absolute) changes compared to the frictionless version of the model are $(-0.2440,-0.0074,0.0758)^\top$. After dividing by the corresponding average prices, we obtain the following adjustments of the annualized relative (Black-Scholes) returns: $(-0.5374\%,-0.0150\%,0.1979\%)^\top$. As a consequence, the expected return of the most liquid portfolio is indeed reduced, whereas the expected returns of the low liquidity portfolio is increased. When the heterogeneity parameter is chosen (somewhat arbitrarily) as $k=2$, the difference between the return adjustments is $0.74\%$ annually, substantially smaller than the difference of $2.7\%$ observed empirically.  
  
  \begin{figure}[htbp]
\centering
\includegraphics[scale=0.5]{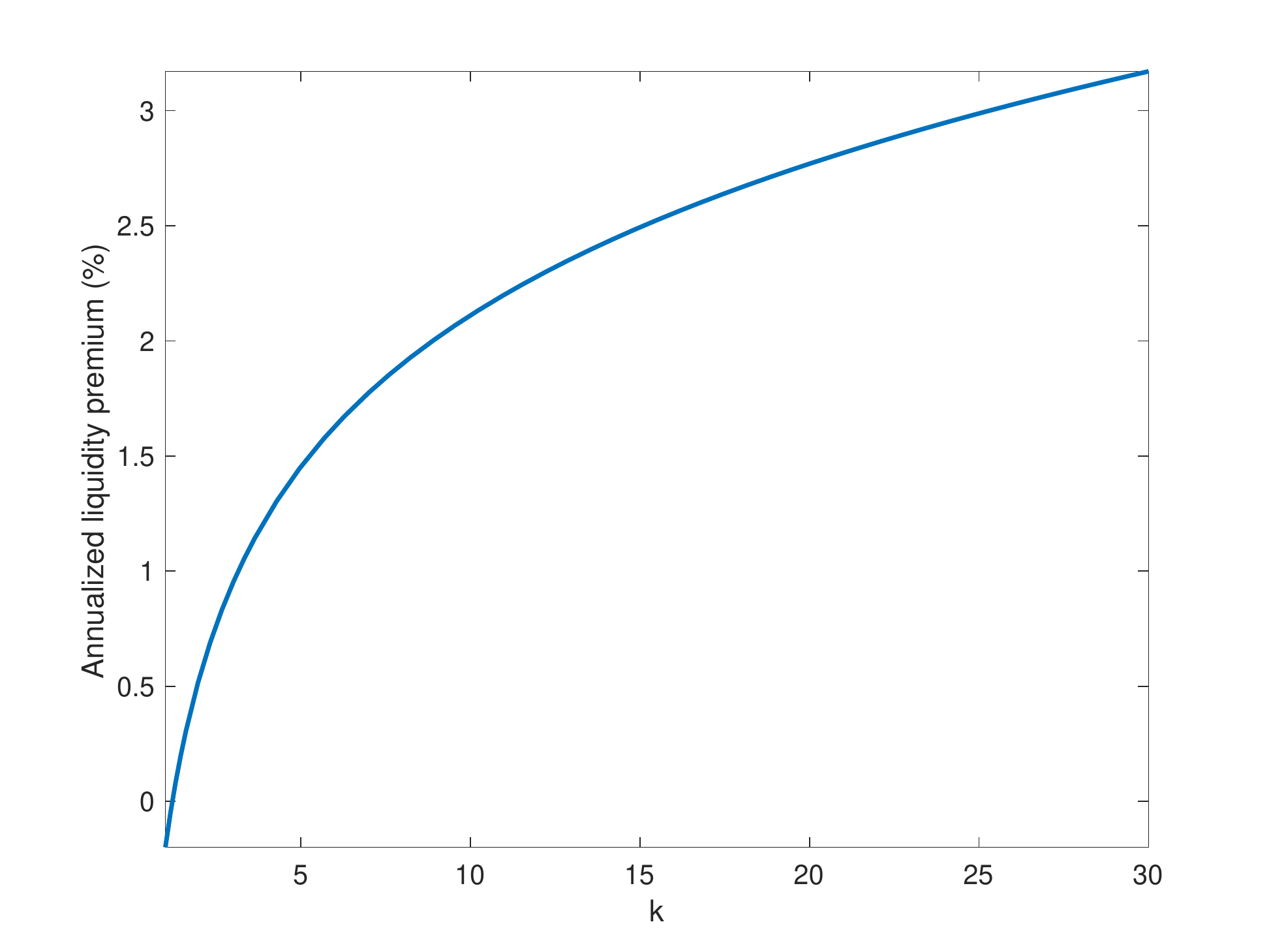}
\caption{Annualized liquidity premium (i.e., difference between the equilibrium relative returns of the L and H portfolios) plotted against the heterogeneity parameter $k$. The empirically observed liquidity premium is 2.69\%.} \label{fig:1}
\end{figure}
  
To study how this result depends on $k$, observe that~\eqref{eq: asymp lp} shows that the average return adjustments scale with $k$ by a factor of $(k-1)(k+1)^{-1/2}k^{-1/4}$. To see this, note that $\gamma^1=\bar{\gamma}(1+k)/k,\gamma^2=\bar{\gamma}(1+k)$, and thus $\kappa_1$ scales with $k$ by the factor $(1+k)k^{-1/2}$. By calibrating $\xi$ to match the same second moments of daily volumes, $\xi$ has a factor of $k^{1/4}(1+k)^{-1/2}$, and \eqref{eq: asymp lp} establishes the scaling factor of the average return adjustments. Therefore, increasing the heterogeneity $k$ increases the liquidity premium between the low and high liquidity portfolios. This is illustrated in Figure~\ref{fig:1}, which shows that to produce a realistic level of liquidity premia, our model requires a substantial level of heterogeneity in the agents' preferences. This corroborates the partial equilibrium literature on liquidity premia, which finds that additional features such as market closure~\cite{dai2016portfolio}, unobservable regime switches~\cite{chen2020incomplete}, or state-dependent transaction costs~\cite{acharya.pedersen.05,lynch.tan.11} are needed to reproduce realistic levels of liquidity premia. Incorporating these effects into a general equilibrium analysis is an important but challenging direction for future research.

\section{Proofs}\label{sec:proofs}

\subsection{Proofs for the Frictionless Version of the Model}

\begin{proof}[Proof of Proposition~\ref{thm:frictionless}]
It is readily verified that the proposed price process solves the BSDE system~\eqref{eq:BSDES}. The corresponding covariance matrix $\alpha\alpha^\top$ is invertible by assumption. Whence, each agent's individually optimal trading strategy is given by~\eqref{eq:strat}. In view of~\eqref{eq:eqnocosts}, this simplifies to
\begin{align}\label{eq:Strategiesnocosts}
\bar\varphi^n_t 
=\frac{\bar{\gamma}}{\gamma^n} s - \big(\alpha\alpha^\top\big)^{-1} \alpha\xi^n_t,   \qquad t \in [0,T].
\end{align}
In particular, these holdings are admissible because they are normally distributed. As the aggregate endowment is zero ($\sum_{n=1}^N\xi^n=0$), these strategies indeed sum to $s$ as required for market clearing.

For uniqueness, suppose there are two solutions with uniformly bounded volatilities. Then, both of these solve the BSDE with truncated (and hence globally Lipschitz) generator, and therefore coincide.
\end{proof}

\subsection{Analysis of the Riccati System}\label{proof of sec: frictional}

The crucial tool for the proof of our main result on the existence of equilibria with transaction costs is Lemma~\ref{thm:ivp}, which establishes wellposedness for the Riccati system~\eqref{eqs: Ricatti} characterizing this equilibrium. The proof of~Lemma~\ref{thm:ivp} is in turn based on a number of auxiliary estimates on matrix-valued ODEs that we develop first. 

We start with the properties of the risk-aversion matrix $\Gamma$ introduced in~\eqref{def: gamma}. Recall from~\eqref{ordered risk aversion} that, without loss of generality, agent $N$ is supposed to be the most risk-averse one.

\begin{lemma}\label{lem: Gamma psd}
The matrix $\Gamma$ is positive definite and has only positive eigenvalues.\footnote{In much of the literature, positive definite matrices are additionally required to be symmetric, because this is necessary to derive many useful properties. However, the matrix $\Gamma$ is generally not symmetric, and we in turn carry out the subsequent analysis without this convenient property. Notice that in the absence of symmetry, a square matrix with positive eigenvalues can fail to be positive definite, and a positive definite matrix can fail to have real eigenvalues.}
\end{lemma}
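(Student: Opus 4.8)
The plan is to exploit the multiplicative structure of $\Gamma$ rather than attack it entrywise. Writing $G := \diag\{\gamma^1,\ldots,\gamma^{N-1}\}$ and $D_N := \diag\{\gamma^1-\gamma^N,\ldots,\gamma^{N-1}-\gamma^N\}$, we have from~\eqref{def: gamma} that $\Gamma = G - \tfrac1N \mathbbm{1}_{N-1}\mathbbm{1}_{N-1}^\top D_N$. The key observation is that $G$ is symmetric positive definite, so the change of variables by $G^{-1/2}$ (a similarity transform, hence eigenvalue-preserving) gives
\begin{align*}
G^{-1/2}\Gamma G^{-1/2} = I_{N-1} - \tfrac1N\, G^{-1/2}\mathbbm{1}_{N-1}\mathbbm{1}_{N-1}^\top D_N G^{-1/2} =: I_{N-1} - \tfrac1N\, u v^\top,
\end{align*}
where $u = G^{-1/2}\mathbbm{1}_{N-1}$ and $v = G^{-1/2} D_N \mathbbm{1}_{N-1}$, i.e. $v$ has entries $v_i = (\gamma^i-\gamma^N)/\sqrt{\gamma^i}\le 0$ by the ordering~\eqref{ordered risk aversion}. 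So up to similarity $\Gamma$ is a rank-one perturbation of the identity, and its eigenvalues are $1$ (with multiplicity $N-2$) together with $1 - \tfrac1N v^\top u$.

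First I would prove the eigenvalue claim. Since $v^\top u = \sum_{i=1}^{N-1} v_i u_i = \sum_{i=1}^{N-1} (\gamma^i-\gamma^N)/\gamma^i = \sum_{i=1}^{N-1}(1 - \gamma^N/\gamma^i)$, and each term is $\le 0$, we get $v^\top u \le 0$, hence the remaining eigenvalue $1 - \tfrac1N v^\top u \ge 1 > 0$. Thus all eigenvalues of $\Gamma$ are positive (in fact $\ge 1$... though one should double-check whether that stronger bound is actually needed later; for the statement, positivity suffices). For positive definiteness in the sense used in the paper (i.e. $x^\top \Gamma x > 0$ for all $x\ne 0$, without assuming $\Gamma$ symmetric), note $x^\top\Gamma x = x^\top \Gamma_{\mathrm{sym}} x$ where $\Gamma_{\mathrm{sym}} = \tfrac12(\Gamma+\Gamma^\top)$, so it suffices that $\Gamma_{\mathrm{sym}}$ is positive definite. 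Conjugating by $G^{-1/2}$ as above, $G^{-1/2}\Gamma_{\mathrm{sym}} G^{-1/2} = I_{N-1} - \tfrac1{2N}(uv^\top + vu^\top)$, and I would bound the spectral norm of the rank-$\le 2$ symmetric perturbation $\tfrac1{2N}(uv^\top+vu^\top)$: its nonzero eigenvalues are $\tfrac1{2N}(v^\top u \pm \|u\|\|v\|)$, so in absolute value at most $\tfrac1{2N}(\|u\|\|v\| + |v^\top u|) \le \tfrac1{N}\|u\|\|v\|$. Plugging in $\|u\|^2 = \sum 1/\gamma^i$ and $\|v\|^2 = \sum (\gamma^i-\gamma^N)^2/\gamma^i$, I would verify this is strictly less than $1$; this is where a genuine computation enters.

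I expect the arithmetic in that last step to be the main obstacle — specifically, showing $\tfrac1N\|u\|\|v\| < 1$, i.e. $\big(\sum_{i=1}^{N-1}\tfrac1{\gamma^i}\big)\big(\sum_{i=1}^{N-1}\tfrac{(\gamma^i-\gamma^N)^2}{\gamma^i}\big) < N^2$. A convenient reduction: since $0 < \gamma^i \le \gamma^N$, write $\gamma^i = \theta_i \gamma^N$ with $\theta_i \in (0,1]$, so the inequality becomes $\big(\sum\tfrac1{\theta_i}\big)\big(\sum \theta_i(1-\theta_i)^2/\theta_i\cdot\theta_i\big)$-type expression — I would in fact bound $(\gamma^i-\gamma^N)^2/\gamma^i \le \gamma^N$ (valid since $(\gamma^i-\gamma^N)^2 \le \gamma^N(\gamma^N - \gamma^i) \le \gamma^N\gamma^i$ when $\gamma^i \le \gamma^N$... need to check), giving $\|v\|^2 \le (N-1)\gamma^N$, and similarly handle $\|u\|^2$; if the crude bound is too lossy, fall back on the exact eigenvalue expression and a direct Cauchy–Schwarz estimate. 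As a safer alternative that avoids sharp constants entirely, one can instead just prove positive \emph{semi}definiteness of $\Gamma_{\mathrm{sym}}$ is not enough — so if the clean bound fails, I would argue positive definiteness via a continuity/connectedness argument: the map $t\mapsto \det\big((1-t)I + t\Gamma_{\mathrm{sym}}\big)$ on $[0,1]$ is a polynomial that is nonzero at every $t$ because $(1-t)I + t\Gamma_{\mathrm{sym}}$ has all eigenvalues with positive real part (being a convex combination of $I$ and a matrix similar, via $G^{1/2}$, to something with positive eigenvalues)... actually cleanest is: the eigenvalues of $\Gamma_{\mathrm{sym}}$ depend continuously on the $\gamma^i$, they never hit $0$ along the path $\gamma^i \rightsquigarrow \gamma^N$ for all $i$ (where $\Gamma = \Gamma_{\mathrm{sym}} = \tfrac1N$-scaled... at $\gamma^i\equiv\gamma^N$, $\Gamma = G$ is positive definite), hence they stay positive throughout. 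I would present the explicit-computation route as primary and mention the homotopy argument as a remark.
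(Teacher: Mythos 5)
Your proposal has two genuine gaps, one in each half of the statement. For the eigenvalue claim: writing $G=\diag\{\gamma^1,\dots,\gamma^{N-1}\}$, the transform $G^{-1/2}\Gamma G^{-1/2}$ is \emph{not} a similarity transform of $\Gamma$ — it is a congruence, and since $G^{-1/2}\Gamma G^{-1/2}=G^{1/2}\bigl(G^{-1}\Gamma\bigr)G^{-1/2}$, the matrix $I_{N-1}-\tfrac1N uv^\top$ is similar to $G^{-1}\Gamma$, not to $\Gamma$. The spectrum $\{1,\dots,1,\;1-\tfrac1N v^\top u\}$ you compute is therefore the spectrum of $G^{-1}\Gamma$; the spectrum of $\Gamma$ is different (compare traces, or note that for $N=2$ the paper's $\Gamma$ equals $\tfrac{\gamma^1+\gamma^2}{2}$, which is certainly not $\ge1$ in general — your parenthetical ``in fact $\ge1$'' is a symptom of the mix-up). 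Positivity of the spectrum of $G^{-1}\Gamma$ does not transfer to $\Gamma$ by any general principle: multiplying a symmetric positive definite matrix by a matrix with positive spectrum can destroy realness of the eigenvalues. So the ``only positive eigenvalues'' part is not established. (The paper does not reprove this part either; it cites \cite[Lemma A.5]{herdegen.al.20}. A self-contained proof would have to work with $\Gamma=G-\tfrac1N\mathbbm{1}_{N-1}\bigl(\diag\{\gamma^1-\gamma^N,\dots,\gamma^{N-1}-\gamma^N\}\mathbbm{1}_{N-1}\bigr)^\top$ directly, e.g.\ through its secular equation.)

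For positive definiteness, the congruence reduction to showing $I_{N-1}-\tfrac1{2N}\bigl(uv^\top+vu^\top\bigr)\succ0$ is sound, but the inequality you plan to verify is false: $\tfrac1N\|u\|\|v\|<1$, i.e.\ $\bigl(\sum_i 1/\gamma^i\bigr)\bigl(\sum_i(\gamma^i-\gamma^N)^2/\gamma^i\bigr)<N^2$, already fails for $N=2$, $\gamma^1=1$, $\gamma^2=100$ (the left-hand side is $99^2$), and your auxiliary bound $(\gamma^i-\gamma^N)^2/\gamma^i\le\gamma^N$ fails whenever $\gamma^N>2\gamma^i$; in effect, a smallness bound on $\|u\|\,\|v\|$ re-imposes the ``similar risk aversions'' restriction that this lemma is meant to remove. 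What actually saves the matrix is the sign information you derived and then threw away: since $v^\top u\le0$, the only harmful eigenvalue of the perturbation is $\tfrac1{2N}\bigl(\|u\|\|v\|+v^\top u\bigr)$, a Cauchy--Schwarz \emph{defect}, and by Lagrange's identity $\|u\|^2\|v\|^2-(u^\top v)^2=\sum_{i<j}(\gamma^i-\gamma^j)^2/(\gamma^i\gamma^j)\le(N-2)\sum_i(\gamma^N-\gamma^i)/\gamma^i=(N-2)\,|u^\top v|$, whence $\|u\|\|v\|+v^\top u\le\tfrac{N-2}{2}<2N$. With that replacement your route does close, but it is a different computation from the one you propose. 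Your homotopy fallback is circular as written: you give no reason why the eigenvalues of $\tfrac12(\Gamma+\Gamma^\top)$ cannot cross zero along the path, and that nonvanishing is precisely what has to be proved. For comparison, the paper proves $b^\top\Gamma b>0$ by an explicit orthogonal decomposition of $b$ along $\mathbbm{1}_{N-1}$ and $\diag\{\gamma^1,\dots,\gamma^{N-1}\}\mathbbm{1}_{N-1}-\tfrac1{N-1}\sum_{n}\gamma^n\mathbbm{1}_{N-1}$, using the ordering \eqref{ordered risk aversion} directly.
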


\begin{proof}
The second part of the assertion has been established in~\cite[Lemma A.5]{herdegen.al.20}. Therefore it remains to show that $b^\top \Gamma b > 0$ for all $b\in\R^M{\setminus \{0\}}$. Observe that $\Gamma$ is a ``diagonal minus rank-1'' matrix:
\begin{align*}
\Gamma = \diag\{\gamma^1, \cdots, \gamma^{N-1}\} - \frac{1}{N} \mathbbm{1}_{N-1}\mathbbm{1}_{N-1}^\top \diag\{\gamma^1 - \gamma^N, \cdots, \gamma^{N-1} - \gamma^N\}.\tag{\ref{def: gamma}}
\end{align*}
To show that this matrix is positive definite, we define 
\begin{align}
v := \diag\{\gamma^1, \cdots, \gamma^{N-1}\} \mathbbm{1}_{N-1} - \frac{1}{N-1} \sum_{n=1}^{N-1} \gamma^n \mathbbm{1}_{N-1},
\end{align}
and observe that $v$ and $\mathbbm{1}_{N-1}$ are orthogonal:
\begin{align*}
v^\top \mathbbm{1}_{N-1} = \sum_{n=1}^{N-1} \gamma^n - \frac{1}{N-1} \sum_{n=1}^{N-1} \gamma^n \mathbbm{1}_{N-1}^\top\mathbbm{1}_{N-1} = \sum_{n=1}^{N-1} \gamma^n -\sum_{n=1}^{N-1} \gamma^n = 0.
\end{align*}
Whence, every vector $b\in\R^{N-1}$ has an orthogonal decomposition, in that there exist unique $a_1, a_v\in\R$ and $b_\perp \in\R^{N-1}$, such that
\begin{align*}
b = a_1 \mathbbm{1}_{N-1} + a_v v + b_\perp, \qquad \mbox{where } \mathbbm{1}_{N-1}^\top b_\perp = 0 =v^\top b_\perp. 
\end{align*}
With this notation, a direct calculation yields 
\begin{align*}
b^\top  \diag\{\gamma^1, \cdots, \gamma^{N-1}\} b 
& = b^\top \left(  a_1 \diag\{\gamma^1, \cdots, \gamma^{N-1}\} \mathbbm{1}_{N-1} +  \diag\{\gamma^1, \cdots, \gamma^{N-1}\} \left(a_v v + b_\perp\right)\right)
\\& = a_1 b^\top \left( v + \frac{1}{N-1} \sum_{n=1}^{N-1} \gamma^n \mathbbm{1}_{N-1}\right) + b^\top \diag\{\gamma^1, \cdots, \gamma^{N-1}\} \left(a_v v + b_\perp\right)
\\&=a_1^2\sum_{n=1}^{N-1} \gamma^n + 2a_1 a_v \|v\|^2 +\left(a_v v + b_\perp\right)^\top \diag\{\gamma^1, \cdots, \gamma^{N-1}\} \left(a_v v + b_\perp\right)
\\&\geq a_1^2\sum_{n=1}^{N-1} \gamma^n + 2a_1 a_v \|v\|^2 .
\end{align*}
(Here, we have used $\mathbbm{1}_{N-1}^\top \diag\{\gamma^1, \cdots, \gamma^{N-1}\}=(v+ \frac{1}{N-1} \sum_{n=1}^{N-1} \gamma^n \mathbbm{1}_{N-1})^\top$ in the second to last step.) Similarly, we can calculate
\begin{align*}
& \frac{1}{N} b^\top \mathbbm{1}_{N-1}\mathbbm{1}_{N-1}^\top \diag\{\gamma^1 - \gamma^N, \cdots, \gamma^{N-1} - \gamma^N\} b
 \\&\quad=  \frac{1}{N} b^\top \mathbbm{1}_{N-1}\mathbbm{1}_{N-1}^\top \diag\{\gamma^1 , \cdots, \gamma^{N-1} \} b - \frac{\gamma^N}{N} b^\top \mathbbm{1}_{N-1}\mathbbm{1}_{N-1}^\top b
\\ &\quad = \frac{1}{N} a_1 (N-1) \left(v + \frac{1}{N-1} \sum_{n=1}^{N-1} \gamma^n \mathbbm{1}_{N-1}\right)^\top b - a_1^2\frac{(N-1)^2}{N} \gamma^N
 \\&\quad = \frac{N-1}{N} a_1 \left(a_v \|v\|^2 + a_1\sum_{n=1}^{N-1} \gamma^n  \right) - {a_1^2}\frac{(N-1)^2}{N} \gamma^N
 \\&\quad = \frac{N-1}{N} \left(a_1 a_v \|v\|^2 + a_1^2 \sum_{n=1}^{N-1} (\gamma^n - \gamma^N) \right).
\end{align*}
As $ \diag\{\gamma^1, \cdots, \gamma^{N-1}\}$ is positive definite and $N \geq 2$, these two identities lead to the estimate
\begin{align*}
b^\top \Gamma b
&= b^\top  \diag\{\gamma^1, \cdots, \gamma^{N-1}\} b - \frac{1}{N} b^\top \mathbbm{1}_{N-1}\mathbbm{1}_{N-1}^\top \diag\{\gamma^1 - \gamma^N, \cdots, \gamma^{N-1} - \gamma^N\} b
\\&{>} \frac{N-1}{2N} b^\top  \diag\{\gamma^1, \cdots, \gamma^{N-1}\} b -  \frac{1}{N} b^\top \mathbbm{1}_{N-1}\mathbbm{1}_{N-1}^\top \diag\{\gamma^1 - \gamma^N, \cdots, \gamma^{N-1} - \gamma^N\} b
\\&\geq \frac{N-1}{2N} \left( a_1^2\sum_{n=1}^{N-1} \gamma^n + 2a_1 a_v \|v\|^2 \right)- \frac{N-1}{N} \left(a_1 a_v \|v\|^2 + a_1^2 \sum_{n=1}^{N-1} (\gamma^n - \gamma^N) \right)
\\& \geq  \frac{N-1}{2N}  a_1^2 \left(\sum_{n=1}^{N-1} \gamma^n + 2 \sum_{n=1}^{N-1} (\gamma^N - \gamma^n) \right)>0,
\end{align*}
where we have taken into account~\eqref{ordered risk aversion} in the last step. Whence, $\Gamma$ is indeed positive definite. 
\end{proof}

For later use,
we recall the definition of the operator norm, in which we will express our estimates for matrix ODEs below:
\begin{definition}\label{def:opnomr}
The \emph{operator norm} of an $M_1 \times M_2$ matrix $A$ is defined by 
\begin{align*}
\|A\|_{\mathrm{op}} :=\sup\{ \|Ab\| : b\in\R^{M_2}, \|b\| = 1\} .
\end{align*}
\end{definition}

\begin{remark}\label{properties of operator norm}
For the convenience of the reader, let us summarize the properties of the operator norm and the Frobenius norm from~\cite[Chapter 5]{horn2012matrix} and the properties of Kronecker product from~\cite[Chapter 2]{steeb1997matrix}. that we will use repeatedly and without further mention below: 
\begin{enumerate}
\item $\|A\|_{\mathrm{op}} = \|A^\top\|_{\mathrm{op}} = \frac{1}{2} \|A +A^\top\|_{\mathrm{op}}$. 
\item The operator norm is submultiplicative in that, for an $M_1 \times M_2$ matrix $A$ and an $M_2\times M_3$ matrix $B$, 
$$
\|AB\|_{\mathrm{op}} \leq \|A\|_{\mathrm{op}} \|B\|_{\mathrm{op}}. 
$$
\item For an $M_1 \times M_2$ matrix $A$, the corresponding operator norm and Frobenius norm are related by
$$
\|A\|_{\mathrm{op}}\leq \|A\| \leq \sqrt{M_1+M_2} \|A\|_{\mathrm{op}}. 
$$
\item For the Kronecker product of two matrices (of arbitrary dimension), we have
$$
\|A\otimes B\|_{\mathrm{op}} = \|A\|_{\mathrm{op}}\|B\|_{\mathrm{op}}.
$$
\item The transpose of the Kronecker product satisfies: 
$$(A\otimes B)^\top = A^\top \otimes B^\top.$$
\item Bilinearity and associativity of Kronecker products:
\begin{align*}
A\otimes (B+C) &= A\otimes B +A\otimes C,\\
(B+C)\otimes A & = B\otimes A + C\otimes A, \\
A\otimes 0 &= 0\otimes A = 0. 
\end{align*}
\item The mixed-product property of Kronecker products: for matrices $A$, $B$, $C$ and $D$ of appropriate dimensions,
$$
(A\otimes B)(C\otimes D) = AC \otimes BD. 
$$
\end{enumerate}
\end{remark}

We now verify that the Kronecker product preserves positive-semidefiniteness as long as its second argument is also symmetric:

\begin{lemma}\label{otimes psd}
If matrices $A$, $B$ are positive semidefinite and $B$ is symmetric, then the Kronecker product $A\otimes B$ is also positive semidefinite. 
\end{lemma}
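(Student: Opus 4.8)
The plan is to reduce the claim about the Kronecker product $A \otimes B$ to the corresponding statement about quadratic forms, exploiting the fact that $B$ is symmetric and positive semidefinite so that it admits a spectral decomposition with nonnegative eigenvalues. First I would write $B = \sum_{k} \lambda_k u_k u_k^\top$ with $\lambda_k \geq 0$ and $\{u_k\}$ an orthonormal basis, or equivalently $B = Q^\top D Q$ with $Q$ orthogonal and $D = \diag\{\lambda_1, \ldots, \lambda_m\}$ having nonnegative entries. An even slicker route: since $B$ is symmetric positive semidefinite it has a symmetric positive semidefinite square root $B^{1/2}$, so $B = B^{1/2} B^{1/2} = (B^{1/2})^\top B^{1/2}$.

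The key computation is then to take an arbitrary vector $x$ in the appropriate product space $\R^{m m'}$ (where $A \in \R^{m\times m}$ and $B \in \R^{m' \times m'}$, say) and show $x^\top (A \otimes B) x \geq 0$. Using $B = (B^{1/2})^\top B^{1/2}$ and the mixed-product property from Remark~\ref{properties of operator norm}, one has $A \otimes B = A \otimes \big((B^{1/2})^\top B^{1/2}\big) = (I_m \otimes B^{1/2})^\top \, (A \otimes I_{m'}) \, (I_m \otimes B^{1/2})$, where I have used $A = I_m A I_m$ and $(A\otimes B)^\top = A^\top \otimes B^\top$ together with the symmetry of $B^{1/2}$. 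Hence $x^\top (A \otimes B) x = y^\top (A \otimes I_{m'}) y$ with $y := (I_m \otimes B^{1/2}) x$. So it suffices to know that $A \otimes I_{m'}$ is positive semidefinite whenever $A$ is. That in turn follows by writing $y$ in block form $y = (y_1^\top, \ldots, y_m^\top)^\top$ with each $y_i \in \R^{m'}$, and computing $y^\top (A \otimes I_{m'}) y = \sum_{i,j} A_{ij}\, y_i^\top y_j$; fixing any unit vector $e \in \R^{m'}$ and noting this is $\sum_{i,j} A_{ij} \langle y_i, y_j\rangle$, one can diagonalize by expanding each $y_i$ in an orthonormal basis of $\R^{m'}$ to get a sum of terms each of the form $z^\top A z \geq 0$ with $z$ the vector of corresponding coordinates.

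I do not expect any genuine obstacle here — the statement is standard linear algebra and the only subtlety is bookkeeping with Kronecker indices, which the properties collected in Remark~\ref{properties of operator norm} make painless. The one point worth being careful about is that symmetry of $B$ (not $A$) is exactly what is needed: it is what lets us write $B = (B^{1/2})^\top B^{1/2}$ with $B^{1/2}$ symmetric, so that the conjugating matrix $I_m \otimes B^{1/2}$ is itself symmetric and the factorization $A \otimes B = (I_m \otimes B^{1/2})^\top (A \otimes I_{m'})(I_m \otimes B^{1/2})$ genuinely exhibits $A \otimes B$ as a "congruence" of $A \otimes I_{m'}$; no symmetry of $A$ is used anywhere, consistent with the fact that in this paper $A$ will typically be the non-symmetric matrix $\Gamma$ (or a Kronecker factor built from it). I would present the proof in three or four short lines: invoke the square root of $B$, state the factorization via the mixed-product property, reduce to $A \otimes I_{m'} \succeq 0$, and dispatch the latter by the block computation above.
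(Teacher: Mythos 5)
Your proof is correct, and it takes a genuinely different route from the paper's. The paper first symmetrizes, writing $A\otimes B+(A\otimes B)^\top=(A+A^\top)\otimes B$ (using symmetry of $B$), then invokes orthogonal diagonalizations $A+A^\top=PD_AP^\top$ and $B=QD_BQ^\top$ and the mixed-product property to exhibit this symmetrization as $(P\otimes Q)(D_A\otimes D_B)(P\otimes Q)^\top$ with $D_A\otimes D_B$ diagonal and nonnegative, which gives the claim since $x^\top Mx=\tfrac12 x^\top(M+M^\top)x$. You instead leave $A\otimes B$ unsymmetrized and use the symmetric square root of $B$ to write $A\otimes B=(I_m\otimes B^{1/2})^\top(A\otimes I_{m'})(I_m\otimes B^{1/2})$, reducing everything to the positive semidefiniteness of $A\otimes I_{m'}$, which you then check by the elementary block computation $y^\top(A\otimes I_{m'})y=\sum_{i,j}A_{ij}\,y_i^\top y_j=\sum_k z_{\cdot k}^\top A\,z_{\cdot k}\ge 0$ (the parenthetical about "fixing a unit vector $e$" is vestigial and can be dropped, but the coordinate expansion itself is fine). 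Both arguments hinge on the same two ingredients -- the mixed-product property and the symmetry of $B$ -- and, correctly, neither uses symmetry of $A$, which matters because the relevant $A$ in the paper is the non-symmetric $\Gamma$. What your route buys is that it avoids spectral decompositions altogether (any factorization $B=C^\top C$ would do in place of $B^{1/2}$) and works directly with the quadratic form of $A\otimes B$; what the paper's route buys is that once both factors are reduced to nonnegative diagonal matrices, the conclusion is immediate without the separate verification that $A\otimes I_{m'}$ is positive semidefinite.
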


\begin{proof}
Notice that $A + A^\top$ is symmetric positive semidefinite. Thus, $A + A^\top$ and $B$ are both diagonalizable, in that there exist orthogonal matrices $P$, $Q$ and diagonal matrices $D_A$, $D_B$ such that 
\begin{align*}
P D_A P^\top = A+A^\top , \qquad 
Q D_B Q^\top = B = B^\top. 
\end{align*}
Here, the diagonal elements of $D_A$ and $D_B$ are the eigenvalues of $A+A^\top$ and $B$, respectively. These are all nonnegative because these matrices are both positive semidefinite \emph{and} symmetric.
As a consequence, the Kronecker product $D_A \otimes D_B$ is also diagonal with nonnegative diagonal elements; in particular, it is also positive semidefinite. It follows that $A\otimes B + \left( A\otimes B\right)^\top $ is also positive semidefinite, because the symmetry of B and the properties of the Kronecker product allow us to rewrite this matrix as
\begin{align*}
 A\otimes B + \left( A\otimes B\right)^\top 
  &= A\otimes B+A^\top \otimes B^\top
\\&= A\otimes B+A^\top \otimes B 
\\&= \left(A+A^\top\right) \otimes B 
\\&= \left(P D_A P^\top\right) \otimes \left(Q D_B Q^\top\right)
\\&=\left(P \otimes Q\right) \left(D_A \otimes D_B\right) \left(P^\top \otimes Q^\top\right)
\\&=\left(P \otimes Q\right) \left(D_A \otimes D_B\right) \left(P \otimes Q\right) ^\top,
\end{align*}
and $P$, $Q$ are orthogonal matrices. Whence, the matrix $A\otimes B$ is also positive semidefinite. 
\end{proof}

With this toolbox, we now establish some properties of \emph{linear} matrix ODEs that will be used below to bound the Riccati system~\eqref{eqs: Ricatti}.

\begin{lemma}\label{psd}
{Let $A: \mathbb{R}_+ \to \mathbb{R}^{M \times M}$ be a continuous function with $A(0)=0$}, and let $Y$ be the unique solution~\cite[Theorem 2.4, Definition 2.12]{chicone2006ordinary} 
of the linear matrix ODE
\begin{equation}\label{eq:matrix eq}
Y'(\tau) = A(\tau)Y(\tau), \quad Y (0) = I_{M}. 
\end{equation}
Suppose that $Y''(\tau) = B(\tau)Y(\tau)$, where $B(\tau)$ is positive semidefinite for all $\tau\geq 0$. Then the matrix $A(\tau)$ is positive semidefinite for all $\tau\geq 0$ as well. 
\end{lemma}

\begin{proof}
Differentiation and the ODE~\eqref{eq:matrix eq} give
\begin{align}\label{first derivative}
\left(Y^\top(\tau) Y(\tau)\right)'
&=  \left(Y'(\tau)\right)^\top Y(\tau) + Y^\top (\tau) Y'(\tau)
=Y^\top(\tau) \left(A^\top(\tau) +A(\tau)\right) Y(\tau)
\end{align}
and, in turn,
\begin{align*}
  \left(Y^\top(\tau) Y(\tau)\right)''
  &= \left(Y''(\tau)\right)^\top Y(\tau)  + 2 \left(Y'(\tau)\right)^\top Y'(\tau)+ Y^\top (\tau) Y''(\tau)
\\&= Y^\top(\tau) \left(B^\top (\tau) + B(\tau) + 2 A^\top (\tau) A(\tau) \right) Y(\tau). 
\end{align*}
For every $b\in\R^M$,  we thus have
\begin{align}\label{second derivative}
  \left(b^\top Y^\top(\tau) Y(\tau) b \right)''
  &= b^\top Y^\top(\tau) \left(B^\top (\tau) + B(\tau) + 2 A^\top (\tau) A(\tau) \right) Y(\tau)b\notag
\\&= \left(Y(\tau)b \right)^\top \left(B^\top (\tau) + B(\tau) + 2 A^\top (\tau) A(\tau) \right) Y(\tau)b \geq 0,
\end{align}
because $B(\tau)$, $B^\top(\tau)$ and $A^\top (\tau) A(\tau)$ are all positive semidefinite. Thus $\tau \mapsto \left(b^\top Y^\top(\tau) Y(\tau) b\right)' $ is increasing on $\mathbb{R}_+$ and~\eqref{first derivative} in turn yields 
\begin{align}
2 b^\top Y^\top(\tau) A(\tau) Y(\tau)b 
&= b^\top Y^\top(\tau) \left(A^\top(\tau) +A(\tau)\right) Y(\tau) b \notag\\
& = \left(b^\top Y^\top(\tau) Y(\tau) b\right)' \notag\\
&\geq \left(b^\top Y^\top(0) Y(0) b\right)' \notag\\
&= b^\top Y^\top(0) \left(A^\top(0) +A(0)\right) Y(0) b=0. \label{eq:mon}
\end{align}
By Liouville's formula~\cite[Proposition 2.18]{chicone2006ordinary}, $Y(\tau)$ is invertible for every $\tau\geq 0$. Hence, for every $b\in\R^M$,
\begin{align*}
b^\top A(\tau) b = (Y^{-1}(\tau)b)^\top Y^\top(\tau) A(\tau) Y(\tau) Y^{-1}(\tau) b \geq 0.
\end{align*}
$A(\tau)$ therefore is indeed positive semidefinite for every $\tau\geq 0$. 
\end{proof}

\begin{lemma}\label{operator norm}
Let $Y$ be the unique solution 
of the linear matrix ODE~\eqref{eq:matrix eq}. If $\tau \mapsto A(\tau)$ is continuous  and $A(\tau)$ is positive semidefinite for every $\tau\geq 0$, then $\tau \mapsto \|Y(\tau)\|_{\mathrm{op}}$ is increasing.
\end{lemma}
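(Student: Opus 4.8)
The plan is to reduce the statement about $\|Y(\tau)\|_{\mathrm{op}}$ to a pointwise monotonicity statement for scalar quadratic forms. Recall that for any matrix $Y$ one has $\|Y\|_{\mathrm{op}}^2 = \sup_{\|b\|=1}\|Yb\|^2 = \sup_{\|b\|=1} b^\top Y^\top Y b$ (the matrix $Y^\top Y$ being symmetric and positive semidefinite, its largest eigenvalue coincides with the maximal Rayleigh quotient). Hence it suffices to prove that for each fixed unit vector $b\in\R^M$ the scalar function $g_b(\tau):=b^\top Y^\top(\tau)Y(\tau)b$ is nondecreasing on $\mathbb{R}_+$: the supremum over $b$ of a family of nondecreasing functions is again nondecreasing, and taking square roots preserves monotonicity.

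For the pointwise statement I would differentiate $g_b$, using the continuity of $A$ (which guarantees $Y\in C^1$) together with the ODE~\eqref{eq:matrix eq}, exactly as in the computation leading to~\eqref{first derivative}:
\begin{align*}
g_b'(\tau) = b^\top Y^\top(\tau)\left(A^\top(\tau)+A(\tau)\right)Y(\tau)b = \left(Y(\tau)b\right)^\top\left(A(\tau)+A^\top(\tau)\right)\left(Y(\tau)b\right).
\end{align*}
Since $A(\tau)$ is positive semidefinite, so is $A^\top(\tau)$, and therefore $A(\tau)+A^\top(\tau)$ is symmetric positive semidefinite; consequently $g_b'(\tau)\geq 0$ for every $\tau\geq 0$, so $g_b$ is nondecreasing.

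To conclude, fix $0\leq \tau_1\leq \tau_2$. For every unit vector $b$ we then have $g_b(\tau_1)\leq g_b(\tau_2)\leq \sup_{\|b'\|=1}g_{b'}(\tau_2)=\|Y(\tau_2)\|_{\mathrm{op}}^2$; taking the supremum over $b$ on the left-hand side gives $\|Y(\tau_1)\|_{\mathrm{op}}^2\leq\|Y(\tau_2)\|_{\mathrm{op}}^2$, and hence $\|Y(\tau_1)\|_{\mathrm{op}}\leq\|Y(\tau_2)\|_{\mathrm{op}}$. I do not expect any genuine obstacle here; the only points requiring a word of care are the identity $\|Y\|_{\mathrm{op}}^2=\sup_{\|b\|=1}b^\top Y^\top Y b$ and the elementary fact that a pointwise supremum of monotone functions is monotone.
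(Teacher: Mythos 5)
Your proposal is correct and follows essentially the same route as the paper: both arguments differentiate the quadratic form $b^\top Y^\top(\tau)Y(\tau)b$ via the identity \eqref{first derivative}, use positive semidefiniteness of $A(\tau)+A^\top(\tau)$ to conclude it is nondecreasing for each fixed $b$, and then pass to the supremum over unit vectors to obtain monotonicity of $\|Y(\tau)\|_{\mathrm{op}}$. The only cosmetic difference is that you make the reduction to pointwise monotone scalar functions explicit, whereas the paper cites the derivative computation from its preceding lemma directly.
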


\begin{proof}
For $b\in\R^M$ with $\|b\| = 1$ and $\tau\geq r\geq 0$,~\eqref{first derivative} and~\eqref{eq:mon} imply 
\begin{align*}
\|Y(\tau)b\|^2 = b^\top Y^\top(\tau) Y(\tau) b \geq b^\top Y^\top(r) Y(r) b = \|Y(r) b \|^2\geq 0.
\end{align*}
As a consequence, the operator norm of $Y(\tau)$ is indeed increasing in $\tau$:
\begin{align*}
\|Y(\tau)\|_{\mathrm{op}} = \sup \{\|Y(\tau)b\|: \|b\| = 1\} \geq \sup \{\|Y(r)b\|: \|b\| = 1\}  = \|Y(r)\|_{\mathrm{op}}. 
\end{align*}
\end{proof}

\begin{corollary}\label{F psd}
Let $(F,H)$ be the unique \emph{local} solution of the Riccati system~\eqref{eqs: Ricatti} on its maximal interval of existence $[0,T_{\max})$.  Then $F(\tau)$ is positive semidefinite for every $\tau\in[0,T_{\max})$. 
\end{corollary}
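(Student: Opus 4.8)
The plan is to reduce the (nonlinear) Riccati equation for $F$ to the linear setting of Lemma~\ref{psd} by a symmetrizing change of variables that absorbs the transaction-cost factor $\Lambda^{-1}$. Write $L := I_{N-1}\otimes\Lambda^{-1/2}$, which is symmetric and positive definite, with $L^2 = I_{N-1}\otimes\Lambda^{-1}$ and $L^{-1} = I_{N-1}\otimes\Lambda^{1/2}$ also symmetric. Set $\tilde F := L F L$ on $[0,T_{\max})$. Conjugating the first equation in~\eqref{eqs: Ricatti} by $L$ and using $L F L^2 F L = (L F L)(L F L)$ yields
\[
\tilde F' = \tilde G - \tilde F^2, \qquad \tilde F(0) = 0,
\]
where $\tilde G(\tau) := L\bigl(\Gamma\otimes(\alpha+(c\otimes I_K)^\top H(\tau))(\alpha+(c\otimes I_K)^\top H(\tau))^\top\bigr)L$.

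First I would note that $\tilde G(\tau)$ is positive semidefinite for every $\tau$: the matrix $(\alpha+(c\otimes I_K)^\top H)(\alpha+(c\otimes I_K)^\top H)^\top$ is symmetric and positive semidefinite, $\Gamma$ is positive definite by Lemma~\ref{lem: Gamma psd}, so their Kronecker product is positive semidefinite by Lemma~\ref{otimes psd}, and conjugation by the symmetric matrix $L$ preserves this.

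Next I would introduce the auxiliary linear matrix ODE $Y'(\tau) = \tilde F(\tau) Y(\tau)$, $Y(0) = I_{K(N-1)}$. Since $\tilde F$ is continuous on $[0,T_{\max})$, this has a unique solution there, invertible by Liouville's formula. Differentiating again and using the Riccati identity for $\tilde F$ gives $Y'' = (\tilde F' + \tilde F^2)Y = \tilde G Y$. Thus $\tilde F$ is continuous with $\tilde F(0)=0$, and $Y'' = \tilde G Y$ with $\tilde G$ positive semidefinite, so Lemma~\ref{psd}---whose monotonicity argument only uses the ODE on the interval up to the point in question, hence applies on $[0,\tau]$ for each $\tau<T_{\max}$---shows that $\tilde F(\tau) = L F(\tau) L$ is positive semidefinite. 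Since $L^{-1}$ is symmetric, $b^\top F(\tau) b = (L^{-1}b)^\top \tilde F(\tau)(L^{-1}b) \ge 0$ for all $b$, so $F(\tau)$ is positive semidefinite, which is the assertion.

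I do not expect a real obstacle: the required tools (Lemmas~\ref{lem: Gamma psd}, \ref{otimes psd}, \ref{psd}) are already in hand. The only points needing mild care are the conjugation by $L$---working with $F$ directly would leave the terms $-FL^2F + F^2$ in $Y''$, which need not be a positive semidefinite matrix times $Y$, whereas the conjugation exactly cancels them---and the harmless mismatch between the interval $[0,T_{\max})$ and the $\mathbb{R}_+$-domain in the statement of Lemma~\ref{psd}.
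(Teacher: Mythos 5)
Your proof is correct and follows essentially the same route as the paper: conjugating by $I_{N-1}\otimes\Lambda^{-1/2}$, introducing the auxiliary linear matrix ODE whose generator is the conjugated $F$, computing $Y''=\tilde G Y$ via the Riccati equation, and invoking Lemmas~\ref{lem: Gamma psd}, \ref{otimes psd} and \ref{psd} before undoing the conjugation. The only cosmetic difference is that the paper writes the generator as $(I_{N-1}\otimes\Lambda^{-1/2})F^\top(I_{N-1}\otimes\Lambda^{-1/2})$ rather than your $LFL$, which changes nothing of substance.
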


\begin{proof}
First, recall that $\Lambda$ and $\Lambda^{1/2}$ are both symmetric and positive definite, and hence also invertible. 
Let $\Phi_F$ be the solution (on $[0,T_{\max})$) of the linear matrix ODE
\begin{equation}
\Phi_F'(\tau) = \left(I_{N-1}\otimes \Lambda^{-1/2}\right) F^\top(\tau) \left(I_{N-1}\otimes \Lambda^{-1/2}\right) \Phi_F(\tau), \quad  \Phi_F (0) = I_{K(N-1)}.
\label{eq:matrix}
\end{equation}
Differentiation of this matrix function, the linear ODE~\eqref{eq:matrix} for $\Phi_F$, and the Riccati equation~\eqref{eqs: Ricatti} for $F$ imply
\begin{align*}
\Phi_F''
&=  \left(I_{N-1}\otimes \Lambda^{-1/2}\right)\left( F'\right)^\top \left(I_{N-1}\otimes \Lambda^{-1/2}\right)\Phi_F+  \left(I_{N-1}\otimes \Lambda^{-1/2}\right) F^\top \left(I_{N-1}\otimes \Lambda^{-1/2}\right) \Phi_F'\\
&= \left(I_{N-1}\otimes \Lambda^{-1/2}\right)\left( F' + F \left(I_{N-1}\otimes \Lambda^{-1}\right) F\right)^\top \left(I_{N-1}\otimes \Lambda^{-1/2}\right) \Phi_F\\
&= \left(I_{N-1}\otimes \Lambda^{-1/2}\right)\left( \Gamma\otimes \left(\alpha+ \left({c} \otimes I_K\right)^\top H \right)\left(\alpha+ \left({c} \otimes I_K\right)^\top H \right)^\top\right)^\top \left(I_{N-1}\otimes \Lambda^{-1/2}\right)\Phi_F.
\end{align*}
The matrix $\Gamma\otimes [(\alpha+ (c \otimes I_K)^\top H(\tau))(\alpha+({c} \otimes I_K)^\top H(\tau))^\top]$ is positive semidefinite by Lemmas~\ref{otimes psd} and~\ref{lem: Gamma psd}. As $\Lambda$ and in turn also $I_{N-1}\otimes \Lambda^{-1/2}$ are symmetric and positive definite, it follows that $\left(I_{N-1}\otimes \Lambda^{-1/2}\right)\left(\Gamma\otimes [(\alpha+ (c \otimes I_K)^\top H(\tau))(\alpha+({c} \otimes I_K)^\top H(\tau))^\top]\right)\left(I_{N-1}\otimes \Lambda^{-1/2}\right)^\top$ is also positive semi-definite for every $\tau\in[0,T_{\text{max}})$.  Together with Lemma~\ref{psd}, it follows that the matrix 
\begin{equation}
\left(I_{N-1}\otimes \Lambda^{-1/2}\right) F^\top(\tau) \left(I_{N-1}\otimes \Lambda^{-1/2}\right) \quad \mbox{is positive semidefinite} \label{eq:psd}
\end{equation}
for every $\tau\in[0,T_{\text{max}})$ as well. The assertion now follows from~\eqref{eq:psd} and the identity
\begin{align*}
 F^\top(\tau) 
 &= \left(I_{N-1}\otimes \Lambda^{1/2}\right) \left(I_{N-1}\otimes \Lambda^{-1/2}\right) F^\top(\tau) \left(I_{N-1}\otimes \Lambda^{-1/2}\right)\left(I_{N-1}\otimes \Lambda^{1/2}\right)
\\&=   \left(I_{N-1}\otimes \Lambda^{1/2}\right) \left(I_{N-1}\otimes \Lambda^{-1/2}\right) F^\top(\tau) \left(I_{N-1}\otimes \Lambda^{-1/2}\right)\left(I_{N-1}\otimes \Lambda^{1/2}\right)^\top. 
\end{align*}
\end{proof}

\begin{corollary}\label{Psi norm}
With the solution $\Phi_F$ of the linear matrix ODE~\eqref{eq:matrix}, define 
\begin{align}\label{eq:state}
\Psi_F (r;\tau) = \Phi_F(r)\Phi_F^{-1}(\tau), \qquad r, \tau \in[0,T_{\max}{)}.
\end{align}
 Then $\|\Psi_F (r;\tau) \|_{\mathrm{op}} \leq1$ for every $0\leq r \leq \tau < T_{\max}$. 
\end{corollary}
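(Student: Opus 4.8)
The plan is to deduce the bound from the monotonicity estimate for linear matrix ODEs with positive semidefinite coefficient that is already implicit in the proofs of Lemmas~\ref{psd} and~\ref{operator norm}. First I would recall from the proof of Corollary~\ref{F psd} that the coefficient of the linear ODE~\eqref{eq:matrix}, namely $A(\tau):=\left(I_{N-1}\otimes\Lambda^{-1/2}\right)F^\top(\tau)\left(I_{N-1}\otimes\Lambda^{-1/2}\right)$, is positive semidefinite for every $\tau\in[0,T_{\max})$ --- this is precisely~\eqref{eq:psd} --- and that $A(0)=0$ since $F(0)=0$. By Liouville's formula, as in the proof of Lemma~\ref{psd}, $\Phi_F(\tau)$ is invertible on $[0,T_{\max})$, so $\Psi_F(r;\tau)=\Phi_F(r)\Phi_F^{-1}(\tau)$ is well defined.

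The key point is that for \emph{every} vector $b\in\R^{K(N-1)}$ (not merely unit vectors), the scalar function $\tau\mapsto \|\Phi_F(\tau)b\|^2 = b^\top\Phi_F^\top(\tau)\Phi_F(\tau)b$ is non-decreasing on $[0,T_{\max})$. Indeed, differentiating and using~\eqref{eq:matrix} gives
$$
\frac{d}{d\tau}\Big(b^\top\Phi_F^\top(\tau)\Phi_F(\tau)b\Big) = \big(\Phi_F(\tau)b\big)^\top\big(A^\top(\tau)+A(\tau)\big)\big(\Phi_F(\tau)b\big)\ge 0,
$$
because $A(\tau)$ positive semidefinite implies $A(\tau)+A^\top(\tau)$ positive semidefinite (even without symmetry of $A$, since $x^\top(A+A^\top)x=2x^\top Ax\ge 0$ for all real $x$). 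This is the same computation that underlies~\eqref{second derivative}--\eqref{eq:mon} in the proof of Lemma~\ref{psd}.

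Finally, fix $0\le r\le\tau<T_{\max}$ and an arbitrary $v\in\R^{K(N-1)}$, and set $b:=\Phi_F^{-1}(\tau)v$, so that $v=\Phi_F(\tau)b$. Applying the monotonicity just established between times $r$ and $\tau$ yields
$$
\|\Psi_F(r;\tau)v\|^2 = \|\Phi_F(r)b\|^2 \le \|\Phi_F(\tau)b\|^2 = \|v\|^2.
$$
Taking the supremum over $v$ with $\|v\|=1$ gives $\|\Psi_F(r;\tau)\|_{\mathrm{op}}\le 1$, as claimed. I do not anticipate a serious obstacle: the only points requiring care are that the monotonicity must be invoked for general $b$ rather than unit vectors (so that the substitution $b=\Phi_F^{-1}(\tau)v$ can absorb an arbitrary target $v$), and that the invertibility of $\Phi_F$ --- needed both for the definition of $\Psi_F$ and for this substitution --- is supplied by Liouville's formula.
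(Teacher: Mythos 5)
Your argument is correct and is essentially the paper's proof: both rest on the positive semidefiniteness of $\left(I_{N-1}\otimes \Lambda^{-1/2}\right) F^\top \left(I_{N-1}\otimes \Lambda^{-1/2}\right)$ from~\eqref{eq:psd}, the resulting monotonicity of $\tau\mapsto\|\Phi_F(\tau)b\|$ (the computation behind Lemma~\ref{operator norm}), and Liouville's formula for the invertibility of $\Phi_F$. The only cosmetic difference is that the paper invokes Lemma~\ref{operator norm} for $r\mapsto\Psi_F(r;\tau)$ directly, while you re-derive the monotonicity for $\Phi_F$ and absorb the factor $\Phi_F^{-1}(\tau)$ via the substitution $b=\Phi_F^{-1}(\tau)v$, which amounts to the same estimate.
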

\begin{proof}
By the ODE for $\Phi_F(r)$, we have 
\begin{align*}
\frac{\partial}{\partial r} \Psi_F(r;\tau) &= \left(I_{N-1}\otimes \Lambda^{-1/2}\right) F^\top(r) \left(I_{N-1}\otimes \Lambda^{-1/2}\right) \Psi_F(r;\tau).
\end{align*}
In view of Lemma~\ref{otimes psd} and~\eqref{eq:psd}, $\left(I_{N-1}\otimes \Lambda^{-1/2}\right) F^\top(r) \left(I_{N-1}\otimes \Lambda^{-1/2}\right) $ is positive semidefinite for every $r\in[0,T_{\max})$. 
Lemma~\ref{operator norm} in turn yields  
\begin{align*}
\| \Psi_F(r;\tau)\|_{\mathrm{op}} \leq \| \Psi_F(\tau;\tau)\|_{\mathrm{op}} = \| \Phi_F(\tau)\Phi_F^{-1}(\tau)\|_{\mathrm{op}} = 1,
\end{align*} 
for every $0\leq r \leq \tau\leq T_{\max}$, as asserted.
\end{proof}

After the above preparations, we now turn to the proof of Lemma~\ref{thm:ivp}.

\begin{proof}[Proof of Lemma~\ref{thm:ivp}]
We show that the local solution $(F,H)$ of the Riccati equation is in fact a global solution because it remains bounded on any finite time interval (so that $T_{\max}=\infty$). 

To this end, first observe that the ODEs~\eqref{eqs: Ricatti} for $F$ and~\eqref{eq:matrix} for {$\Phi_F$} give
\begin{align*}
\Big( \Phi_F^\top (I_{N-1}\otimes \Lambda^{-1/2})  F\Big)'
& = \left( \Phi_F'\right)^\top \left(I_{N-1}\otimes \Lambda^{-1/2}\right)  F +  \Phi_F^\top  \left(I_{N-1}\otimes \Lambda^{-1/2}\right)  F'\\
& =  \Phi_F^\top\left(I_{N-1}\otimes \Lambda^{-1/2}\right) \left(F\left(I_{N-1}\otimes \Lambda^{-1}\right)F +   F'\right)\\
& =  \Phi_F^\top\left(I_{N-1}\otimes \Lambda^{-1/2}\right) \Gamma\otimes \left[(\alpha+ ({c} \otimes I_K)^\top H )(\alpha+(c\otimes I_K)^\top H)^\top\right]\\
& = \Phi_F^\top \left(\Gamma\otimes \Lambda^{-1/2}\left(\alpha+ \left({c} \otimes I_K\right)^\top H \right)\left(\alpha+ \left({c}\otimes I_K\right)^\top H \right)^\top\right) .
\end{align*}
Together with $F(0)=0$, it follows that 
\begin{align}
&\quad  \Phi_F^\top(\tau)  \left(I_{N-1}\otimes \Lambda^{-1/2}\right)  F(\tau) \label{eq:prod}\\
 &= \int_0^\tau \Phi_F^\top(r)\left(I_{N-1}\otimes \Lambda^{-1/2}\right) \left(\Gamma\otimes \left(\alpha+ \left({c} \otimes I_K\right)^\top H(r) \right)\left(\alpha+ \left({c} \otimes I_K\right)^\top H (r)\right)^\top\right) \; dr. \notag
\end{align}
By Liouville's Formula~\cite[Proposition 2.18]{chicone2006ordinary}, we have
\begin{align*}
\det\left( \Phi_F(\tau)\right)
= \exp\left(\int_0^\tau \tr\left(\left(I_{N-1}\otimes \Lambda^{-1/2}\right) F^\top(\tau) \left(I_{N-1}\otimes \Lambda^{-1/2}\right)\right) dr \right)\det\left( \Phi_F(0)\right)>0, 
\end{align*}
so that $\Phi_F(\tau)$ is invertible for all $\tau<T_{\max}$. We can in turn solve~\eqref{eq:prod} for $F$ by multiplying with the inverse of {${\Phi}_F$} and the inverse $I \otimes \Lambda^{1/2}$ of $I \otimes \Lambda^{-1/2}$. With the notation from~\eqref{eq:state}, this leads to
\begin{align}\label{eq: F}
&\quad F(\tau)=\notag\\
&  \int_0^\tau\left(I_{N-1}\otimes \Lambda^{1/2}\right)   \Psi_F^\top(r;\tau) \left(\Gamma\otimes 
\Lambda^{-1/2}\left(\alpha+ \left({c} \otimes I_K\right)^\top H \right)\left(\alpha+ \left({c} \otimes I_K\right)^\top H \right)^\top\right)(r) \; dr.
\end{align}
Similarly, after multiplying $\Phi_F^\top(\tau)  \left(I_{N-1}\otimes \Lambda^{-1/2}\right) $ to the left of $H$, integrating, and then taking into account the ODE~\eqref{eqs: Ricatti} for $H$, we obtain
\begin{align}\label{eq: H}
H(\tau) &= \int_0^\tau\left(I_{N-1}\otimes \Lambda^{1/2}\right)  \Psi_F^\top(r;\tau)\left(\Gamma\otimes \Lambda^{-1/2}\left(\alpha+ \left({c} \otimes I_K\right)^\top H(r) \right)\right)\xi \; dr. 
\end{align}
As a consequence,
\begin{align*}
&\quad \alpha+  \left({c} \otimes I_K\right)^\top H(\tau) 
\\&=\alpha + \left({c} \otimes I_K\right)^\top\left( \int_0^\tau\left(I_{N-1}\otimes \Lambda^{1/2}\right)   \Psi_F^\top(r;\tau)\left(\Gamma\otimes \Lambda^{-1/2}\left(\alpha+ \left({c} \otimes I_K\right)^\top H(r) \right)\right)\xi \; dr\right)
\\&= \alpha +  \int_0^\tau\left({c}\otimes \Lambda^{1/2}\right)   \Psi_F^\top(r;\tau)\left(\Gamma\otimes \Lambda^{-1/2}\left(\alpha+ \left({c} \otimes I_K\right)^\top H(r) \right)\right)\xi \; dr.
\end{align*}
Recalling the definition of operator norm of a matrix from Definition~\ref{def:opnomr} and the properties from  Remark~\ref{properties of operator norm}, it follows that
\begin{align*}
&\quad \left\| \alpha+  \left({c} \otimes I_K\right)^\top H(\tau)  \right\|_{\mathrm{op}}
\\&\leq \|\alpha\|_{\mathrm{op}} + \left\| \int_0^\tau\left({c}\otimes \Lambda^{1/2}\right)   \Psi_F^\top(r;\tau)\left(\Gamma\otimes \Lambda^{-1/2}\left(\alpha+ \left({c} \otimes I_K\right)^\top H(r) \right)\right)\xi \; dr \right\|_{\mathrm{op}}
\\&\leq \|\alpha\|_{\mathrm{op}} +\int_0^\tau \left\| \left({c}\otimes \Lambda^{1/2}\right)   \Psi_F^\top(r;\tau)\left(\Gamma\otimes \Lambda^{-1/2}\left(\alpha+ \left({c} \otimes I_K\right)^\top H(r) \right)\right)\xi \right\|_{\mathrm{op}} \; dr
\\&\leq \|\alpha\|_{\mathrm{op}} +\int_0^\tau \|{c}\| \| \Lambda^{1/2}\|_{\mathrm{op}} \|\Psi_F(r;\tau)\|_{\mathrm{op}} \|\Gamma\|_{\mathrm{op}}  \| \Lambda^{-1/2}\|_{\mathrm{op}} \left\|\alpha+ \left({c} \otimes I_K\right)^\top H(r) \right\|_{\mathrm{op}} \|\xi\|_{\mathrm{op}} \; dr
\\&\leq \|\alpha\| + \|{c}\|\|\Lambda\|^{1/2} \|\Lambda^{-1}\|^{1/2}\|\Gamma\| \|\xi\| \int_0^\tau  \left\|\alpha+ \left({c} \otimes I_K\right)^\top H(r) \right\|_{\mathrm{op}} \; dr.
\end{align*}
(Here, the last step uses the estimate for $\|\Psi_F(r;\tau)\|_{\mathrm{op}} \leq 1$ from Corollary~\ref{Psi norm}.)  Gr\"onwall's inequality applied to the \emph{scalar} function $\tau \mapsto \| \alpha+  \left({c} \otimes I_K\right)^\top H(\tau)\|_{\mathrm{op}}$ in turn yields
\begin{align}\label{estimation of frictional volatility}
\left\| \alpha+  \left({c} \otimes I_K\right)^\top H(\tau)  \right\|_{\mathrm{op}} \leq \|\alpha\| \exp \left( \|{c}\|\|\Lambda\|^{1/2} \|\Lambda^{-1}\|^{1/2}\|\Gamma\| \|\xi\|  \tau\right), \quad \tau \in [0,T_{\max}).
\end{align}
Together with~\eqref{eq: H}, the fact that the Frobenius norm of a matrix is dominated by a constant times the operator norm, and $D\geq K$,  it now follows that
\begin{align*}
\|H(\tau) \| &\leq \sqrt{(N-1)(D+K)}\|H(\tau) \|_{\mathrm{op}} \\
&\leq\sqrt{2(N-1)D} \left\| \int_0^\tau\left(I_{N-1}\otimes \Lambda^{1/2}\right)   \Psi_F^\top(r;\tau)\left(\Gamma\otimes \Lambda^{-1/2}\left(\alpha+ \left({c} \otimes I_K\right)^\top H(r) \right)\right)\xi \; dr\right\|_{\mathrm{op}} 
\\&\leq
\sqrt{2(N-1)D}\int_0^\tau  \| \Lambda^{1/2}\|_{\mathrm{op}} \|\Psi_F(r;\tau)\|_{\mathrm{op}} \|\Gamma\|_{\mathrm{op}}  \| \Lambda^{-1/2}\|_{\mathrm{op}} \left\|\alpha+ \left({c} \otimes I_K\right)^\top H(r) \right\|_{\mathrm{op}} \|\xi\|_{\mathrm{op}}  \; dr
\\&\leq \sqrt{2(N-1)D}\int_0^\tau\| \Lambda^{1/2}\|  \|\Gamma\|  \| \Lambda^{-1/2}\| \left\|\alpha+ \left({c} \otimes I_K\right)^\top H(r) \right\|_{\mathrm{op}} \|\xi\| \; dr
\\&\leq \sqrt{2(N-1)D}\int_0^\tau \|\alpha\| \|\Lambda\|^{1/2} \|\Lambda^{-1}\|^{1/2}\|\Gamma\| \|\xi\|  \exp \left( \|{c}\|\|\Lambda\|^{1/2} \|\Lambda^{-1}\|^{1/2}\|\Gamma\| \|\xi\|  r \right) dr.
\end{align*}
The integral in the upper bound is finite for any $\tau$. All components of $H$ therefore remain uniformly bounded on any finite time interval. Similarly, by~\eqref{eq: F},
\begin{align*}
\|F(\tau)\|
&\leq \sqrt{2(N-1)K}\int_0^\tau \|\Lambda\|^{1/2} \|\Lambda^{-1}\|^{1/2}\|\Gamma\|  \|\alpha\|^2 \exp \left( 2\|{c}\|\|\Lambda\|^{1/2} \|\Lambda^{-1}\|^{1/2}\|\Gamma\| \|\xi\|  r\right) dr.
\end{align*}
Whence, all elements of $F$ also remain uniformly bounded on any finite time interval. The local solution of the Riccati system~\eqref{eqs: Ricatti} therefore also is a global solution on any finite time interval.
\end{proof} 

\subsection{Existence of Equilibria with Transaction Costs}

Having established wellposedness for the Riccati system~\eqref{eqs: Ricatti}, we now turn to the proof of our main result on the global existence of equilibria with transaction costs.

\begin{proof}[Proof of Theorem~\ref{thm:Radnerfric}]
In view of Lemma~\ref{lem:FBSDE}, we have to verify that the candidate processes from Theorem~\ref{thm:Radnerfric} indeed solve the FBSDE system from Lemma~\ref{lem:FBSDE}.

First, recall that $F$ is positive definite by Corollary~\ref{F psd}. Lemma~\ref{operator norm} applied to the linear matrix Riccati equation~\eqref{eq:matrix exponential} in turn shows that $\Phi(t)$ is well-defined on $[0,T]$ and has increasing operator norm. Together with the proof of Lemma~\ref{thm:ivp}, it follows that the functions $F$, $H$, $\Phi$ are all uniformly bounded on $[0,T]$, and it is in turn straightforward to verify using the Gaussian law of the driving Brownian motion that the candidate solution $({\varphi}, \dot{\varphi}, \dot{Z}, \bar{S}+\mathcal{Y}-(c\otimes \Lambda)^\top \dot{\varphi}, \bar{\sigma}-(c\otimes \Lambda)^\top \dot{Z})$ indeed belongs to $\mathbb{H}^4(\R^{K(N-1)}) \times \mathbb{H}^4(\R^{K(N-1)}) \times \mathbb{H}^2(\R^{K(N-1) \times D}) \times \mathcal{S}^2(\R^K) \times \mathbb{H}^4(\R^{K \times D})$. Hence, it remains to verify that these processes also satisfy the dynamics and initial/terminal conditions from Lemma~\ref{lem:FBSDE}. 

To this end, recall that by Liouville's formula~\cite[Proposition 2.18]{chicone2006ordinary}, the matrix $\Phi(t)$ is invertible for each $t \in [0,T]$. Differentiation of this matrix function and the ODE~\eqref{eq:matrix exponential} give
\begin{align}\label{eq:drift1}
\left(\Phi^{-1}(t) \right)' 
&= - \Phi^{-1}(t) \Phi'(\tau) \Phi^{-1}(\tau) 
= -\Phi^{-1}(t) \left(I_{N-1}\otimes \Lambda^{-1/2}\right) F^\top(T-t) \left(I_{N-1}\otimes \Lambda^{-1/2}\right).
\end{align}
Moreover, by definition of the function $\Psi$ in~\eqref{eq:state transition},
$$
\Phi^\top (t) \left(I_{N-1}\otimes \Lambda^{1/2}\right)  \Psi^\top (r; t)= (\Psi(r;t) \left(I_{N-1}\otimes \Lambda^{1/2}\right) \Phi(t))^\top = \Phi^\top (r) \left(I_{N-1}\otimes \Lambda^{1/2}\right) . 
$$
With these observations, we can rewrite~\eqref{eq:frictional position} as 
\begin{align*}
&\Phi^\top (t)  \left(I_{N-1}\otimes \Lambda^{1/2}\right) \left(\varphi_t - \bar{\varphi}_0\right)\\
&\quad =  \left(I_{N-1}\otimes \Lambda^{1/2}\right) \left(\varphi_{0-} - \bar{\varphi}_0 \right)
- \int_0^t\Phi^\top(r) \left(I_{N-1}\otimes \Lambda^{-1/2}\right)H(T-r) W_r dr,
\end{align*}
so that
\begin{align}\label{eq:drift2}
d\; \Phi^\top (t)  \left(I_{N-1}\otimes \Lambda^{1/2}\right) \left(\varphi_t - \bar{\varphi}_0\right) = -\Phi^\top(t) \left(I_{N-1}\otimes \Lambda^{-1/2}\right)H(T-t) W_t dt.
\end{align}
Integration by parts and the dynamics~\eqref{eq:drift1}-\eqref{eq:drift2} in turn give
\begin{align}
d \varphi_t  
& = d \left[\left(I_{N-1}\otimes \Lambda^{-1/2}\right) \left(\Phi^\top (t)\right)^{-1} \Phi^\top (t)  \left(I_{N-1}\otimes \Lambda^{1/2}\right) \left(\varphi_t - \bar{\varphi}_0\right)\right] \notag\\
&=- \left(I_{N-1}\otimes \Lambda^{-1}\right)  F(T-t) \left(I_{N-1}\otimes \Lambda^{-1/2}\right) \left(\Phi^\top (t)\right)^{-1}\Phi^\top (t)  \left(I_{N-1}\otimes \Lambda^{1/2}\right) \left(\varphi_t - \bar{\varphi}_0\right) dt\notag\\
&\quad- \left(I_{N-1}\otimes \Lambda^{-1/2}\right) \left(\Phi^\top (t)\right)^{-1} \Phi^\top(t) \left(I_{N-1}\otimes \Lambda^{-1/2}\right)H(T-t) W_t dt \notag\\
& = -\left(I_{N-1}\otimes \Lambda^{-1}\right) \left[F(T-t) \left(\varphi_t - \bar{\varphi}_0\right) + H(T-t) W_t \right]dt \label{eq:drift3}\\
&= \dot{\varphi}_t dt.\notag
\end{align}
Moreover,
$$
\varphi_0 = \bar{\varphi}_0 + \Psi^\top(0;0)\left(\varphi_{0-} - \bar{\varphi}_0\right)  - 0 = \varphi_{0-}, 
$$
so that the first equation of the FBSDE system in Lemma~\ref{lem:FBSDE} is indeed satisfied. 

To verify that the other two equations from Lemma~\ref{lem:FBSDE} are satisfied as well, we first observe the following identities for the matrix $\Gamma$ from~\eqref{def: gamma} and the vector $c$ defined in Lemma~\ref{thm:ivp}:
\begin{align}
\Gamma \left(c + \frac{\bar{\gamma}}{\gamma^N}\mathbbm{1}_{N-1}\right) 
& = \frac{\gamma^N}{N} \mathbbm{1}_{N-1}, \label{eq:1}\\
\Gamma^\top c &= \frac{1}{N}\begin{bmatrix}
\gamma^N-\gamma^n &
\cdots &
\gamma^N-\gamma^{N-1}
\end{bmatrix}^\top, \label{eq:2}\\
\mathbbm{1}_{N-1}^\top c & = 1 - \bar{\gamma}\frac{N}{\gamma^N}. \label{eq:3}
\end{align}

With the (constant) frictionless equilibrium volatility $\bar{\sigma}=\alpha$ from Lemma~\ref{thm:frictionless} and the process $\dot{Z}$ from~\eqref{eq:volatility}, the candidate for the frictional equilibrium volatility is
\begin{align}\label{eq:frictional volatility}
\sigma_t =\bar{\sigma}_t-(c\otimes \Lambda)^\top \dot{Z}_t = \alpha + (c\otimes \Lambda)^\top \left(I_{N-1}\otimes \Lambda^{-1}\right)H(T-t) = \alpha +(c\otimes I_K)^\top H(T-t).
\end{align}
The definition of $\dot{\varphi}$ in \eqref{eq:frictional strategy}, integration by parts, the Riccati equations~\eqref{eqs: Ricatti} for $F$, $H$, and~\eqref{eq:frictional volatility} in turn lead to
\begin{align}
d \dot{\varphi}_t  
  & = -\left(I_{N-1}\otimes \Lambda^{-1}\right) d  \left[F(T-t) \left(\varphi_t - \bar{\varphi}_0\right) + H(T-t) W_t \right] \notag
\\& = \left(I_{N-1}\otimes \Lambda^{-1}\right) \left[\left(F'(T-t) \left(\varphi_t - \bar{\varphi}_0\right)  + H'(T-t) W_t\right)dt - F(T-t) d\left(\varphi_t - \bar{\varphi}_0\right) - H(T-t) dW_t \right] \notag
\\& = \left(I_{N-1}\otimes \Lambda^{-1}\right) \left(F'(T-t)+ F(T-t)\left(I_{N-1}\otimes \Lambda^{-1}F(t-t)\right)\right) \left(\varphi_t - \bar{\varphi}_0\right)  dt \notag
\\&\quad+ \left(I_{N-1}\otimes \Lambda^{-1}\right)\left(H'(T-t) +F(T-t)\left(I_{N-1}\otimes \Lambda^{-1}\right)H(T-t) \right) W_t dt  \notag
\\&\quad-\left(I_{N-1}\otimes \Lambda^{-1}\right)H(T-t) dW_t \notag
\\& = \left(I_{N-1}\otimes {\Lambda^{-1}} \right)\left( \left(\Gamma\otimes \sigma_t \sigma_t^\top\right) \left(\varphi_t - \bar{\varphi}_0\right) +  \left(\Gamma\otimes \sigma_t\right)\xi W_t\right)dt +\dot{Z}_t dW_t \notag
\\& = \left( \left(\Gamma\otimes \Lambda^{-1}\sigma_t \sigma_t^\top\right) \varphi_t +  \left(\Gamma\otimes {\Lambda^{-1}}\sigma_t\right)\xi_t- \left(\Gamma\otimes \Lambda^{-1}\sigma_t \sigma_t^\top\right)\bar{\varphi}_0\right)dt +\dot{Z}_t dW_t. \label{eq:driftdphi}
\end{align}
With the explicit form of $\bar{\varphi}^n_0$ from~\eqref{eq:Strategiesnocosts}, we can write $\bar{\varphi}_0 = (c + \frac{\bar{\gamma}}{\gamma^N}\mathbbm{1}_{N-1}) \otimes s$. Together with~\eqref{eq:1}, it follows that
\begin{align*}
\left(\Gamma\otimes \Lambda^{-1}\sigma_t \sigma_t^\top\right)\bar{\varphi}_0
  & = \Gamma \left(c + \frac{\bar{\gamma}}{\gamma^N}\mathbbm{1}_{N-1}\right) \otimes \Lambda^{-1}\sigma_t \sigma_t^\top s
 = \frac{\gamma^N}{N} \mathbbm{1}_{N-1}  \otimes \Lambda^{-1}\sigma_t \sigma_t^\top s.
\end{align*}
This shows that the candidate processes indeed match the dynamics in the second equation from Lemma~\ref{lem:FBSDE}. In view of the initial conditions of the Riccati system~\eqref{eqs: Ricatti}, the corresponding terminal condition is also satisfied:
$$
\dot{\varphi}_T = - \left(I_{N-1}\otimes \Lambda^{-1}\right) \left[F(0) \left(\varphi_T - \bar{\varphi}_0\right) + H(0) W_T\right] = 0.
$$
Finally, for the frictionless equilibrium price $\bar{S}$ from Lemma~\ref{thm:frictionless} and $\mathcal{Y}$ defined in~\eqref{eq:Ybis}, 
\begin{align*}
&d \left(\bar{S}_t +\mathcal{Y}_t\right) \\
&\quad=\bar{\gamma}  \left(\alpha\alpha^\top + \left({c} \otimes I_K\right)^\top H \alpha^\top + \alpha H^\top\left({c} \otimes I_K\right) +\left({c} \otimes I_K\right)^\top H H^\top  \left({c} \otimes I_K\right)\right)(T-t) s dt + \alpha dW_t
\\&\quad=\bar{\gamma}  \left(\alpha+ \left({c} \otimes I_K\right)^\top H(T-t) \right) \left(\alpha+ \left({c} \otimes I_K\right)^\top H(T-t) \right)^\top s dt + \alpha dW_t
\\& \quad=\bar{\gamma} \sigma_t\sigma_t^\top s dt + \alpha dW_t.
\end{align*}
(Here, we have used \eqref{eq:frictional volatility} in the last step.) Next, observe that the dynamics of $\dot{\varphi}$ computed in ~\eqref{eq:driftdphi} and the identities~\eqref{eq:2}, \eqref{eq:3} give 
\begin{align*}
(c\otimes \Lambda)^\top d \dot{\varphi}_t
& = \left( \left(c^\top\Gamma\otimes\sigma_t \sigma_t^\top\right) \varphi_t +  \left(c^\top\Gamma\otimes \sigma_t\right)\xi_t-  \frac{\gamma^N}{N} c^\top \mathbbm{1}_{N-1}  \otimes \sigma_t \sigma_t^\top s \right)dt +(c\otimes \Lambda)^\top \dot{Z}_t dW_t
\\&= \left(\frac{1}{N}\sigma_t  \sum_{n=1}^{N-1}(\gamma^N - \gamma^n) \left(\sigma_t^\top\varphi^n_t + \xi^n_t \right) + \left(\bar\gamma - \frac{\gamma^N}{N} \right)\sigma_t \sigma_t^\top s  \right) dt + (c\otimes \Lambda)^\top \dot{Z}_t dW_t.
\end{align*}
For $S_t = \bar{S}_t +\mathcal{Y}_t - (c\otimes \Lambda)^\top \dot{\varphi}_t$ from Theorem~\ref{thm:Radnerfric}, the dynamics we have just computed as well as~\eqref{eq:frictional volatility} show
\begin{align*}
dS_t 
&= \bar{\gamma} \sigma_t\sigma_t^\top s dt  -  \left(\frac{1}{N}\sigma_t  \sum_{n=1}^{N-1}(\gamma^N - \gamma^n) \left(\sigma_t^\top\varphi^n_t + \xi^n_t \right) + \left(\bar\gamma - \frac{\gamma^N}{N} \right)\sigma_t \sigma_t^\top s  \right) dt +\sigma_t dW_t
\\&= \left( \frac{\gamma^N}{N} \sigma_t \sigma_t^\top s +\frac{1}{N}\sigma_t  \sum_{n=1}^{N-1}(\gamma^n - \gamma^N) \left(\sigma_t^\top\varphi^n_t + \xi^n_t \right) \right) dt +\sigma_t dW_t.
\end{align*}
The third equation in Lemma~\ref{lem:FBSDE} is therefore also satisfied, because the corresponding terminal condition is matched as well:
$$
S_T = \bar{S}_T + \mathcal{Y}_T - (c\otimes \Lambda)^\top\dot{\varphi}_T  = \mathfrak{S} - 0 - 0 = \mathfrak{S}, 
$$
This completes the proof.
\end{proof}

\subsection{Proof of the Asymptotic Expansions}\label{proof of asymptotics}

The rigorous convergence proof for the asymptotics approximations is based on estimates for the largest and smallest singular values of the involved matrices. We first recall the definition and the properties of singular values of matrices. Then, we establish bounds on the singular values of the solutions of linear matrix ODEs in Lemma~\eqref{matrix exponential bounds}. Using this tool and a matrix version of the variation on of constants formula, we then derive estimates for the solution $F^\lambda, H^\lambda$ of the Riccati ODEs~\eqref{eqs: Ricatti} as a function of the asymptotic parameter $\lambda$. These in turn allow us to show that the functions can be approximated by constant matrices that solve some \emph{algebraic} Riccati equations. With these approximations at hand, we then proof the asymptotic expansions of the equilibrium price and trading volume from Theorem~\ref{thm:asymp}.

\begin{definition}\label{singular value and svd}
The \emph{singular values} of a real-valued $M_1 \times M_2$ matrix $A$ are the square roots of the non-negative eigenvalues of $A A^\top$. 
\end{definition}

For the convenience of the reader, we summarize the properties of singular values from~\cite[Chapter 2]{horn2012matrix} that we henceforth use without further mention. 

\begin{remark}
Let $A$ be a real-valued $M_1 \times M_2$ matrix. 
\begin{enumerate}
\item $A$ and $A^\top$ have the same non-zero singular values, but not the same as $\frac{1}{2}(A+A^\top)$. 
\item If $A$ is symmetric and $M_1 = M_2$, then the absolute value of the eigenvalues of $A$ coincide with the singular values.
\item Minimax representation for singular values:  Let $\sigma_{\max}(A)$  and $\sigma_{\min}(A)$ denote the largest and smallest singular value of $A$, respectively. Then,
$$
\sigma_{\max}(A) = \sup\{\|A^\top b\| : b \in \mathbb{R}^{M_1}, \|b\| = 1\}, \quad  \sigma_{\min}(A) = \inf\{ \|A^\top b\|: b \in \mathbb{R}^{M_1}, \|b\|=1\}. 
$$
In particular, $\|A\|_{\mathrm{op}}  = \sigma_{\max}(A)$.
\end{enumerate}
\end{remark}

We now consider the linear matrix ODE~\eqref{eq:matrix eq}, which is the key ingredient for the matrix version of the variation of constants formula that we use to prove our asymptotic expansions below. The following lemma shows that bounds on the singular values of the matrix function on the right-hand side of~\eqref{eq:matrix eq} are inherited by the largest and smallest singular values of the solution:

\begin{lemma}\label{matrix exponential bounds}
Let $Y$ be the unique solution of the linear matrix ODE~\eqref{eq:matrix eq}. Suppose $\tau \mapsto A(\tau)$ is continuous  and $A(\tau)$ is positive semidefinite for every $\tau\geq 0$, 
with $a_{\max}>a_{\min}>0$ such that for every  $\tau\in [0,T]$:
$$
a_{\max} \geq \frac{1}{2} \sigma_{\max} (A(\tau)+A^\top(\tau)) \geq \frac{1}{2}\sigma_{\min} (A(\tau)+A^\top(\tau)) \geq a_{\min} >0.
$$ 
Then for every $0\leq r\leq \tau\leq T$,  
\begin{align}\label{singular value threshold}
e^{-a_{\max}(\tau-r)}\leq 
\sigma_{\min}\left(Y(r)Y^{-1}(\tau)\right) \leq
\sigma_{\max} \left(Y(r)Y^{-1}(\tau)\right) \leq e^{-a_{\min}(\tau-r)}.
\end{align}
\end{lemma}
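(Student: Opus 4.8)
The plan is to control the quantity $\|Y(\tau)Y^{-1}(r)b\|$ (equivalently, to bound the singular values of $Y(r)Y^{-1}(\tau)$ and of $Y(\tau)Y^{-1}(r)$) by differentiating an appropriate scalar quadratic form and applying a Gr\"onwall-type comparison in both directions. The natural object is, for a fixed unit vector $b$, the scalar function $g(\tau):=\|Y(\tau)b\|^2 = b^\top Y^\top(\tau)Y(\tau)b$. From the identity~\eqref{first derivative} already established in the proof of Lemma~\ref{psd}, namely $\big(Y^\top(\tau)Y(\tau)\big)' = Y^\top(\tau)\big(A^\top(\tau)+A(\tau)\big)Y(\tau)$, we get $g'(\tau) = \big(Y(\tau)b\big)^\top\big(A(\tau)+A^\top(\tau)\big)\big(Y(\tau)b\big)$. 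The spectral bounds on the symmetric matrix $A(\tau)+A^\top(\tau)$ then sandwich this derivative: $2a_{\min}\,g(\tau) \le g'(\tau) \le 2a_{\max}\,g(\tau)$ for all $\tau\in[0,T]$.

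First I would integrate these differential inequalities between $r$ and $\tau$ (with $r\le\tau$). Since $g>0$ on $[0,T]$ — because $Y(\tau)$ is invertible by Liouville's formula~\cite[Proposition 2.18]{chicone2006ordinary} and hence $Y(\tau)b\ne 0$ for $b\ne 0$ — we may divide by $g$ and integrate $\frac{d}{d\tau}\log g(\tau)$, obtaining
\begin{align*}
e^{2a_{\min}(\tau-r)}\,\|Y(r)b\|^2 \;\le\; \|Y(\tau)b\|^2 \;\le\; e^{2a_{\max}(\tau-r)}\,\|Y(r)b\|^2 .
\end{align*}
Now I would substitute $b = Y^{-1}(\tau)c$ for an arbitrary unit vector $c$, so that $Y(\tau)b = c$ and $Y(r)b = Y(r)Y^{-1}(\tau)c$. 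This rearranges to
\begin{align*}
e^{-2a_{\max}(\tau-r)}\,\|c\|^2 \;\le\; \|Y(r)Y^{-1}(\tau)c\|^2 \;\le\; e^{-2a_{\min}(\tau-r)}\,\|c\|^2 .
\end{align*}
Taking the supremum and infimum over unit vectors $c$ and using the minimax characterization of $\sigma_{\max}$ and $\sigma_{\min}$ (applied to the matrix $\big(Y(r)Y^{-1}(\tau)\big)^\top$, whose singular values coincide with those of $Y(r)Y^{-1}(\tau)$) then yields exactly~\eqref{singular value threshold} after taking square roots.

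I do not expect a serious obstacle here; the argument is a two-sided version of the monotonicity computation already carried out in Lemmas~\ref{psd} and~\ref{operator norm}. The one point that warrants a moment's care is the direction of the substitution and which factor ($Y(r)Y^{-1}(\tau)$ versus $Y(\tau)Y^{-1}(r)$) the stated bound refers to; one must also be careful that the minimax representation is being applied to the correct matrix, since $\sigma_{\min}$ is characterized via $\|A^\top b\|$ rather than $\|Ab\|$ — but since a matrix and its transpose share singular values, this is harmless. The positive-semidefiniteness hypothesis on $A(\tau)$ is not strictly needed for the chain of inequalities above (only the spectral bounds on $A+A^\top$ are used, and $a_{\min}>0$ already forces positive definiteness of the symmetric part); it is retained for consistency with the companion lemmas and because it is the form in which the hypothesis will be verified when this lemma is applied to $(I_{N-1}\otimes\Lambda^{-1/2})F^\top(T-\cdot)(I_{N-1}\otimes\Lambda^{-1/2})$ via Corollary~\ref{F psd}.
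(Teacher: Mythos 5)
Your proposal is correct and follows essentially the same route as the paper: both differentiate the squared norm of the propagated vector, use the identity~\eqref{first derivative} together with the spectral bounds on $A+A^\top$ to obtain the two-sided differential inequality, integrate, and convert to singular-value bounds via the minimax characterization (with the same harmless transpose observation). The only difference is cosmetic — you integrate $\|Y(\tau)b\|^2$ forward and then substitute $b=Y^{-1}(\tau)c$, whereas the paper differentiates $\|Y(\tilde r)Y^{-1}(\tau)b\|^2$ in $\tilde r$ directly.
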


\begin{proof}
By Liouville's formula~\cite[Proposition 2.18]{chicone2006ordinary}, both $Y(r)$ and $Y(\tau)$ are invertible, 
hence for every $b\in \mathbb{R}^M\setminus\{0\}$, we have $\left\|Y(r)Y^{-1}(\tau) b\right\|>0$. By~\eqref{first derivative}, we then have
\begin{align*}
\frac{\partial}{\partial \tilde{r}} \left\|Y( \tilde{r})Y^{-1}(\tau) b\right\|^2
&=\frac{\partial }{\partial  \tilde{r}}\left(b^\top  (Y^{-1}(\tau))^\top  Y^\top ( \tilde{r}) Y( \tilde{r}) Y^{-1}(\tau) b \right)\\
&= \left(Y^{-1}(\tau) b\right)^\top (Y^\top ( \tilde{r}) Y( \tilde{r}) )'\left(Y^{-1}(\tau) b\right)\\
&=  \left(Y( \tilde{r})Y^{-1}(\tau) b\right)^\top \left(A( \tilde{r})+A^\top( \tilde{r})\right) \left(Y( \tilde{r})Y^{-1}(\tau) b\right)\\
&\geq 2 a_{\min} \left\|Y( \tilde{r})Y^{-1}(\tau) b \right\|^2.
\end{align*}
Now divide by $ \left\|Y(\tilde{r})Y^{-1}(\tau) b\right\|^2$ and integrate from $r$ to $\tau$, obtaining
\begin{align*}
 e^{2a_{\min} (\tau - r)} \left\|Y(r)Y^{-1}(\tau) b\right\|^2 \leq \left\|Y(\tau)Y^{-1}(\tau) b\right\|^2 =\|b\|^2.
\end{align*}
Hence, 
$
\left\|Y(r)Y^{-1}(\tau) b\right\| / \|b\| \leq  e^{-a_{\min} (\tau - r)},
$
and in turn $\sigma_{\max} (Y(r)Y^{-1}(\tau) ) \leq e^{-a_{\min} (\tau - r)}$. 

Similarly, for every $b\in \mathbb{R}^M\setminus\{0\}$, 
\begin{align*}
\frac{\partial}{\partial \tilde{r}} \left\|Y( \tilde{r})Y^{-1}(\tau) b\right\|^2
&\leq 2 a_{\max} \left\|Y( \tilde{r})Y^{-1}(\tau) b\right\|^2.
\end{align*}
Dividing by $ \left\|Y(\tilde{r})Y^{-1}(\tau) b\right\|^2$ and integrating from $r$ to $\tau$ in turn yields
\begin{align*}
 \|b\|^2 =  \left\|Y(\tau)Y^{-1}(\tau) b\right\|^2 \leq  e^{2a_{\max} (\tau - r)} \left\|Y(r)Y^{-1}(\tau) b\right\|^2.
\end{align*}
Hence,
$
\left\|Y(r)Y^{-1}(\tau) b\right\|/\|b\| \geq  e^{-a_{\max} (\tau - r)},
$
and thus $\sigma_{\min} (Y(r)Y^{-1}(\tau) ) \geq  e^{-a_{\max} (\tau - r)}$ as asserted.
\end{proof}

Using this lemma, we now approximate the solution to the Riccati system~\eqref{eqs: Ricatti}. Recall that the (normalized) transaction cost matrix $\bar\Lambda$ is symmetric and positive definite and the risk aversion matrix $\Gamma$ only has positive eigenvalues, so their square roots $\bar\Lambda^{1/2}$ and $\Gamma^{1/2}$ are well defined.\footnote{Recall that the square-root of $\Gamma$ is well defined by~\cite[Theorem 1.29]{higham2008functions} even though this matrix has positive eigenvalues that is generally only positive semidefinite but not symmetric, compare~\cite[Lemma A.5]{bouchard.al.18}.} Also note that $\bar\Lambda$, $\bar\Lambda^{-1}$, $\bar\Lambda^{1/2}$ and $\bar\Lambda^{-1/2}$ commute.

\begin{lemma}\label{bounds}
Let $(F^\lambda,H^\lambda)$ be the solution of the Riccati system~\eqref{eqs: Ricatti} for small transaction costs $\Lambda^\lambda=\lambda \bar\Lambda$. Define the constant matrix:
\begin{align}\label{longrun F}
\widehat{F}:= \Gamma^{1/2}\otimes\left(\bar\Lambda\#\alpha\alpha^\top\right),
\end{align}
and recall the definition of $M$ from Theorem~\ref{thm:asymp},
\begin{align}\label{longrun M}
M:=\left({c} ^\top \Gamma^{1/2}\otimes \bar{\Lambda}\left(\bar{\Lambda}\#\alpha\alpha^\top \right)^{-1}\alpha\right). 
\end{align}
Then, as $\lambda \downarrow 0$, the following estimates hold:
\begin{align}
\| F^\lambda(\tau)\|_{\mathrm{op}} &= O(\lambda^{1/2}), \; \tau\in[0,T],\qquad &\int_0^T \| F^{\lambda}(\tau) - \lambda^{1/2}\widehat{F}\|_{\mathrm{op}} d\tau = O(\lambda),\label{approximation F}\\
\|H^\lambda(\tau)\|_{\mathrm{op}} &= O(\lambda^{1/2}), \; \tau\in[0,T],\qquad &\int_0^T \| (c\otimes I_K)^\top H^{\lambda}(\tau) - \lambda^{1/2}M\xi\|_{\mathrm{op}} d\tau = O(\lambda). \label{approximation H}
\end{align}

\end{lemma}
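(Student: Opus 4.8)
The plan is to exploit the integral representations~\eqref{eq: F} and~\eqref{eq: H} derived in the proof of Lemma~\ref{thm:ivp}, together with the singular-value bounds for linear matrix ODEs from Lemma~\ref{matrix exponential bounds}, to pin down the precise $\lambda$-dependence of $F^\lambda$ and $H^\lambda$. First I would establish the crude bounds $\|F^\lambda(\tau)\|_{\mathrm{op}} = O(\lambda^{1/2})$ and $\|H^\lambda(\tau)\|_{\mathrm{op}} = O(\lambda^{1/2})$. From~\eqref{eq: F}, $F^\lambda(\tau)$ is an integral of $(I\otimes\Lambda^{1/2})\Psi_{F^\lambda}^\top(r;\tau)\big(\Gamma\otimes\Lambda^{-1/2}(\alpha+(c\otimes I_K)^\top H^\lambda)(\cdots)^\top\big)(r)\,dr$; with $\Lambda=\lambda\bar\Lambda$ this carries an explicit prefactor $\lambda^{1/2}\cdot\lambda^{-1/2}=\lambda^0$ from $\|\Lambda^{1/2}\|_{\mathrm{op}}\|\Lambda^{-1/2}\|_{\mathrm{op}}$ — so a naive count gives only $O(1)$. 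The improvement to $O(\lambda^{1/2})$ comes from the fact that $\Psi_{F^\lambda}(r;\tau)$ contracts: applying Lemma~\ref{matrix exponential bounds} to the ODE~\eqref{eq:matrix} with $A(\tau)=(I\otimes\Lambda^{-1/2})F^\lambda{}^\top(\tau)(I\otimes\Lambda^{-1/2})$, whose singular values are of order $\lambda^{-1}\|F^\lambda\|_{\mathrm{op}}$, one gets $\|\Psi_{F^\lambda}(r;\tau)\|_{\mathrm{op}}\le e^{-c\lambda^{-1}\|F^\lambda\|_{\mathrm{op}}(\tau-r)}$ for a suitable constant $c>0$ (using that $\bar\Lambda$ and $\bar\Lambda^{-1/2}$ commute and are bounded above/below). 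Integrating this exponential gives a factor $\lambda\|F^\lambda\|_{\mathrm{op}}^{-1}$, and combining with the $\|\alpha+(c\otimes I_K)^\top H^\lambda\|_{\mathrm{op}}^2 = O(1)$ bound from~\eqref{estimation of frictional volatility} leads to a self-improving inequality $\|F^\lambda\|_{\mathrm{op}} \lesssim \lambda\|F^\lambda\|_{\mathrm{op}}^{-1}$, i.e.\ $\|F^\lambda\|_{\mathrm{op}}=O(\lambda^{1/2})$. The analogous fixed-point/bootstrap argument applied to~\eqref{eq: H} (which is linear in $H^\lambda$ once the $F^\lambda$-dependence is frozen) then yields $\|H^\lambda\|_{\mathrm{op}}=O(\lambda^{1/2})$.

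Next I would identify the leading constant. Rescale by writing $F^\lambda = \lambda^{1/2}\tilde F^\lambda$, $H^\lambda = \lambda^{1/2}\tilde H^\lambda$; substituting into~\eqref{eqs: Ricatti} and reading off the $\lambda^0$-order terms, the rescaled pair should satisfy, in the limit, the \emph{algebraic} Riccati equation obtained by setting the derivatives' fast part to zero. For $F$ this is the equation
$$
\Gamma\otimes\alpha\alpha^\top = \widehat F\,(I_{N-1}\otimes\bar\Lambda^{-1})\,\widehat F,
$$
(using $\alpha+(c\otimes I_K)^\top H^\lambda \to \alpha$ at leading order since $H^\lambda=O(\lambda^{1/2})$), whose positive-semidefinite solution is exactly $\widehat F=\Gamma^{1/2}\otimes(\bar\Lambda\#\alpha\alpha^\top)$ — this is verified using the mixed-product property of Kronecker products and the defining identity $(\bar\Lambda\#\alpha\alpha^\top)\bar\Lambda^{-1}(\bar\Lambda\#\alpha\alpha^\top)=\alpha\alpha^\top$ for the Riemannian mean. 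For $H$ one similarly obtains $\widehat F(I_{N-1}\otimes\bar\Lambda^{-1})\widehat H = (\Gamma\otimes\alpha)\xi$, which solves to $(c\otimes I_K)^\top\widehat H = M\xi$ after applying $(c\otimes I_K)^\top$ and simplifying via~\eqref{longrun F}--\eqref{longrun M}. To turn these heuristics into the stated $O(\lambda)$ integral bounds, I would use the variation-of-constants formula: writing $G^\lambda:=F^\lambda-\lambda^{1/2}\widehat F$, one derives a linear ODE for $G^\lambda$ of the form $(G^\lambda)' = -\lambda^{-1/2}\mathcal{A}^\lambda(\tau)G^\lambda + \mathcal{E}^\lambda(\tau)$, where $\mathcal{A}^\lambda$ is (uniformly in $\lambda$) bounded below in the sense of Lemma~\ref{matrix exponential bounds} and the forcing term $\mathcal{E}^\lambda$ is $O(\lambda^{1/2})$ in operator norm (collecting the $H^\lambda$-dependent corrections and the genuine time-derivative $\lambda^{1/2}\widehat F{}'=0$). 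The variation-of-constants representation $G^\lambda(\tau)=\int_0^\tau \Psi(r;\tau)\mathcal{E}^\lambda(r)\,dr$ with $\|\Psi(r;\tau)\|_{\mathrm{op}}\le e^{-c\lambda^{-1/2}(\tau-r)}$ then gives $\|G^\lambda(\tau)\|_{\mathrm{op}} = O(\lambda^{1/2}\cdot\lambda^{1/2}) = O(\lambda)$ pointwise, hence $\int_0^T\|G^\lambda\|_{\mathrm{op}}\,d\tau = O(\lambda)$; the same scheme, now genuinely linear, handles $(c\otimes I_K)^\top H^\lambda - \lambda^{1/2}M\xi$.

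The main obstacle I anticipate is bookkeeping the matrix algebra carefully enough to extract the correct contraction rate and to verify that $\widehat F$ (and $M\xi$) really solve the limiting algebraic equations — in particular, because $F^\lambda$ and $\Gamma$ are \emph{not symmetric}, the usual symmetric-matrix Riccati theory and its comparison/uniqueness arguments are unavailable, so the exponential-decay estimate must be run through $\sigma_{\min}(A+A^\top)$ via Lemma~\ref{matrix exponential bounds} rather than through eigenvalues, and the identity $(\bar\Lambda\#\alpha\alpha^\top)\bar\Lambda^{-1}(\bar\Lambda\#\alpha\alpha^\top)=\alpha\alpha^\top$ together with $\Gamma^{1/2}\Gamma^{-1}\Gamma^{1/2}=\Gamma$ must be invoked to recognise $\widehat F$. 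A secondary technical point is ensuring the bootstrap for the crude bound closes — one must check that the constants in the self-improving inequality do not blow up as $\lambda\downarrow 0$, which relies on the uniform (in $\lambda$) bound~\eqref{estimation of frictional volatility}, noting $\|\Lambda\|^{1/2}\|\Lambda^{-1}\|^{1/2}$ is $\lambda$-independent. Once these algebraic facts are in place, the rest is a routine application of Grönwall's inequality and the variation-of-constants estimate to scalar-valued norms.
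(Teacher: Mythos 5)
Your identification of the limiting algebraic equations is right: $\widehat F(I_{N-1}\otimes\bar\Lambda^{-1})\widehat F=\Gamma\otimes\alpha\alpha^\top$ via the defining property of the Riemannian mean, and $(c\otimes I_K)^\top\widehat H=M\xi$, exactly as in the paper. But the engine of your argument — the contraction estimate for $\Psi_{F^\lambda}$ — is where there is a genuine gap. You claim $\|\Psi_{F^\lambda}(r;\tau)\|_{\mathrm{op}}\le e^{-c\lambda^{-1}\|F^\lambda\|_{\mathrm{op}}(\tau-r)}$ and bootstrap from it. Lemma~\ref{matrix exponential bounds} does not give this: the decay rate there is $a_{\min}$, a \emph{lower} bound on $\tfrac12\sigma_{\min}(A+A^\top)$, whereas $\|F^\lambda\|_{\mathrm{op}}$ is $\sigma_{\max}$; for the non-symmetric matrix $F^\lambda$ these are unrelated, and since $F^\lambda(0)=0$ there is in fact \emph{no} contraction at all near $\tau=0$ (a boundary layer of width $\sim\lambda^{1/2}$). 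Consequently your ``self-improving inequality'' $\|F^\lambda\|\lesssim\lambda\|F^\lambda\|^{-1}$ does not close, and the same false contraction is what your $O(\lambda^{1/2})$ bound for $H^\lambda$ and your refined estimate for $(c\otimes I_K)^\top H^\lambda-\lambda^{1/2}M\xi$ rest on. The paper avoids this circularity in two distinct moves that your proposal is missing: (i) for the $F$-estimate it uses a \emph{two-sided} (Sylvester-type) variation of constants, multiplying by $\Phi_{F^\lambda}^\top$ on the left and $\Phi_{\widehat F}$ on the right, so that the contraction comes from $\Psi_{\widehat F}$ — built from the \emph{known constant} matrix $\widehat F$, whose symmetrized smallest singular value $\widehat F_{\min}>0$ is available from Lemma~\ref{lem: Gamma psd} — while $\Psi_{F^\lambda}$ enters only through the crude bound $\|\Psi_{F^\lambda}\|_{\mathrm{op}}\le 1$; (ii) to get any contraction out of $\Psi_{F^\lambda}$ itself (which \emph{is} needed for the $H$-estimates, since the linear coefficient in the $H$-equation is $F^\lambda$, not $\widehat F$), one must first prove a lower bound $\sigma_{\min}\big(\alpha+(c\otimes I_K)^\top H^\lambda\big)\gtrsim\sigma_{\min}(\alpha)e^{-C_0T}$, which the paper does by an intermediate-value/contradiction argument (Step 2), and only then deduce $\sigma_{\min}$ of the symmetrized $F^\lambda$ is of order $\lambda^{1/2}$ after time $\tau_0\sim\lambda^{1/2}$. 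Nothing in your proposal supplies this lower bound on the perturbed volatility, and without it the estimates for $H^\lambda$ do not go through.

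Two secondary points. Your variation-of-constants formula $G^\lambda(\tau)=\int_0^\tau\Psi(r;\tau)\mathcal E^\lambda(r)\,dr$ for $G^\lambda=F^\lambda-\lambda^{1/2}\widehat F$ omits the inhomogeneous initial term: $G^\lambda(0)=-\lambda^{1/2}\widehat F\neq 0$, so the representation carries an extra $\Psi(0;\tau)G^\lambda(0)$ of size $\lambda^{1/2}$ near $\tau=0$; the claimed \emph{pointwise} bound $\|G^\lambda(\tau)\|_{\mathrm{op}}=O(\lambda)$ is therefore false, and only the time-integrated $O(\lambda)$ bound (which is what the lemma asserts) survives, exactly because the boundary-layer term integrates to $\lambda^{1/2}\cdot\lambda^{1/2}$. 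Also, the linearized equation for $G^\lambda$ is genuinely two-sided, $X\mapsto \lambda^{-1}\big(\lambda^{1/2}\widehat F(I\otimes\bar\Lambda^{-1})X+X(I\otimes\bar\Lambda^{-1})F^\lambda\big)$, so writing it as a one-sided ODE with a single coefficient $\mathcal A^\lambda$ that is ``uniformly bounded below'' glosses over precisely the issue in (i)–(ii) above: the $\widehat F$-side is bounded below, the $F^\lambda$-side is not.
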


\begin{proof}
The asserted bounds will be derived from a matrix version of the variation of constant formula below. Compared to the one-dimensional case treated in~\cite[Chapter 4]{thesis_shi}, this is complicated by the fact that the involved matrices generally do not commute. To overcome this difficulty, we introduce the unique solutions $\Phi_{F^\lambda}$ and $\Phi_{\widehat F}$ on $[0,T]$ of the following linear matrix ODEs: 
\begin{align}
\Phi_{F^\lambda}'(\tau) &= \frac{1}{\lambda}\left(I_{N-1}\otimes\bar\Lambda^{-1/2}\right) {F^\lambda}^\top(\tau)\left(I_{N-1}\otimes\bar\Lambda^{-1/2}\right) \Phi_{F^\lambda}(\tau), &\qquad \Phi_{F^\lambda}(0) = I_{K(N-1)},\label{Phi for F} \\
\Phi_{\widehat F}'(\tau) &= \frac{1}{\lambda^{1/2}}\left(I_{N-1}\otimes\bar\Lambda^{-1/2}\right) \widehat{F} \left(I_{N-1}\otimes\bar\Lambda^{-1/2}\right)\Phi_{\widehat F} (\tau), &\qquad \Phi_{\widehat F}(0)= I_{K(N-1)}. \label{Phi for hat F} 
\end{align}
Moreover, for $0\leq r\leq \tau \leq T$, define
\begin{align}\label{def: Psi}
\Psi_{F^\lambda}(r;\tau) &:= \Phi_{F^\lambda}(r)  \Phi_{F^\lambda}^{-1} (\tau), \qquad 
\Psi_{\widehat F}(r;\tau) := \Phi_{\widehat F}(r) \Phi_{\widehat F}^{-1} (\tau), 
\end{align}

The proof of the asymptotic expansions then proceeds along the following steps:

\begin{enumerate}
\item[\emph{Step 1:}] Show that for every $\tau\in[0,T]$, $\|F^\lambda(\tau)\|_{\mathrm{op}} = O(\lambda^{1/2})$. 
\item[\emph{Step 2:}] Show that for every $\tau\in[0,T]$, $\|(c\otimes I_K)^\top H^\lambda (\tau)\| \leq \sigma_{\min}(\alpha)\left(1 - e^{ -  \|{c}\|\|\bar\Lambda\|^{1/2} \|\bar\Lambda^{-1}\|^{1/2}\|\Gamma\| \|\xi\| T}\right)$.
\item[\emph{Step 3:}] Show that for every $0\leq r\leq \tau \leq T$, $\int_0^\tau \Psi_{F^\lambda}(r;\tau) dr  = O(\lambda^{1/2})$.
\item[\emph{Step 4:}] Show that for every $\tau\in[0,T]$, $\|H^\lambda(\tau)\|_{\mathrm{op}} = O(\lambda^{1/2})$.
\item[\emph{Step 5:}] Show that the approximations of $F^\lambda$ and $H^\lambda$ in \eqref{approximation F} and \eqref{approximation H} are valid at the asserted orders.
\end{enumerate}

\emph{Step 1:}
Notice that $\widehat{F}$ is the solution of the \emph{algebraic} Riccati equation
$$
\widehat F \left(I_{N-1}\otimes\bar{\Lambda}^{-1}\right) \widehat F = \Gamma \otimes \alpha \alpha^\top. 
$$
To simplify notation, set
$G^\lambda = \alpha(H^\lambda)^\top ({c} \otimes I_K) +  ({c} \otimes I_K)^\top H^\lambda( \alpha +({c} \otimes I_K)^\top H^\lambda)^\top$. 
The difference between the function $F^\lambda$ and the constant $\lambda^{1/2} \widehat{F}$ satisfies
\begin{align*}
\left(F^\lambda - \lambda^{1/2}\widehat{F}\right)'
&= \left(F^\lambda\right)'
=\Gamma\otimes \left(\alpha + \left({c} \otimes I_K\right)^\top H^\lambda \right)\left(\alpha + \left({c} \otimes I_K\right)^\top H^\lambda \right)^\top - \frac{F^\lambda\left(I_{N-1}\otimes \bar\Lambda^{-1}\right)F^\lambda}{\lambda}\\
& = \frac{1}{\lambda} \left(\lambda\widehat F \left(I_{N-1}\otimes\bar\Lambda^{-1}\right) \widehat F - F^\lambda\left(I_{N-1}\otimes \bar\Lambda^{-1}\right)F^\lambda \right)  + \Gamma\otimes G^\lambda\\
& = \frac{1}{\lambda} \left(\lambda^{1/2}\widehat F \left(I_{N-1}\otimes\bar\Lambda^{-1}\right) (\lambda^{1/2}\widehat F - F^\lambda) + (\lambda^{1/2}\widehat F - F^\lambda)\left(I_{N-1}\otimes \bar\Lambda^{-1}\right)F^\lambda \right) 
 + \Gamma\otimes G^\lambda. 
\end{align*}
We now want to apply a version of the variation of constant formula to obtain explicit estimates even though the matrices involved generally do not commute. To this end, multiply $\Phi_{F^\lambda}^\top $ and $\Phi_{\widehat F}$ on the left and right of $F^\lambda - \widehat F$, respectively. Then, taking derivatives and plugging in $\left(F^\lambda\right)'$ yields
\begin{align*}
\left(\Phi_{F^\lambda}^\top \left(I_{N-1}\otimes\bar\Lambda^{-1/2}\right) \left(F^\lambda - \lambda^{1/2}\widehat{F}\right)\left(I_{N-1}\otimes\bar\Lambda^{-1/2}\right)\Phi_{\widehat F}\right)'
& =  \Phi_{F^\lambda}^\top\left( \Gamma\otimes \bar{\Lambda}^{-1/2} G^\lambda\bar{\Lambda}^{-1/2}\right) \Phi_{\widehat F}.
\end{align*}
Now recall the initial condition $F^\lambda(0)=0$ and integrate both sides, obtaining
\begin{align*}
&\Phi_{F^\lambda}^\top(\tau)\left(I_{N-1}\otimes\bar\Lambda^{-1/2}\right) \left(F^\lambda(\tau) - \lambda^{1/2}\widehat{F}\right)\left(I_{N-1}\otimes\bar\Lambda^{-1/2}\right)\Phi_{\widehat F}(\tau) \\
& = \int_0^\tau \Phi_{F^\lambda}^\top (r)\left( \Gamma\otimes \bar\Lambda^{-1/2}G^\lambda(r)\bar\Lambda^{-1/2}\right) \Phi_{\widehat F} (r) dr - \lambda^{1/2}\left(I_{N-1}\otimes\bar\Lambda^{-1/2}\right)  \widehat{F}\left(I_{N-1}\otimes\bar\Lambda^{-1/2}\right).
\end{align*}
(Here, the arguments are dropped to ease notation.) By definition of $\Psi_{F^\lambda}(r;\tau) $ and $\Psi_{\widehat F}(r;\tau)$ in~\eqref{def: Psi}, we have 
\begin{align}\label{eq:1}
&F^\lambda(\tau) - \lambda^{1/2}\widehat{F}
\notag\\
&=   -\lambda^{1/2}\left(I_{N-1}\otimes\bar\Lambda^{1/2}\right) \Psi_{F^\lambda}^\top(0;\tau) \left(I_{N-1}\otimes\bar\Lambda^{-1/2}\right)  \widehat{F}\left(I_{N-1}\otimes\bar\Lambda^{-1/2}\right) \Psi_{\widehat F}(0;\tau)\left(I_{N-1}\otimes\bar\Lambda^{1/2}\right)\notag\\
&\quad  \int_0^\tau \left(I_{N-1}\otimes\bar\Lambda^{1/2}\right) \Psi_{F^\lambda}^\top(r;\tau)\left( \Gamma\otimes \bar\Lambda^{-1/2}G^\lambda (r) \bar\Lambda^{-1/2}\right) \Psi_{\widehat F} (r;\tau)\left(I_{N-1}\otimes\bar\Lambda^{1/2}\right)  dr.
\end{align}
With $C_0 :=  \|{c}\|\|\bar\Lambda\|^{1/2} \|\bar\Lambda^{-1}\|^{1/2}\|\Gamma\| \|\xi\|  = O(1)$, the representation~\eqref{estimation of frictional volatility} for $ \alpha+  \left({c} \otimes I_K\right)^\top H^\lambda$ implies that,  for every $\tau\in[0,T]$, 
\begin{align*}
 \left\| \alpha+  \left({c} \otimes I_K\right)^\top H^\lambda(\tau)  \right\|_{\mathrm{op}}
 \leq \|\alpha\| \exp \left( \|{c}\|\|\lambda\bar\Lambda\|^{1/2} \|\lambda \bar\Lambda^{-1}\|^{1/2}\|\Gamma\| \|\xi\| \tau \right)
= \|\alpha\| e^{C_0 \tau }\leq  \|\alpha\| e^{C_0 T}. 
\end{align*}
Recall from Corollary~\ref{Psi norm}, that 
$$\|\Psi_{F^\lambda}(r;\tau)\|_{\mathrm{op}}\leq 1.$$
To derive a similar bound for $\|\Psi_{\widehat F}(r;\tau)\|_{\mathrm{op}}$, notice that 
$$
\left(I_{N-1}\otimes\bar\Lambda^{-1/2}\right)  \widehat{F}\left(I_{N-1}\otimes\bar\Lambda^{-1/2}\right)  = \Gamma^{1/2} \otimes \left(\bar\Lambda^{-1/2} \alpha\alpha^\top \bar\Lambda^{-1/2} \right)^{1/2}, 
$$
where $\left(\bar\Lambda^{-1/2} \alpha\alpha^\top \bar\Lambda^{-1/2} \right)^{1/2}$ is a symmetric positive definite matrix. 
By Lemma~\ref{lem: Gamma psd}, the smallest eigenvalue of $\Gamma + \Gamma^\top$ is strictly positive, so
\begin{align*}
\widehat F_{\min} := \frac{1}{2}\sigma_{\min}(\Gamma + \Gamma^\top)\sigma_{\min} \left(\bar\Lambda^{-1/2} \alpha\alpha^\top \bar\Lambda^{-1/2} \right)^{1/2} >0.
\end{align*}
Lemma~\ref{matrix exponential bounds} therefore yields the following upper bound for $\Psi_{\widehat F}(r;\tau)$, valid for every $0\leq r\leq \tau\leq T$:
\begin{align*}
\|\Psi_{\widehat F}(r;\tau)\|_{\mathrm{op}} \leq e^{-\frac{\widehat F_{\min}}{\lambda^{1/2}} (\tau-r)}. 
\end{align*}
Moreover, with the help of~\eqref{estimation of frictional volatility}, direct calculation yields 
\begin{align*}
\left\| \Gamma\otimes \bar\Lambda^{-1/2}G^\lambda(r)\bar\Lambda^{-1/2}\right\|_{\mathrm{op}}
&\leq \|\Gamma\|\|\bar\Lambda^{-1}\| \left(\|\alpha\| + \left\| \alpha+  \left({c} \otimes I_K\right)^\top H(r)  \right\|_{\mathrm{op}} \right) \|(c\otimes I_K)^\top H^\lambda (r) \|_{\mathrm{op}}\\
&\leq 2e^{C_0 T}  \|\Gamma\|\|\bar\Lambda^{-1}\|\alpha\|  \|(c\otimes I_K)^\top H^\lambda (r) \|_{\mathrm{op}}. 
\end{align*}
After taking into account the above estimates,~\eqref{eq:1} leads to the following bound for the difference between the solution of the Riccati system and its constant approximation: 
\begin{align}\label{estimation of F}
\|F^\lambda(\tau) -\lambda^{1/2} \widehat F\|_{\mathrm{op}}
&\leq 2e^{C_0 T}  \|\bar\Lambda\|\|\bar\Lambda^{-1}\|\Gamma\|\|\alpha\| \int_0^\tau \|\Psi_{F^\lambda}(r;\tau)\|_{\mathrm{op}} \|\Psi_{\widehat F}(r;\tau)\|_{\mathrm{op}} \|(c\otimes I_K)^\top H^\lambda (r) \|_{\mathrm{op}}\; dr\notag
\\& +  \lambda^{1/2} \|\bar\Lambda\| \|\bar\Lambda^{-1}\| \|\widehat F\|  \|\Psi_{\widehat F}(0;\tau)\|_{\mathrm{op}}\|\Psi_{ F^\lambda}(0;\tau)\|_{\mathrm{op}} \notag
\\&\leq 2e^{C_0 T}\|\bar\Lambda\| \|\bar\Lambda^{-1}\|  \|\Gamma\|\|\alpha\|   \int_0^\tau e^{-\frac{\widehat F_{\min}}{\lambda^{1/2}} (\tau-r)} \|(c\otimes I_K)^\top H^\lambda (r) \|_{\mathrm{op}} dr 
\notag\\& 
+  \lambda^{1/2} \|\bar\Lambda\| \|\bar\Lambda^{-1}\| \|\widehat F\|  \|\Psi_{\widehat F}(0;\tau)\|_{\mathrm{op}}\|\Psi_{ F^\lambda}(0;\tau)\|_{\mathrm{op}}.
\end{align}
Recalling~\eqref{eq: H} and~\eqref{estimation of frictional volatility}, and taking into account that $\|\Psi_{F^\lambda} (r;\tau)\|_{\mathrm{op}}\leq 1$ for all $0\leq r\leq \tau \leq T$, we have the following bound for $(c\otimes I_K)^\top H^\lambda(\tau)$ for small costs $\bar\Lambda^\lambda = \lambda \bar\Lambda$:
\begin{align*}
\|(c\otimes I_K)^\top H^\lambda (\tau) \|_{\mathrm{op}} 
&\leq  \int_0^\tau \|c \| \|\lambda\bar\Lambda\|^{1/2} \|\lambda\bar\Lambda^{-1}\|^{1/2}\|\Gamma\| \|\xi\| \|\alpha\| e^{C_0 r}  dr\\
&\leq \int_0^\tau \|c \| \|\bar\Lambda\|^{1/2} \|\bar\Lambda^{-1}\|^{1/2}\|\Gamma\| \|\xi\| \|\alpha\| e^{C_0 r} dr\\
&\leq \|\alpha\| e^{C_0 \tau}\leq \|\alpha\| e^{C_0 T}. 
\end{align*} 
The triangle inequality and~\eqref{estimation of F} in turn give
\begin{align*}
\|F^\lambda(\tau)\|_{\mathrm{op}} 
&\leq \lambda^{1/2} \|\widehat F\|_{\mathrm{op}}+\|F^\lambda(\tau) - \lambda^{1/2} \widehat F\|_{\mathrm{op}} \\\
&\leq  \lambda^{1/2} \|\widehat F\| + \lambda^{1/2} \|\bar\Lambda\| \|\bar\Lambda^{-1}\| \|\widehat F\|  \|\Psi_{\widehat F}(0;\tau)\|_{\mathrm{op}} + 2e^{2C_0 T}\|\bar\Lambda\| \|\bar\Lambda^{-1}\|  \|\Gamma\|\|\alpha\|^2 \int_0^\tau e^{-\frac{\widehat F_{\min}}{\lambda^{1/2}} (\tau-r)}dr\\
&\leq \lambda^{1/2} \left( \|\widehat F\|  +\|\bar\Lambda\| \|\bar\Lambda^{-1}\| \|\widehat F\| + \frac{2e \|\bar\Lambda\| \|\bar\Lambda^{-1}\|  \|\Gamma\|\|\alpha\|^2 }{\widehat F_{\min}}\right)\\
&= : \lambda^{1/2} F_{\max} = O(\lambda^{1/2}). 
\end{align*}
This completes Step 1. For later use, also note that this estimate implies 
$$
\frac{1}{2}\sigma_{\max}\left(\left(I_{N-1}\otimes\bar\Lambda^{-1/2}\right)\left(F^\lambda (\tau) +{F^\lambda}^\top(\tau)\right)\left(I_{N-1}\otimes\bar\Lambda^{-1/2}\right)\right) 
\leq \lambda^{1/2} \|\bar\Lambda^{-1}\|_{\mathrm{op}}  F_{\max}.
$$
By Lemma~\ref{matrix exponential bounds}, it follows that the smallest singular value of $\Psi_{F^\lambda}(r;\tau)$ for $\tau_0\leq r \leq \tau\leq \tau_1$ satisfies
\begin{align}\label{Psi min}
\sigma_{\min} (\Psi_{F^\lambda}(r;\tau) ) \geq e^{-\frac{F_{\max}\|\bar\Lambda^{-1}\|_{\mathrm{op}} }{\lambda^{1/2}}(\tau-r)}.
\end{align}

\emph{Step 2:}
Recall $C_0 =   \|{c}\|\|\bar\Lambda\|^{1/2} \|\bar\Lambda^{-1}\|^{1/2}\|\Gamma\| \|\xi\|  = O(1)$.  We prove the claim by contradiction. To this end, suppose there exists a time $\tau_2 \in[0,T]$ such that 
$$
\|(c\otimes I_K)^\top H^\lambda (\tau_2)\| > \sigma_{\min}(\alpha^\top)(1 - e^{-C_0 T}).
$$
Notice that the Frobenius norm $\tau \mapsto \|(c\otimes I_K)^\top H^\lambda (\tau)\|$ is continuous, and $\|(c\otimes I_K)^\top H^\lambda (0)\|=0$. By the intermediate value theorem, there exists $\tau_1\in[0,\tau_2]$ (depending on $\lambda$) such that 
\begin{equation}\label{eq:tau1}
\|(c\otimes I_K)^\top H^\lambda (\tau_1)\| = \sigma_{\min}(\alpha^\top)(1- e^{-C_0 T})
\end{equation}
and
$$
\|(c\otimes I_K)^\top H^\lambda (\tau)\|< \sigma_{\min}(\alpha^\top)(1- e^{-C_0 T}), \quad \mbox{for $\tau \in[0,\tau_1)$.}
$$
On $[0,\tau_1]$, we then have 
\begin{align}\label{new volatility}
\sigma_{\min} \left(  \alpha + \left({c} \otimes I_K\right)^\top H^\lambda(\tau)\right) 
&\geq \sigma_{\min}(\alpha) -  \|(c\otimes I_K)^\top H^\lambda (\tau) \|_{\mathrm{op}}\notag\\
&\geq \sigma_{\min}(\alpha) \left(1 - (1- e^{-C_0 T})\right) = \sigma_{\min}(\alpha) e^{-C_0 T}. 
\end{align}
Define $\tau_0 = \lambda^{1/2} /C_0$. Then for $\tau\in[0,\tau_0]$, the representation~\eqref{eq: H} for $H^\lambda$ yields 
\begin{align*}
\|(c\otimes I_K)^\top H^\lambda (\tau) \| &\leq \sqrt{K+D} \|(c\otimes I_K)^\top H^\lambda (\tau) \|_{\mathrm{op}} \\
&\leq \sqrt{K+D} \|\alpha\| \int_0^{\tau}\|{c}\|\|\bar\Lambda\|^{1/2} \|\bar\Lambda^{-1}\|^{1/2}\|\Gamma\| \|\xi\|e^{\|{c}\|\|\bar\Lambda\|^{1/2} \|\bar\Lambda^{-1}\|^{1/2}\|\Gamma\| \|\xi\| r} dr\\
&\leq \sqrt{K+D} \|\alpha\| \left(e^{C_0 \tau} -1\right)\\
&\leq \sqrt{K+D} \|\alpha\| \left(e^{C_0 \tau_0} -1\right)\\
& = \sqrt{K+D} \|\alpha\| \left(e^{\lambda^{1/2}} -1\right) = O(\lambda^{1/2}). 
\end{align*}
For sufficiently small $\lambda$, we thus have $\tau_0 < \tau_1$. 

We now derive an upper bound of $\|(c\otimes I_K)^\top H^\lambda (\tau)\|$ on $[\tau_0, \tau_1]$ that will lead to the desired contradiction to~\eqref{eq:tau1}. To this end, we first develop some upper and lower bounds for  $\Psi_{F^\lambda}(r;\tau)$ and $F^{\lambda}$. By the identity~\eqref{second derivative} and the initial condition $F^\lambda(0) = 0$, 
\begin{align*}
&\left(I_{N-1}\otimes\bar\Lambda^{-1/2}\right)\left(F^\lambda (\tau) +{F^\lambda}^\top(\tau)\right)\left(I_{N-1}\otimes\bar\Lambda^{-1/2}\right)
\\&= \int_0^\tau \Psi_{F^\lambda} ^\top (r;\tau) \left((\Gamma+\Gamma^\top)\otimes ( \alpha + \left({c} \otimes I_K\right)^\top H^\lambda(r)) ( \alpha + \left({c} \otimes I_K\right)^\top H^\lambda(r))^\top\right) \Psi_{F^\lambda} (r;\tau) dr \\
&\quad +2 \int_0^\tau \Psi_{F^\lambda} ^\top (r;\tau) \left(I_{N-1}\otimes\bar\Lambda^{-1/2}\right) F^{\lambda}(r) \left(I_{N-1}\otimes\bar\Lambda^{-1}\right) {F^\lambda}^\top (r) \left(I_{N-1}\otimes\bar\Lambda^{-1/2}\right) \Psi_{F^\lambda} (r;\tau) dr.
\end{align*}
Together with~\eqref{Psi min} and~\eqref{new volatility}, it follows that
\begin{align*}
&\sigma_{\min}\left(\left(I_{N-1}\otimes\bar\Lambda^{-1/2}\right)\left(F^\lambda (\tau) +{F^\lambda}^\top(\tau)\right)\left(I_{N-1}\otimes\bar\Lambda^{-1/2}\right)\right)\\
&\geq  \int_0^\tau \sigma_{\min}(\Gamma + \Gamma^\top) \sigma_{\min}(\bar\Lambda^{-1}) \sigma_{\min}^2( \alpha + \left({c} \otimes I_K\right)^\top H^\lambda(r)) \sigma_{\min}^2(\Psi_{F^\lambda}(r;\tau)  ) dr \\
&\geq \int_0^\tau \sigma_{\min}(\Gamma + \Gamma^\top) \sigma_{\min}(\bar\Lambda^{-1}) \sigma_{\min}^2( \alpha) e^{-2C_0 T} e^{-\frac{2F_{\max}\|\bar\Lambda^{-1}\|_{\mathrm{op}} }{\lambda^{1/2}}(\tau-r)} dr\\
&=  \lambda^{1/2}\frac{2\sigma_{\min}(\Gamma + \Gamma^\top) \sigma_{\min}(\bar\Lambda^{-1}) \sigma_{\min}^2( \alpha)}{ e^{2C_0 T}F_{\max}\|\bar\Lambda^{-1}\|_{\mathrm{op}}} \left(1 - e^{-\frac{2F_{\max}\|\bar\Lambda^{-1}\|_{\mathrm{op}} }{\lambda^{1/2}}\tau} \right)\\
&\geq  \lambda^{1/2} \frac{2\sigma_{\min}(\Gamma + \Gamma^\top) \sigma_{\min}(\bar\Lambda^{-1}) \sigma_{\min}^2( \alpha)}{ e^{2C_0 T}F_{\max}\|\bar\Lambda^{-1}\|_{\mathrm{op}}} \left(1 - e^{-\frac{2F_{\max}\|\bar\Lambda^{-1}\|_{\mathrm{op}} }{\lambda^{1/2}}\tau_0} \right)
\\& = \lambda^{1/2}  \frac{2\sigma_{\min}(\Gamma + \Gamma^\top) \sigma_{\min}(\bar\Lambda^{-1}) \sigma_{\min}^2( \alpha)}{ e^{2C_0 T}F_{\max}\|\bar\Lambda^{-1}\|_{\mathrm{op}}} \left(1 - e^{-\frac{2F_{\max}\|\bar\Lambda^{-1}\|_{\mathrm{op}} }{C_0}} \right): = 2\lambda^{1/2} F_{\min} =O(\lambda^{1/2}).
\end{align*}
Again by Lemma~\ref{matrix exponential bounds}, we can estimate the largest singular value of $\Psi_{F^\lambda}(r;\tau)$ for every $\tau_0\leq r \leq \tau\leq \tau_1$ as follows:
\begin{align}\label{Psi max}
\|\Psi_{F^\lambda}(r;\tau)\|_{\mathrm{op}}  = \sigma_{\max} \left(\Psi_{F^\lambda}(r;\tau)\right) 
\leq e^{-\frac{ F_{\min}}{\lambda^{1/2}}(\tau-r)}.
\end{align}

Therefore, after plugging in~\eqref{eq: H} and~\eqref{estimation of frictional volatility}, we can estimate the Frobenius norm of $(c\otimes I_K)^\top H^\lambda (\tau_1)$ as 
\begin{align*}
\|(c\otimes I_K)^\top H^\lambda (\tau_1)\| 
&\leq  \sqrt{K+D } \int_0^{\tau_1}\|c\|\|\Gamma\| \| \bar\Lambda^{1/2}\|\| \bar\Lambda^{-1/2}\|\|\xi\| \|\Psi_{F^\lambda}(r;\tau_1)\|_{\mathrm{op}}  \|\alpha +c\otimes I_K)^\top H^\lambda(r) \|_{\mathrm{op}} dr 
\\&\leq \sqrt{K+D }\|c\|\|\alpha \| \|\Gamma\| \| \bar\Lambda^{1/2}\|\| \bar\Lambda^{-1/2}\|\|\xi\| e^{C_0 T}\int_0^{\tau_1} \|\Psi_{F^\lambda}(r;\tau_1)\|_{\mathrm{op}}   dr 
\\&\leq \sqrt{K+D }\|c\|\|\alpha \| \|\Gamma\| \| \bar\Lambda^{1/2}\|\| \bar\Lambda^{-1/2}\|\|\xi\| e^{C_0 T}\left(\int_0^{\tau_0} 1   dr + \int_{\tau_0}^{\tau_1}   e^{-\frac{ F_{\min}}{\lambda^{1/2}}(\tau_1-r)}  dr\right)
\\&\leq \sqrt{K+D }\|c\|\|\alpha \| \|\Gamma\| \| \bar\Lambda^{1/2}\|\| \bar\Lambda^{-1/2}\|\|\xi\| e^{C_0 T}\left(\frac{\lambda^{1/2}}{C_0}+ \frac{\lambda^{1/2}}{F_{\min}} e^{-\frac{ F_{\min}}{\lambda^{1/2}}(\tau_1-r)}  \right)
\\& = O(\lambda^{1/2}).
\end{align*}
For sufficiently small $\lambda$,  this contradicts~\eqref{eq:tau1} and therefore completes the proof of Step 2.\\

\emph{Step 3:}
From Step 2, we know that the estimate~\eqref{Psi max} holds for $\lambda^{1/2} /C_0 = \tau_0 \leq r \leq \tau \leq T$. This upper bound in turn implies
$$
\int_{\tau_0}^{\tau} \|\Psi_{F^\lambda}(r;\tau)\|_{\mathrm{op}}  dr \leq \int_{\tau_0}^\tau  e^{-\frac{F_{\min}}{\lambda^{1/2}}(\tau-r)} dr = \frac{\lambda^{1/2}}{F_{\min}} \left(1 - e^{-\frac{F_{\min}}{\lambda^{1/2}}(\tau-\tau_0)}\right) \leq \frac{\lambda^{1/2}}{F_{\min}}. 
$$
Together with the coarser upper bound $\|\Psi_{F^\lambda}(r;\tau)\|_{\mathrm{op}} \leq 1$ (for $0\leq r\leq \tau \leq T$), the desired estimate now follows:
$$
\int_{0}^{\tau} \|\Psi_{F^\lambda}(r;\tau)\|_{\mathrm{op}}  dr = \int_0^{\tau_0}  \|\Psi_{F^\lambda}(r;\tau)\|_{\mathrm{op}}  dr + \int_{\tau_0}^\tau  \|\Psi_{F^\lambda}(r;\tau)\|_{\mathrm{op}}  dr 
\leq \tau_0 + \frac{\lambda^{1/2}}{F_{\min}} = \lambda^{1/2} \left(\frac{1}{C_0} + \frac{1}{F_{\min}}\right).
$$

\emph{Step 4. }
The representation~\eqref{eq: H} for $H^\lambda$, the estimate~\eqref{estimation of frictional volatility} and the bound for the integral of $\Psi_{F^\lambda}(r;\tau)$ from Step 3, lead to the following upper bound for the operator norm of $H^\lambda$:
\begin{align*}
\|H^{\lambda} (\tau)\|_{\mathrm{op}} 
&\leq \int_0^{\tau}\|\Gamma\| \| \bar\Lambda^{1/2}\|\| \bar\Lambda^{-1/2}\|\|\xi\| \|\Psi_{F^\lambda}(r;\tau)\|_{\mathrm{op}}  \|\alpha +c\otimes I_K)^\top H^\lambda(r) \|_{\mathrm{op}} dr \\
&\leq \|\alpha \| \|\Gamma\| \| \bar\Lambda^{1/2}\|\| \bar\Lambda^{-1/2}\|\|\xi\| e^{C_0 T}\left( \int_{0}^{\tau} \|\Psi_{F^\lambda}(r;\tau)\|_{\mathrm{op}}   dr \right)
= O(\lambda^{1/2}). 
\end{align*}

\emph{Step 5:} With the estimate from Steps 1-4, we can now complete the proof of Lemma~\ref{bounds}. For the approximation of $F^\lambda$, insert the bounds for $H^\lambda$ from Step 4 into~\eqref{estimation of F}, obtaining
\begin{align*}
 \| F^\lambda(\tau)  - \lambda^{1/2} \widehat F\| _{\mathrm{op}} 
& \leq  \lambda^{1/2} \|\bar\Lambda\| \|\bar\Lambda^{-1}\| \|\widehat F\|  \|\Psi_{\widehat F}(0;\tau)\|_{\mathrm{op}}\|\Psi_{ F^\lambda}(0;\tau)\|_{\mathrm{op}}
\notag\\& \quad
+ 2e^{C_0 T}\|\bar\Lambda\| \|\bar\Lambda^{-1}\|  \|\Gamma\|\|\alpha\|   \int_0^\tau e^{-\frac{\widehat F_{\min}}{\lambda^{1/2}} (\tau-r)} \|c\| \|H^\lambda (r)  \|_{\mathrm{op}} dr 
\\&= \lambda^{1/2} \|\bar\Lambda\| \|\bar\Lambda^{-1}\| \|\widehat F\|  \|\Psi_{\widehat F}(0;\tau)\|_{\mathrm{op}}\|\Psi_{ F^\lambda}(0;\tau)\|_{\mathrm{op}} + O(\lambda).
\end{align*}
Now, recall that $\|\Psi_{ F^\lambda}(0;\tau)\|_{\mathrm{op}}\leq 1$; integrating~\eqref{estimation of F} in turn yields the desired approximation of $F^\lambda$: 
\begin{align*}
\int_0^T \| F^\lambda(\tau)  - \lambda^{1/2} \widehat F\| _{\mathrm{op}}  d\tau 
&=  \lambda^{1/2} \|\bar\Lambda\| \|\bar\Lambda^{-1}\| \|\widehat F\|  \int_0^T \|\Psi_{\widehat F}(0;\tau)\|_{\mathrm{op}} d\tau  + O(\lambda)
\\&\leq \lambda^{1/2} \|\bar\Lambda\| \|\bar\Lambda^{-1}\| \|\widehat F\|  \frac{\lambda^{1/2}}{\widehat F_{\min}}  + O(\lambda) = O(\lambda).
\end{align*}
To derive an analogous result for $H^\lambda$, define
$$
\widehat H := \left(\Gamma^{1/2}\otimes \bar\Lambda \left(\bar\Lambda\# \alpha \alpha^\top \right)^{-1} \alpha\right)\xi.
$$
Observe that $\widehat{H}$ is the solution of the linear algebraic equation $\widehat F (I_{N-1}\otimes \bar\Lambda^{-1}) \widehat H  = \Gamma\otimes \alpha$. Whence we can express the difference between $\lambda^{1/2} \widehat H$ and the solution $H^\lambda (\tau)$ of the linear Riccati equation~\eqref{eqs: Ricatti} as
\begin{align*}
&\left( H^\lambda  - \lambda^{1/2} \widehat H \right)' =  \left(H^\lambda\right)' 
= \left( \Gamma \otimes \left(\alpha + (c\otimes I_K) ^\top H^\lambda \right)\right)\xi - \frac{1}{\lambda}F^\lambda \left(I_{N-1}\otimes \bar\Lambda ^{-1}\right) H^\lambda 
\\&=  \left(\Gamma \otimes (c\otimes I_K) ^\top H^\lambda\right)\xi +   \frac{1}{\lambda}  \left(\lambda\widehat F\left(I_{N-1}\otimes \bar\Lambda ^{-1}\right) \widehat H  -   F^\lambda
\left(I_{N-1}\otimes \bar\Lambda ^{-1}\right) H^\lambda   \right) 
\\& = \left(\Gamma \otimes (c\otimes I_K) ^\top H^\lambda\right)\xi +   \frac{1}{\lambda} F^\lambda\left(I_{N-1}\otimes \bar\Lambda ^{-1}\right) \left( \lambda^{1/2} \widehat H - H^\lambda   \right)
+\frac{1}{\lambda^{1/2}} \left(\lambda^{1/2} \widehat F  - F^\lambda \right) \left(I_{N-1}\otimes \bar\Lambda ^{-1}\right) \widehat H. 
\end{align*}
Similarly as above, a matrix version of variation of constants now yields
\begin{align*}
H^\lambda (\tau) - \lambda^{1/2} \widehat H 
& =- \lambda^{1/2} \Psi_{F^\lambda}^\top (0;\tau)  \widehat H + \int_{0}^\tau    \Psi_{F^\lambda}^\top (r;\tau)  \left(\Gamma \otimes (c\otimes I_K) ^\top H^\lambda(r)\right)\xi  dr \\
& \quad +\frac{1}{\lambda^{1/2}} \int_{0}^\tau    \Psi_{F^\lambda}^\top (r;\tau) \left(\lambda^{1/2} \widehat F  - F^\lambda(r) \right) \left(I_{N-1}\otimes \bar\Lambda ^{-1}\right) \widehat H dr. 
\end{align*}
The first term is of order $O(\lambda^{1/2}) \left\|   \Psi_{F^\lambda} (0;\tau) \right\|_{\mathrm{op}}=O(\lambda)$. The estimates from Step 3 and 4, and a direct calculation in turn show that the second term is of order $O(\lambda)$ as well:
\begin{align*}
 \left\|\int_{0}^\tau    \Psi_{F^\lambda}^\top (r;\tau)  \left(\Gamma \otimes (c\otimes I_K) ^\top H^\lambda(r)\right)\xi  dr\right\|_{\mathrm{op}}
&\leq \|c\| \|\Gamma\| \|\xi\| \int_{0}^{\tau} \|H^{\lambda} (r)\|_{\mathrm{op}}  \|\Psi_{F^\lambda}(r;\tau)\|_{\mathrm{op}}  dr  
= O(\lambda).
\end{align*}
Finally, for the third term in the above estimate for $H^\lambda  - \lambda^{1/2} \widehat H$, we have
\begin{align*}
& \left\|\int_{0}^\tau    \Psi_{F^\lambda}^\top (r;\tau) \left(\lambda^{1/2} \widehat F  - F^\lambda(r) \right) \left(I_{N-1}\otimes \bar\Lambda ^{-1}\right) \widehat H dr\right\|_{\mathrm{op}}
\\&\leq \int_{0}^\tau  \left\|   \Psi_{F^\lambda} (r;\tau) \right\|_{\mathrm{op}} \|\bar\Lambda^{-1}\| \|\widehat H\| \| \lambda^{1/2} \widehat F  - F^\lambda(r)\|_{\mathrm{op}} dr
\\&\leq \int_{0}^\tau  \left\|   \Psi_{F^\lambda} (r;\tau) \right\|_{\mathrm{op}} \|\bar\Lambda^{-1}\| \|\widehat H\|\left( \lambda^{1/2} \|\bar\Lambda\| \|\bar\Lambda^{-1}\| \|\widehat F\|  \|\Psi_{\widehat F}(0;r)\|_{\mathrm{op}}\|\Psi_{ F^\lambda}(0;r)\|_{\mathrm{op}} + O(\lambda)\right)dr
\\& =O( \lambda^{1/2})\int_{0}^\tau  \left\|   \Psi_{F^\lambda} (r;\tau) \right\|_{\mathrm{op}}  \|\Psi_{\widehat F}(0;r)\|_{\mathrm{op}} \|\Psi_{ F^\lambda}(0;r)\|_{\mathrm{op}} dr + O(\lambda) \int_{0}^\tau  \left\|   \Psi_{F^\lambda} (r;\tau) \right\|_{\mathrm{op}} dr
\\& = O(\lambda^{1/2} )  \left\|   \Psi_{F^\lambda} (0;\tau ) \right\|_{\mathrm{op}}  \int_{0}^\tau  \|\Psi_{\widehat F}(0;r)\|_{\mathrm{op}} dr +O(\lambda^{3/2}) . 
\\&=  O(\lambda )  \left\|   \Psi_{F^\lambda} (0;\tau ) \right\|_{\mathrm{op}} +O(\lambda^{3/2}) . 
\end{align*}
Together with the estimate from Step 3,  it follows that
$$
\int_0^T \| H^\lambda(\tau)  - \lambda^{1/2} \widehat H \|_{\mathrm{op}} d\tau  \leq O(\lambda) + O(\lambda^{1/2})  \int_{0}^T  \left\|   \Psi_{F^\lambda} (0;\tau) \right\|_{\mathrm{op}} d\tau =O(\lambda).
$$
Therefore, the assertion follows after recalling that $M \xi = (c\otimes I_K)^\top \widehat H$ by definition.
\end{proof} 

With the above approximations of the Riccati system~\eqref{eqs: Ricatti} at hand, we can now carry out the rigorous convergence proof for the asymptotic expansions from Theorem~\ref{thm:asymp}.

\begin{proof}[Proof of Theorem~\ref{thm:asymp}]
From~\eqref{eq:volatility} in Theorem~\ref{thm:asymp}, we have $\sigma_t - \bar{\sigma}_t = (c\otimes I_K ) H^\lambda(T-t)$. Hence the  approximation~\eqref{eq: small cost delta sigma} of the volatility correction due to small transaction costs follows directly from~\eqref{approximation H}. 

Next, we turn to the trading rate $\dot\varphi$. To this end, we first need a further estimation.
Notice that 
(the arguments are dropped here to ease notation)
\begin{align*}
& \left\|\left(\alpha+ \left({c} \otimes I_K\right)^\top H^\lambda \right) -  \left(\alpha+ \left({c} \otimes I_K\right)^\top H^\lambda \right)\left(\alpha+ \left({c} \otimes I_K\right)^\top H^\lambda\right)^\top \left(\alpha\alpha^\top\right)^{-1}\alpha\right\|
\\&=\left\|\left(\alpha+ \left({c} \otimes I_K\right)^\top H^\lambda \right)I_D -  \left(\alpha+ \left({c} \otimes I_K\right)^\top H^\lambda \right)\left(\alpha+ \left({c} \otimes I_K\right)^\top H^\lambda\right)^\top \left(\alpha\alpha^\top\right)^{-1}\alpha\right\|
\\&=\left\|\left(\alpha+ \left({c} \otimes I_K\right)^\top H^\lambda\right) \left( I_D - \left(\alpha+ \left({c} \otimes I_K\right)^\top H^\lambda\right)^\top \left( \alpha\alpha^\top\right)^{-1}\alpha  \right)\right\|
\\&=\left\|\left({c} \otimes I_K\right)^\top H^\lambda \left( I_D - \left(\alpha+ \left({c} \otimes I_K\right)^\top H^\lambda \right)^\top \left( \alpha\alpha^\top\right)^{-1}\alpha  \right) - \alpha \left(H^\lambda\right)^\top (c\otimes I_K)\left(\alpha\alpha^\top\right)^{-1} \alpha\right\|
\\&\leq \|c\| \|H^\lambda\| \left(\left\|  I_D - \left(\alpha+ \left({c} \otimes I_K\right)^\top H^\lambda \right)^\top \left( \alpha\alpha^\top\right)^{-1}\alpha \right\| + \|\alpha\| \left\|\left(\alpha\alpha^\top\right)^{-1} \alpha\right\| \right) = O(\lambda^{1/2}). 
\end{align*}
Define
$$
E^\lambda (\tau) =H^\lambda(\tau) - F^\lambda(\tau)\left(I_{N-1}\otimes\left( \alpha\alpha^\top\right)^{-1}\alpha\right)\xi.
$$
By the representations~\eqref{eq: F} and~\eqref{eq: H} for $F^\lambda$ and $H^\lambda$, we have 
\begin{align*}
 \left\|E^\lambda(\tau)\right\|_{\mathrm{op}} 
\leq O(\lambda^{1/2})\int_0^\tau\|\bar\Lambda\| \|\bar\Lambda^{-1}\| \|\Phi_{F^\lambda}(r;\tau)\| dr  =O(\lambda). 
\end{align*}
Then,~\eqref{eq:Strategiesnocosts},~\eqref{eq:frictional position} and the definition of $\Phi$ from~\eqref{eq:matrix exponential} give
\begin{align*}
& d\left[\Phi^\top(t)\left(I_{N-1}\otimes \bar\Lambda^{1/2}\right)\left(\varphi_ t - \bar\varphi_t\right) \right]
\\& = \Phi^\top(t)\left(I_{N-1}\otimes \bar\Lambda^{1/2}\right)\left(\lambda^{-1}\left(I_{N-1}\otimes \bar\Lambda^{-1}\right) F(T-t) \left(\varphi_t - \bar{\varphi}_t\right) dt  + d\left(\varphi_ t - \bar\varphi_t\right) \right)
\\&=  -\lambda^{-1}\Phi^\top(t)\left(I_{N-1}\otimes \bar\Lambda^{-1/2}\right)E^\lambda(T-t)W_t dt +\Phi^\top(t) \left(I_{N-1} \otimes \bar\Lambda^{1/2}\left(\alpha\alpha^\top\right)^{-1} \alpha\right) \xi dW_t . 
\end{align*}
Now, we show that the deviation $\varphi - \bar\varphi$ is approximated by the following ($K(N-1)$-dimensional Ornstein-Uhlenbeck process:
\begin{align}\label{aprroximation delta varphi}
d\Delta_t &:= -\lambda^{-1/2} \left( \Gamma^{1/2} \otimes \bar\Lambda^{-1}\left(\bar\Lambda\#\alpha\alpha^\top \right)\right) \Delta_t dt + \left(I_{N-1} \otimes \bar\Lambda^{1/2}\left(\alpha\alpha^\top\right)^{-1} \alpha\right) \xi dW_t
\notag\\
 &=-\lambda^{-1/2}\left(I_{N-1}\otimes \bar\Lambda^{-1}\right) \widehat F \Delta_t dt + \left(I_{N-1} \otimes \bar\Lambda^{1/2}\left(\alpha\alpha^\top\right)^{-1} \alpha\right) \xi dW_t, \qquad \Delta_0 = 0.
\end{align}
First, again by our matrix-version of variation of constants, we can have the explicit solution of the SDE~\eqref{aprroximation delta varphi} can be written as
$$
\Delta_t = \int_0^t \left(I_{N-1}\otimes \bar\Lambda^{-1/2}\right)\Psi_{\widehat{F}^\top}^\top (r;t) \left(I_{N-1}\otimes  \bar\Lambda^{1/2}\left(\alpha\alpha^\top\right)^{-1} \alpha\right) \xi dW_r,
$$
where $\Psi_{\widehat{F}^\top}(r;t) = \Phi_{\widehat{F}^\top}(r)\Phi_{\widehat{F}^\top}^{-1}(t)$, and $\Phi_{\widehat{F}^\top}$ is the solution to the following matrix linear ODE:
$$
\Phi_{\widehat{F}^\top}'(\tau) = \frac{1}{\lambda^{1/2}}\left(I_{N-1}\otimes\bar\Lambda^{-1/2}\right) \widehat{F}^\top \left(I_{N-1}\otimes\bar\Lambda^{-1/2}\right)\Phi_{\widehat F} (\tau), \qquad \Phi_{\widehat F}(0)= I_{K(N-1)}. 
$$
The process $\Delta$ is a Gaussian with mean 0; moreover, all eigenvalues of its covariance matrix are of order $O(\lambda^{1/2})$. As a consequence, $\mathbb{E}\left[\|\Delta_t\|\right]  = O(\lambda^{1/4})$. 

To assess the accuracy of the asserted asymptotic approximation, consider the (rescaled) difference between $\varphi - \bar\varphi$ and $\Delta$:
\begin{align*}
 &d\left[\Phi^\top(t)\left(I_{N-1}\otimes \bar\Lambda^{1/2}\right)\left(\varphi_ t - \bar\varphi_t - \Delta_t\right)\right]
\\&=  \lambda^{-1}{\Phi^\top(t)}\left(I_{N-1}\otimes \bar\Lambda^{-1/2}\right)\left(\left(\lambda^{1/2}\widehat{F} - F^\lambda(T-t)\right)\Delta_t - E^\lambda(T-r)W_t \right)dt. 
\end{align*}
As the initial value of the difference vanishes by assumption, it follows that 
\begin{align*}
 \varphi_ t - \bar\varphi_t - \Delta_t 
= \lambda^{-1}\int_0^t  \Psi^\top(r;t) \left(I_{N-1}\otimes \bar\Lambda^{-1}\right)
\left(\left(\lambda^{1/2}\widehat{F} - F(T-r)\right)\Delta_r - E^{\lambda}(T-r)W_r \right) dr. 
\end{align*}
With similar argument on $\|\Psi\|_{\mathrm{op}}$ and $|| \lambda^{1/2}\widehat{F} - F^\lambda(T-r) ||_{\mathrm{op}}$ as in the approximation of $H^\lambda$, we obtain
\begin{align*}
\|\varphi - \bar\varphi - \Delta\|_{\mathbb{H}^p}
&\leq\left( \int_0^T \mathbb{E}\left[ \| \varphi_ t - \bar\varphi_t - \Delta_t \|_{\mathrm{op}}^{2p} \right] dt \right)^{1/2p} 
\\& \leq \lambda^{-1}\int_0^T \left( \int_0^t \| \Psi(r;t)\|_{\mathrm{op}}
\left\|\lambda^{1/2}\widehat{F} - F^\lambda(T-r)\right\|_{\mathrm{op}} O(\lambda^{1/4} )\; dr + O(\lambda^{3/2})  \right) dt 
\\&= O(\lambda^{1/4} )  \int_0^T \left\|   \Psi (0;t ) \right\|_{\mathrm{op}} dt +O(\lambda^{1/2}) = O(\lambda^{1/2} ).
\end{align*}
Now, recall $\dot\varphi$ from~\eqref{eq:frictional strategy}, which we can rewrite as
\begin{align*}
\dot\varphi_t 
&= - \lambda^{-1} \left(I_{N-1}\otimes \bar\Lambda^{-1}\right)\left[F^\lambda(T-t) \left(\varphi_t - \bar{\varphi}_t\right) + E^\lambda (T-t)W_t\right]
\\&=  -\lambda^{-1/2} \left(I_{N-1}\otimes \bar\Lambda^{-1}\right) \widehat{F} \Delta_t + O_{\mathbb{H}^p}(1) 
\\&=  -\lambda^{-1/2} \left( \Gamma^{1/2} \otimes \bar\Lambda^{-1}\left(\bar\Lambda\#\alpha\alpha^\top\right) \right)\Delta_t + O_{\mathbb{H}^p}(1).
\end{align*}
Setting $\dot{\bar{\varphi}} :=-\lambda^{-1/2} \left( \Gamma^{1/2} \otimes \bar\Lambda^{-1}\left(\bar\Lambda\#\alpha\alpha^\top\right) \right)\Delta$, we then have
\begin{align}\label{approx:trading volume}
d\dot{\bar{\varphi}}_t &= -\lambda^{-1/2} \left( \Gamma^{1/2} \otimes \bar\Lambda^{-1}\left(\bar\Lambda\#\alpha\alpha^\top\right) \right) d\Delta_t \notag\\
& = -\lambda^{-1/2} \left( \Gamma^{1/2} \otimes \bar\Lambda^{-1}\left(\bar\Lambda\#\alpha\alpha^\top\right) \right)\left(\dot{\bar{\varphi}}_t dt +  \left(I_{N-1} \otimes \bar\Lambda^{1/2}\left(\alpha\alpha^\top\right)^{-1} \alpha\right) \xi dW_t\right),
\end{align}
which established the desired approximation from Theorem~\ref{thm:asymp}. 

To derive the corresponding result for the equilibrium prices, recall from~\eqref{eq:frictional position}-\eqref{eq:Ybis} in Theorem~\ref{thm:Radnerfric} that the difference of frictional and frictionless price level is
\begin{align*}
S_t- \bar{S}_t 
&= \mathcal{Y}_t - {\lambda}\left(c\otimes \bar\Lambda\right)^\top \dot{\varphi}_t = \mathcal{Y}_t +\lambda^{1/2} \left(c^\top \Gamma^{1/2} \otimes \left(\bar\Lambda\#\alpha\alpha^\top\right) \right)\Delta_t + O_{\mathbb{H}^p}(\lambda).
\end{align*}
At the initial time $t=0$, $\Delta_0=0$, the definition of $\mathcal{Y}$ in~\eqref{eq:Ybis} and the estimates~\eqref{approximation H} from Lemma~\ref{bounds} give
\begin{align*}
&S_0- \bar{S}_0 \\
&=
-\bar{\gamma} \left( \int_0^{T}  \left({c} \otimes I_K\right)^\top H^\lambda(r) \alpha^\top + \alpha \left(H^\lambda (r) \right)^\top\left({c} \otimes I_K\right) +\left({c} \otimes I_K\right)^\top H^\lambda(r)\left( H^\lambda(r)\right)^\top  \left({c} \otimes I_K\right) dr  \right) s
\\&= -\bar{\gamma} \left( \int_0^{T}  \left({c} \otimes I_K\right)^\top H^\lambda (r) \alpha^\top + \alpha \left(H^\lambda(r)\right)^\top\left({c} \otimes I_K\right)  dr  \right) s +O_{\mathbb{H}^p}(\lambda)
\\&= -\bar{\gamma} \left( \int_0^{T} \left( \left({c} \otimes I_K\right)^\top H^\lambda (r)- \lambda^{1/2} M\xi \right)\alpha^\top + \alpha \left(H^\lambda(r)- \lambda^{1/2} M\xi\right)^\top\left({c} \otimes I_K\right)  dr \right) s \\ 
&\quad - \lambda^{1/2} \bar{\gamma} \left( M\xi \alpha^\top + \alpha \xi^\top M^\top\right) s T + O_{\mathbb{H}^p}(\lambda)
\\& =- \lambda^{1/2} \bar{\gamma} \left( M\xi \alpha^\top + \alpha \xi^\top M^\top\right) s T + O_{\mathbb{H}^p}(\lambda).
\end{align*}
A straightforward but tedious computation shows that the drift term of $\mathcal{Y}$ from~\eqref{eq:Ybis} can be written as 
\begin{align*}
\frac{d\mathcal{Y}_t}{dt} &=\bar{\gamma} \left(\left({c} \otimes I_K\right)^\top H^\lambda \alpha^\top + \alpha \left(H^\lambda  \right)^\top\left({c} \otimes I_K\right) +\left({c} \otimes I_K\right)^\top H^\lambda \left( H^\lambda\right)^\top  \left({c} \otimes I_K\right)(T-t) \right) s \\
&=\lambda^{1/2} \bar{\gamma} \left( M\xi \alpha^\top + \alpha \xi^\top M^\top\right) s + O_{\mathbb{H}^p}(\lambda).
\end{align*}
For the drift term of $\dot{\varphi}$ from~\eqref{approx:trading volume} we have
$$
 -\lambda^{-1/2} \left( \Gamma^{1/2} \otimes \bar\Lambda^{-1}\left(\bar\Lambda\#\alpha\alpha^\top\right) \right)\dot{\bar{\varphi}}+ O_{\mathbb{H}^p}(1).
$$
In summary, we therefore obtain the following approximation for the frictional equilibrium expected returns:
\begin{align*}
\mu_t 
&= \bar\mu_t +
 \lambda ^{1/2} \left(c^\top\Gamma^{1/2} \otimes \left(\bar\Lambda\#\alpha\alpha^\top \right) \right) \dot{\varphi}_t+ \lambda^{1/2}\bar{\gamma} \left(M \xi\alpha^\top  + \alpha\xi^\top M^\top  \right) s +O_{\mathbb{H}^p}(\lambda)\\
&=\bar\gamma \alpha \alpha^\top s + \lambda ^{1/2} \left(c^\top\Gamma^{1/2} \otimes \left(\bar\Lambda\#\alpha\alpha^\top \right) \right) \dot{\varphi}_t+ \lambda^{1/2}\bar{\gamma} \left(M \xi\alpha^\top  + \alpha\xi^\top M^\top  \right) s +O_{\mathbb{H}^p}(\lambda).
\end{align*}
This completes the proof.
\end{proof}

\bibliographystyle{abbrv}
\interlinepenalty=10000
\bibliography{references}

\begin{thebibliography}{10}

\bibitem{acharya.pedersen.05}
V.~V. Acharya and L.~H. Pedersen.
\newblock Asset pricing with liquidity risk.
\newblock {\em J. Fin. Econ.}, 77(2):375--410, 2005.

\bibitem{almgren.chriss.01}
R.~F. Almgren and N.~Chriss.
\newblock Optimal execution of portfolio transactions.
\newblock {\em J. Risk}, 3:5--40, 2001.

\bibitem{amihud2002illiquidity}
Y.~Amihud.
\newblock Illiquidity and stock returns: cross-section and time-series effects.
\newblock {\em J. Fin. Markets}, 5(1):31--56, 2002.

\bibitem{amihud.mendelson.86a}
Y.~Amihud and H.~Mendelson.
\newblock Asset pricing and the bid-ask spread.
\newblock {\em J. Fin. Econ.}, 17(2):223--249, 1986.

\bibitem{annkirchner.kruse.15}
S.~Ankirchner and T.~Kruse.
\newblock Optimal position targeting with stochastic linear--quadratic costs.
\newblock {\em Banach Center Publ.}, 104(1):9--24, 2015.

\bibitem{bank.al.18}
P.~Bank, I.~Ekren, and J.~Muhle-Karbe.
\newblock Liquidity in competitive dealer markets.
\newblock Preprint, 2018.

\bibitem{bank.al.17}
P.~Bank, H.~M. Soner, and M.~Vo{\ss}.
\newblock Hedging with temporary price impact.
\newblock {\em Math. Fin. Econ.}, 11(2):215--239, 2017.

\bibitem{bouchard.al.18}
B.~Bouchard, M.~Fukasawa, M.~Herdegen, and J.~Muhle-Karbe.
\newblock Equilibrium returns with transaction costs.
\newblock {\em Finance Stoch.}, 22(3):569--601, 2018.

\bibitem{brennan.subrahmanyam.96}
M.~J. Brennan and A.~Subrahmanyam.
\newblock Market microstructure and asset pricing: on the compensation for
  illiquidity in stock returns.
\newblock {\em J. Fin. Econ.}, 41(3):441--464, 1996.

\bibitem{chen2020incomplete}
Y.~Chen, M.~Dai, L.~Goncalves-Pinto, J.~Xu, and C.~Yan.
\newblock Incomplete information and the liquidity premium puzzle.
\newblock {\em Management Sci.}, 2020.

\bibitem{chicone2006ordinary}
C.~Chicone.
\newblock {\em Ordinary differential equations with applications}.
\newblock Springer, New York, 2006.

\bibitem{collin2020liquidity}
P.~Collin-Dufresne, K.~Daniel, and M.~Sa{\u{g}}lam.
\newblock Liquidity regimes and optimal dynamic asset allocation.
\newblock {\em J. Fin. Econ.}, 136(2):379--406, 2020.

\bibitem{constantinides.86}
G.~M. Constantinides.
\newblock Capital market equilibrium with transaction costs.
\newblock {\em J. Pol. Econ.}, 94(4):842, 1986.

\bibitem{dai2016portfolio}
M.~Dai, P.~Li, H.~Liu, and Y.~Wang.
\newblock Portfolio choice with market closure and implications for liquidity
  premia.
\newblock {\em Management Sci.}, 62(2):368--386, 2016.

\bibitem{davis.norman.90}
M.~H.~A. Davis and A.~R. Norman.
\newblock Portfolio selection with transaction costs.
\newblock {\em Math. Oper. Res.}, 15(4):676--713, 1990.

\bibitem{davis.al.93}
M.~H.~A. Davis, V.~G. Panas, and T.~Zariphopoulou.
\newblock European option pricing with transaction costs.
\newblock {\em SIAM J. Control Optim.}, 31(2):470--493, 1993.

\bibitem{bouchaud.al.12}
J.~De~Lataillade, C.~Deremble, M.~Potters, and J.-P. Bouchaud.
\newblock Optimal trading with linear costs.
\newblock {\em RISK}, 1(3):2047--1246, 2012.

\bibitem{garleanu.pedersen.13}
N.~Garleanu and L.~H. Pedersen.
\newblock Dynamic trading with predictable returns and transaction costs.
\newblock {\em J. Finance}, 68(6):2309--2340, 2013.

\bibitem{garleanu.pedersen.16}
N.~Garleanu and L.~H. Pedersen.
\newblock Dynamic portfolio choice with frictions.
\newblock {\em J. Econ. Theory}, 165:487--516, 2016.

\bibitem{gonon.al.19}
L.~Gonon, J.~Muhle-Karbe, and X.~Shi.
\newblock Asset pricing with general transaction costs: theory and numerics.
\newblock Forthcoming in \emph{Math. Finance}, 2020.

\bibitem{herdegen.al.20}
M.~Herdegen, J.~Muhle-Karbe, and D.~Possama{\"\i}.
\newblock Equilibrium asset pricing with transaction costs.
\newblock Forthcoming in \emph{Finance Stoch.}, 2020.

\bibitem{higham2008functions}
N.~J. Higham.
\newblock {\em Functions of matrices: theory and computation}.
\newblock SIAM, 2008.

\bibitem{horn2012matrix}
R.~A. Horn and C.~R. Johnson.
\newblock {\em Matrix analysis}.
\newblock Cambridge University Press, New York, 2012.

\bibitem{isaenko.20}
S.~Isaenko.
\newblock Slow-moving capital and stock returns.
\newblock {\em Quant. Finance}, 20(6):969--984, 2020 f.

\bibitem{kallsen.02}
J.~Kallsen.
\newblock Derivative pricing based on local utility maximization.
\newblock {\em Finance Stoch.}, 6(1):115--140, 2002.

\bibitem{kohlmann.tang.02}
M.~Kohlmann and S.~Tang.
\newblock Global adapted solution of one-dimensional backward stochastic
  {Riccati} equations, with application to the mean--variance hedging.
\newblock {\em Stochastic Process. Appl.}, 97(2):255--288, 2002.

\bibitem{kyle.85}
A.~S. Kyle.
\newblock Continuous auctions and insider trading.
\newblock {\em Econometrica}, 53(6):1315--1335, 1985.

\bibitem{lo.al.04}
A.~W. Lo, H.~Mamaysky, and J.~Wang.
\newblock Asset prices and trading volume under fixed transaction costs.
\newblock {\em J. Pol. Econ.}, 112(5):1054--1090, 2004.

\bibitem{lynch.tan.11}
A.~W. Lynch and S.~Tan.
\newblock Explaining the magnitude of liquidity premia: The roles of return
  predictability, wealth shocks, and state-dependent transaction costs.
\newblock {\em J. Finance}, 66(4):1329--1368, 2011.

\bibitem{martin.14}
R.~Martin.
\newblock Optimal trading under proportional transaction costs.
\newblock {\em RISK}, 27(8):54--59, 2014.

\bibitem{martin.schoeneborn.11}
R.~Martin and T.~Sch\"oneborn.
\newblock Mean reversion pays, but costs.
\newblock {\em RISK}, 24(2):84--89, 2011.

\bibitem{moreau.al.17}
L.~Moreau, J.~Muhle-Karbe, and H.~M. Soner.
\newblock Trading with small price impact.
\newblock {\em Math. Finance}, 27(2):350--400, 2017.

\bibitem{muhlekarbe.al.20}
J.~Muhle-Karbe, M.~Nutz, and X.~Tan.
\newblock Asset pricing with heterogeneous beliefs and illiquidity.
\newblock {\em Math. Finance}, 30(4):1392--1421, 2020.

\bibitem{nabeya1951absolute}
S.~Nabeya.
\newblock Absolute moments in 2-dimensional normal distribution.
\newblock {\em Ann. Inst. Stat.}, 3(1):1--1, 1951.

\bibitem{pastor.stambaugh.03}
L.~P{\'a}stor and R.~F. Stambaugh.
\newblock Liquidity risk and expected stock returns.
\newblock {\em J. Pol. Econ.}, 111(3):642--685, 2003.

\bibitem{radner.72}
R.~Radner.
\newblock Existence of equilibrium of plans, prices, and price expectations in
  a sequence of markets.
\newblock {\em Econometrica}, 40(2):289--303, 1972.

\bibitem{risken1991fokker}
H.~Risken.
\newblock {\em The Fokker-Planck equation: methods of solution and
  applications}.
\newblock Springer, Berlin, second edition, 1996.

\bibitem{sannikov.skrzypacz.16}
Y.~Sannikov and A.~Skrzypacz.
\newblock Dynamic trading: price inertia and front-running.
\newblock Preprint, available online at
  \url{https://web.stanford.edu/~skrz/Dynamic_Trading.pdf}, 2016.

\bibitem{thesis_shi}
X.~Shi.
\newblock {\em Equilibrium asset pricing with transaction costs}.
\newblock PhD thesis, Carnegie Mellon University, 2020.

\bibitem{steeb1997matrix}
W.-H. Steeb and T.~K. Shi.
\newblock {\em Matrix calculus and Kronecker product with applications and C++
  programs}.
\newblock World Scientific, Singapore, 1997.

\bibitem{tevzadze.08}
R.~Tevzadze.
\newblock Solvability of backward stochastic differential equations with
  quadratic growth.
\newblock {\em Stoch. Process. Appl.}, 118(3):503--515, 2008.

\bibitem{vanguard2018}
{Vanguard Investments Hong Kong Limited}.
\newblock Drawing systematic value from the public equity liquidity premium.
\newblock Technical Report, available online at
  \url{https://www.vanguard.com.hk/documents/drawing-systematic-value.pdf},
  2018.

\bibitem{vayanos.98}
D.~Vayanos.
\newblock Transaction costs and asset prices: a dynamic equilibrium model.
\newblock {\em Rev. Fin. Stud.}, 11(1):1--58, 1998.

\bibitem{vayanos.vila.99}
D.~Vayanos and J.-L. Vila.
\newblock Equilibrium interest rate and liquidity premium with transaction
  costs.
\newblock {\em Econ. Theory}, 13(3):509--539, 1999.

\bibitem{weston.18}
K.~Weston.
\newblock Existence of a {R}adner equilibrium in a model with transaction
  costs.
\newblock {\em Math. Fin. Econ.}, 12(4):517--539, 2018.

\end{thebibliography}

\end{document}